\documentclass{article}

\usepackage[a4paper,top=2.54cm,bottom=2.54cm,left=3.17cm,right=3.17cm,%
            includehead,includefoot]{geometry}

\usepackage{tikz}
\usepackage{amsmath,amssymb,amsfonts,amsthm}
\usepackage{graphicx}
\usepackage{subfigure}
\usepackage{float}
\usepackage{xcolor}
\usepackage[numbers,square,sort&compress]{natbib}
\usepackage{hyperref}
  \hypersetup{colorlinks,citecolor=blue,linkcolor=blue,breaklinks=true}
\usepackage{booktabs}
\usepackage{colortbl}
\usepackage{caption}
\usepackage{enumitem}
\usepackage{epstopdf}
\usepackage{algorithm}
\usepackage{algpseudocode}
\usepackage{array}
\usepackage{bbding}
\usepackage{fancyhdr} 
\usepackage{fancyvrb} 
\usepackage{longtable}
\usepackage{listings}
\usepackage{rotating,rotfloat} 
\usepackage{yhmath} 
\usepackage{quantikz}
\usepackage{braket}

\usepackage{fancyhdr}
\allowdisplaybreaks

\newtheorem{theorem}{Theorem}[section]
\newtheorem{lemma}{Lemma}[section]
\newtheorem{corollary}[theorem]{Corollary}
\newtheorem{proposition}[theorem]{Proposition}
\newtheorem{definition}{Definition}[section]
\newtheorem{example}{Example}
\newtheorem{remark}{Remark}

\newcommand{\bbm}{\begin{bmatrix}}
\newcommand{\ebm}{\end{bmatrix}}

\begin{document}

\title{Protected Logical Qudits in Kitaev Quantum Double Models via Stable Representations}

\author{Naihong Hu$^*$, Futao Wang}

\footnotetext{Corresponding author. This work is supported by the Fundamental and Interdisciplinary Disciplines Breakthrough Plan of the Ministry of Education of China (JYB2025XDXM112), the National Natural Science Foundation of China (Grant No.~12171155), and in part by the Science and Technology Commission of Shanghai Municipality (Grant No.~22DZ2229014).}

\maketitle

\begin{abstract}
Fault-tolerant quantum computation requires robust protection of encoded quantum information. In this work, we develop a representation-theoretic framework for constructing protected logical qudits in finite-group Kitaev quantum double models. By introducing $\varepsilon$-stable irreducible representations, we establish a necessary and sufficient existence criterion and derive ribbon--projector commutation relations that yield a $d$-dimensional protected logical subspace. We apply this construction to symmetric and alternating groups, obtaining logical qubits for $S_n$ ($n\ge 3$) and a logical qutrit for $A_4$. Moreover, the family $(\mathbb Z_2)^d\rtimes\mathbb Z_d$ realizes protected logical qudits of arbitrary dimension $d\ge 2$. Finally, for $D(A_4)$, we describe a scheme for universal logical qutrit computation using ribbon-based logical operations.
\end{abstract}
\noindent\textbf{Keywords:} Quantum double models; Finite group representations; Anyons; Logical qudits.
\section{Introduction}\label{sec:intro}

Reliable quantum computation requires protecting encoded quantum information against decoherence, control imperfections, and other sources of noise. Quantum error correction provides the mathematical framework for this task. In particular, the Knill--Laflamme conditions \cite{KL97} give necessary and sufficient conditions for a subspace to serve as a quantum error-correcting code, while standard constructions and general principles of quantum error correction are reviewed systematically in \cite{NL12}. Topological approaches to quantum error correction are especially attractive because logical information is encoded non-locally and can therefore be insensitive to suitable classes of local perturbations.

Kitaev's quantum double construction provides one of the fundamental frameworks for topological quantum error correction. For a finite group $G$, the corresponding quantum double model is an exactly solvable lattice model whose quasiparticle excitations are governed by the representation theory of the Drinfeld double $D(G)$ \cite{Kit03}. In the Abelian case, one recovers toric-code-type models, while non-Abelian groups give rise to a richer excitation structure and nontrivial fusion and braiding phenomena.

The quantum double framework has subsequently been generalized in several algebraic and physical directions. Boundary structures and condensation phenomena have been studied in quantum double models \cite{BK98,BSW11}. The finite-group construction has also been extended to semisimple Hopf algebras \cite{BMCA13}, groupoid-based formulations \cite{Cha13}, and weak Hopf algebra settings \cite{JTKC23}. In parallel, representation-theoretic and categorical methods have clarified the relationship between Hopf algebra structures, quantum states, and diagrammatic quantum computation \cite{CM22,CD11,Maj22}. Quantum double models are also closely related to lattice gauge theory \cite{Meu17}. These developments motivate a systematic study of how finite-group representation theory can be used not only to classify quasiparticle excitations, but also to construct logical quantum degrees of freedom and logical operations.

Non-Abelian quantum double models are particularly natural candidates for quantum information processing because of their richer anyonic excitation structures. Previous work has investigated quantum error correction in the $D(S_3)$ model \cite{WLP09}. More recently, universal logical gates based on braiding and fusion of non-Abelian anyons in a $D(S_3)$-type setting have been demonstrated on quantum hardware \cite{LLG26}. These developments motivate the search for general group-theoretic conditions under which finite-group quantum double models support protected logical qudits.

In this work, we develop a general representation-theoretic framework for constructing protected logical qudits in finite-group Kitaev quantum double models. For a nontrivial one-dimensional irreducible representation $\varepsilon$ of order $d$, we introduce the notion of an $\varepsilon$-stable irreducible representation $\rho$, satisfying
$
\rho\otimes\varepsilon\simeq\rho,
$
and establish a necessary and sufficient criterion for its existence. We then show that such a representation gives rise to a $d$-dimensional protected logical subspace, with the associated ribbon trace operator implementing the logical cyclic shift. Errors that change the local quasiparticle sector are detectable by quasiparticle-sector measurements.

We apply this construction to several families of finite groups. For $S_n$ with $n\ge 3$, it yields logical qubits, while among the alternating groups, the proposed mechanism singles out $A_4$, which yields a logical qutrit. More generally, the family $(\mathbb Z_2)^d\rtimes\mathbb Z_d$ realizes logical qudits of arbitrary dimension $d\ge 2$. Finally, we specialize to $D(A_4)$ and describe a scheme for universal logical qutrit computation.

The remainder of this paper is organized as follows. Section~\ref{sec:prereq} reviews the algebraic and lattice-theoretic framework of the Kitaev quantum double model, including quasiparticle projectors and ribbon operators. Section~\ref{sec:construction} develops the general logical qudit construction and its representation-theoretic existence criterion, applies the theory to symmetric and alternating groups and to an explicit family that realizes arbitrary logical dimensions, and finally specializes to universal logical qutrit computation in the $D(A_4)$ model.

\section{Preliminaries}\label{sec:prereq}

This section briefly reviews the theoretical framework of the Kitaev quantum double model associated with a finite group~$G$.

\subsection{Group Hopf algebra and its quantum double structure}
\medskip
\noindent
\textbf{(I) The group algebra $\mathbb{C}G$.}
The group algebra $\mathbb{C}G$ is the complex vector space with basis $\{h\}_{h\in G}$. Its Hopf algebra structure is defined on basis elements by $m(h\otimes k)=hk$, $\eta(\lambda)=\lambda e$, $\Delta(h)=h\otimes h$, $\varepsilon(h)=1$, and $S(h)=h^{-1}$, and extended linearly.

\medskip
\noindent
\textbf{(II) The dual Hopf algebra $\mathbb{C}(G)$.}
The dual Hopf algebra $\mathbb{C}(G)$ is the complex vector space spanned by the characteristic functions $\{\delta_g\}_{g\in G}$. Its multiplication is pointwise, $\delta_g\delta_h=\delta_{g,h}\delta_g$, with unit $\eta(\lambda)=\lambda\sum_{g\in G}\delta_g$. The comultiplication, counit, and antipode are given by $\Delta(\delta_g)=\sum_{h\in G}\delta_h\otimes\delta_{h^{-1}g}$, $\varepsilon(\delta_g)=\delta_{g,e}$, and $S(\delta_g)=\delta_{g^{-1}}$, respectively.

\medskip
\noindent
\textbf{(III) The quantum double $D(G)$.}
The quantum double of $G$, denoted by $D(G):=\mathbb{C}(G)\rtimes\mathbb{C}G$ \cite{Maj95}, is a quasitriangular Hopf algebra obtained as the smash product of $\mathbb{C}(G)$ and $\mathbb{C}G$. As a vector space, $D(G)\cong\mathbb{C}(G)\otimes\mathbb{C}G$. Its multiplication is determined by the crossed relation $(1\otimes h)(\delta_g\otimes1)=\delta_{hgh^{-1}}\otimes h$, which yields
$
(\delta_g\otimes h)(\delta_{g'}\otimes h')
=
\delta_{g,hg'h^{-1}}\delta_g\otimes hh'.
$
The comultiplication and antipode are given by $\Delta(\delta_g\otimes h)=\sum_{k\in G}(\delta_k\otimes h)\otimes(\delta_{k^{-1}g}\otimes h)$ and $S(\delta_g\otimes h)=\delta_{h^{-1}g^{-1}h}\otimes h^{-1}$, respectively.

The normalized integrals of $\mathbb{C}G$ and $\mathbb{C}(G)$ are $\Lambda=\frac{1}{|G|}\sum_{h\in G}h$ and $\Lambda^*=\delta_e$, respectively. The universal $\mathcal{R}$-matrix is $\mathcal{R}=\sum_{h\in G}(\delta_h\otimes1)\otimes(1\otimes h)$, which induces a braiding $\Psi_{V,W}:V\otimes W\to W\otimes V$ on $D(G)$-modules $V$ and $W$ via $\Psi_{V,W}(v\otimes w)=\tau\!\left(\mathcal{R}\cdot(v\otimes w)\right)$, where $\tau(x\otimes y)=y\otimes x$ denotes the flip map.

 \subsection{Kitaev quantum double model based on group \texorpdfstring{$G$}{G}}

Kitaev's quantum double model is defined on a lattice $\mathcal{L}$ embedded in a closed oriented surface $\Sigma$. The lattice $\mathcal{L}$ consists of vertices, edges, and faces (or plaquettes). Each edge $e$ is assigned an arbitrary but fixed orientation. Every edge connects two vertices, denoted by $v_1$ and $v_2$, which are said to be incident to $e$ and adjacent to each other. Moreover, each edge belongs to the boundary of exactly two faces. A representative lattice configuration is illustrated below.
\[
\begin{tikzpicture}[x=0.5pt,y=0.5pt,yscale=-1,xscale=1]

\draw [line width=1.5]    (205.2,49.9) -- (305.2,50.9) ;
\draw [shift={(246.9,50.32)}, rotate = 0.57] [fill={rgb, 255:red, 0; green, 0; blue, 0 }  ][line width=0.08]  [draw opacity=0] (11.61,-5.58) -- (0,0) -- (11.61,5.58) -- cycle    ;
\draw [line width=1.5]    (305.2,50.9) -- (432.2,49.9) ;
\draw [shift={(360.4,50.47)}, rotate = 359.55] [fill={rgb, 255:red, 0; green, 0; blue, 0 }  ][line width=0.08]  [draw opacity=0] (11.61,-5.58) -- (0,0) -- (11.61,5.58) -- cycle    ;
\draw [line width=1.5]    (205.2,49.9) -- (132.2,150.9) ;
\draw [shift={(173.56,93.67)}, rotate = 125.86] [fill={rgb, 255:red, 0; green, 0; blue, 0 }  ][line width=0.08]  [draw opacity=0] (11.61,-5.58) -- (0,0) -- (11.61,5.58) -- cycle    ;
\draw [line width=1.5]    (205.2,49.9) -- (181.2,162.9) ;
\draw [shift={(194.92,98.28)}, rotate = 101.99] [fill={rgb, 255:red, 0; green, 0; blue, 0 }  ][line width=0.08]  [draw opacity=0] (11.61,-5.58) -- (0,0) -- (11.61,5.58) -- cycle    ;
\draw [line width=1.5]    (181.2,162.9) -- (328.2,134.9) ;
\draw [shift={(246.55,150.45)}, rotate = 349.22] [fill={rgb, 255:red, 0; green, 0; blue, 0 }  ][line width=0.08]  [draw opacity=0] (11.61,-5.58) -- (0,0) -- (11.61,5.58) -- cycle    ;
\draw [line width=1.5]    (305.2,50.9) -- (328.2,134.9) ;
\draw [shift={(314.51,84.89)}, rotate = 74.69] [fill={rgb, 255:red, 0; green, 0; blue, 0 }  ][line width=0.08]  [draw opacity=0] (11.61,-5.58) -- (0,0) -- (11.61,5.58) -- cycle    ;
\draw [line width=1.5]    (432.2,49.9) -- (480.2,159.9) ;
\draw [shift={(452.88,97.29)}, rotate = 66.43] [fill={rgb, 255:red, 0; green, 0; blue, 0 }  ][line width=0.08]  [draw opacity=0] (11.61,-5.58) -- (0,0) -- (11.61,5.58) -- cycle    ;
\draw [line width=1.5]    (480.2,159.9) -- (328.2,134.9) ;
\draw [shift={(412.39,148.75)}, rotate = 189.34] [fill={rgb, 255:red, 0; green, 0; blue, 0 }  ][line width=0.08]  [draw opacity=0] (11.61,-5.58) -- (0,0) -- (11.61,5.58) -- cycle    ;
\draw [line width=1.5]    (132.2,150.9) -- (181.2,162.9) ;
\draw [shift={(148.64,154.93)}, rotate = 13.76] [fill={rgb, 255:red, 0; green, 0; blue, 0 }  ][line width=0.08]  [draw opacity=0] (11.61,-5.58) -- (0,0) -- (11.61,5.58) -- cycle    ;
\draw [line width=1.5]    (181.2,162.9) -- (249.2,212.1) ;
\draw [shift={(208.48,182.63)}, rotate = 35.89] [fill={rgb, 255:red, 0; green, 0; blue, 0 }  ][line width=0.08]  [draw opacity=0] (11.61,-5.58) -- (0,0) -- (11.61,5.58) -- cycle    ;
\draw [line width=1.5]    (249.2,212.1) -- (392.2,173.1) ;
\draw [shift={(312.69,194.78)}, rotate = 344.74] [fill={rgb, 255:red, 0; green, 0; blue, 0 }  ][line width=0.08]  [draw opacity=0] (11.61,-5.58) -- (0,0) -- (11.61,5.58) -- cycle    ;
\draw [line width=1.5]    (328.2,134.9) -- (392.2,173.1) ;
\draw [shift={(353.07,149.75)}, rotate = 30.83] [fill={rgb, 255:red, 0; green, 0; blue, 0 }  ][line width=0.08]  [draw opacity=0] (11.61,-5.58) -- (0,0) -- (11.61,5.58) -- cycle    ;
\draw [line width=1.5]    (392.2,173.1) -- (506.2,179.1) ;
\draw [shift={(440.91,175.66)}, rotate = 3.01] [fill={rgb, 255:red, 0; green, 0; blue, 0 }  ][line width=0.08]  [draw opacity=0] (11.61,-5.58) -- (0,0) -- (11.61,5.58) -- cycle    ;
\draw [line width=1.5]    (506.2,179.1) -- (480.2,159.9) ;
\draw [shift={(499.88,174.43)}, rotate = 216.44] [fill={rgb, 255:red, 0; green, 0; blue, 0 }  ][line width=0.08]  [draw opacity=0] (11.61,-5.58) -- (0,0) -- (11.61,5.58) -- cycle    ;
\draw [line width=1.5]    (506.2,179.1) -- (521.2,254.1) ;
\draw [shift={(512.07,208.46)}, rotate = 78.69] [fill={rgb, 255:red, 0; green, 0; blue, 0 }  ][line width=0.08]  [draw opacity=0] (11.61,-5.58) -- (0,0) -- (11.61,5.58) -- cycle    ;
\draw [line width=1.5]    (521.2,254.1) -- (462.2,333.1) ;
\draw [shift={(496.67,286.95)}, rotate = 126.75] [fill={rgb, 255:red, 0; green, 0; blue, 0 }  ][line width=0.08]  [draw opacity=0] (11.61,-5.58) -- (0,0) -- (11.61,5.58) -- cycle    ;
\draw [line width=1.5]    (392.2,173.1) -- (379.2,262.1) ;
\draw [shift={(386.9,209.39)}, rotate = 98.31] [fill={rgb, 255:red, 0; green, 0; blue, 0 }  ][line width=0.08]  [draw opacity=0] (11.61,-5.58) -- (0,0) -- (11.61,5.58) -- cycle    ;
\draw [line width=1.5]    (379.2,262.1) -- (462.2,333.1) ;
\draw [shift={(414.39,292.2)}, rotate = 40.54] [fill={rgb, 255:red, 0; green, 0; blue, 0 }  ][line width=0.08]  [draw opacity=0] (11.61,-5.58) -- (0,0) -- (11.61,5.58) -- cycle    ;
\draw [line width=1.5]    (249.2,212.1) -- (234.2,259.1) ;
\draw [shift={(244.22,227.69)}, rotate = 107.7] [fill={rgb, 255:red, 0; green, 0; blue, 0 }  ][line width=0.08]  [draw opacity=0] (11.61,-5.58) -- (0,0) -- (11.61,5.58) -- cycle    ;
\draw [line width=1.5]    (234.2,259.1) -- (379.2,262.1) ;
\draw [shift={(298.4,260.43)}, rotate = 1.19] [fill={rgb, 255:red, 0; green, 0; blue, 0 }  ][line width=0.08]  [draw opacity=0] (11.61,-5.58) -- (0,0) -- (11.61,5.58) -- cycle    ;
\draw [line width=1.5]    (234.2,259.1) -- (333.2,331.1) ;
\draw [shift={(276.99,290.22)}, rotate = 36.03] [fill={rgb, 255:red, 0; green, 0; blue, 0 }  ][line width=0.08]  [draw opacity=0] (11.61,-5.58) -- (0,0) -- (11.61,5.58) -- cycle    ;
\draw [line width=1.5]    (333.2,331.1) -- (462.2,333.1) ;
\draw [shift={(389.4,331.97)}, rotate = 0.89] [fill={rgb, 255:red, 0; green, 0; blue, 0 }  ][line width=0.08]  [draw opacity=0] (11.61,-5.58) -- (0,0) -- (11.61,5.58) -- cycle    ;
\draw [line width=1.5]    (132.2,150.9) -- (147.2,235.1) ;
\draw [shift={(138.24,184.83)}, rotate = 79.9] [fill={rgb, 255:red, 0; green, 0; blue, 0 }  ][line width=0.08]  [draw opacity=0] (11.61,-5.58) -- (0,0) -- (11.61,5.58) -- cycle    ;
\draw [line width=1.5]    (147.2,235.1) -- (234.2,259.1) ;
\draw [shift={(182.7,244.89)}, rotate = 15.42] [fill={rgb, 255:red, 0; green, 0; blue, 0 }  ][line width=0.08]  [draw opacity=0] (11.61,-5.58) -- (0,0) -- (11.61,5.58) -- cycle    ;
\draw [line width=1.5]    (234.2,259.1) -- (220.2,330.1) ;
\draw [shift={(228.81,286.46)}, rotate = 101.15] [fill={rgb, 255:red, 0; green, 0; blue, 0 }  ][line width=0.08]  [draw opacity=0] (11.61,-5.58) -- (0,0) -- (11.61,5.58) -- cycle    ;
\draw [line width=1.5]    (333.2,331.1) -- (220.2,330.1) ;
\draw [shift={(285,330.67)}, rotate = 180.51] [fill={rgb, 255:red, 0; green, 0; blue, 0 }  ][line width=0.08]  [draw opacity=0] (11.61,-5.58) -- (0,0) -- (11.61,5.58) -- cycle    ;
\draw [line width=1.5]    (147.2,235.1) -- (99.2,328.1) ;
\draw [shift={(127.01,274.22)}, rotate = 117.3] [fill={rgb, 255:red, 0; green, 0; blue, 0 }  ][line width=0.08]  [draw opacity=0] (11.61,-5.58) -- (0,0) -- (11.61,5.58) -- cycle    ;
\draw [line width=1.5]    (220.2,330.1) -- (99.2,328.1) ;
\draw [shift={(168,329.24)}, rotate = 180.95] [fill={rgb, 255:red, 0; green, 0; blue, 0 }  ][line width=0.08]  [draw opacity=0] (11.61,-5.58) -- (0,0) -- (11.61,5.58) -- cycle    ;

\end{tikzpicture}
\]

For each edge $e \in E$, we can define a Hilbert space $\mathcal{H}_e = \mathbb{C}[G] = \operatorname{span}\{|g\rangle : g \in G\}$, meaning that each edge carries a qudit whose computational basis is labelled by the elements of \(G\). The total Hilbert space for the lattice $\mathcal{L}$ is then defined as $\mathcal{H}_{\text{tot}} = \bigotimes_{e \in E} \mathcal{H}_e$.

A site $s$ is defined as an ordered pair $(v, p) \in V \times P$ where $v \subset p$ (i.e. vertex $v$ lies on the boundary of plaquette $p$). For each site $s$ and group element $g \in G$, we define two types of local operators: $A_g(s)$ and $B_h(s)$.

\begin{definition}[Local Operators]

A site $s=(v,p)\in V\times P$ with $v\in\partial p$ admits:

(I)  Operators $A_g(s) \in End(\mathcal{H}_{\mathrm{tot}})$: Acts on $\mathrm{star}(v)$ edges through orientation-dependent conjugation:
  \[
A_g(s) = \bigotimes_{e\in\mathrm{in}(v)} R_g^{(e)} \otimes \bigotimes_{e\in\mathrm{out}(v)} L_g^{(e)}
\]
  where $L_g|h\rangle = |gh\rangle$, $R_g|h\rangle = |hg^{-1}\rangle$
\[
\begin{tikzpicture}
 \fill [red](0,0) circle (2pt);
\draw [->,ultra thick](0,0) -- (2,0);
\draw [->,ultra thick](0,0) -- (0,2);
\draw [->,ultra thick](-2,0) -- (0,0);
\draw [->,ultra thick](0,-2) -- (0,0);
\node [red]at (0.2, -0.2) {$v$};
\node at (-1, 0.5) {$g^1$};
\node at (1, 0.5) {$g^3$};
\node at (0.5, 1) {$g^4$};
\node at (0.5, -1) {$g^2$};
\node at (-2.5, 0) {$A_g(s)$};
\end{tikzpicture}
\begin{tikzpicture}[xshift=4]
 \fill (0,0) circle (2pt);
\draw [->,ultra thick](0,0) -- (2,0);
\draw [->,ultra thick](0,0) -- (0,2);
\draw [->,ultra thick](-2,0) -- (0,0);
\draw [->,ultra thick](0,-2) -- (0,0);
\node [red]at (0.2, -0.2) {$v$};
\node at (-1, 0.5) {$g^1g^{-1}$};
\node at (1, 0.5) {$gg^3$};
\node at (0.5, 1) {$gg^4$};
\node at (0.5, -1) {$g^2g^{-1}$};
\node at (-2.5, 0) {$=$};
\end{tikzpicture}
\]
(II)  Operators $B_h(s) \in End(\mathcal{H}_{\mathrm{tot}})$:
 Starting from the vertex $v$,  multiply the group elements  along $\partial p$ in a clockwise direction:
  \[
B_h(s) = \delta_{\left(\prod_{e\in\partial p} g_e^{\epsilon(e)}, h\right)}
\]
  with $\epsilon(e)=+1$ ($-1$) if edge orientation matches (opposes) traversal

\begin{tikzpicture}
 \fill[red] (0,0) circle (2pt);
 \fill (0,2) circle (2pt);
 \fill (2,2) circle (2pt);
 \fill (2,0) circle (2pt);
\draw [->,ultra thick](0,0) -- (2,0);
\draw [->,ultra thick](0,0) -- (0,2);
\draw [->,ultra thick](2,0) -- (2,2);
\draw [->,ultra thick](0,2) -- (2,2);
\node [red]at (0.2, -0.2) {$v$};
\node at (-0.5, 1) {$g^1$};
\node at (2.5,1) {$g^3$};
\node at (1, -0.5) {$g^4$};
\node at (1, 2.5) {$g^2$};
\node at (-1.5, 1) {$B_h(s)$};
\node [blue] at (1, 1) {$p$};
\end{tikzpicture}
\begin{tikzpicture}[xshift=4]
 \fill[red] (0,0) circle (2pt);
 \fill (0,2) circle (2pt);
 \fill (2,2) circle (2pt);
 \fill (2,0) circle (2pt);
\draw [->,ultra thick](0,0) -- (2,0);
\draw [->,ultra thick](0,0) -- (0,2);
\draw [->,ultra thick](2,0) -- (2,2);
\draw [->,ultra thick](0,2) -- (2,2);
\node[red] at (0.2, -0.2) {$v$};
\node at (-0.5, 1) {$g^1$};
\node at (2.5,1) {$g^3$};
\node at (1, -0.5) {$g^4$};
\node at (1, 2.5) {$g^2$};
\node at (-2.5, 1) {$=\delta_{g^1g^2(g^3)^{-1}(g^4)^{-1},h}$};
\node [blue] at (1, 1) {$p$};
\end{tikzpicture}

\end{definition}
\begin{proposition}

The operators $A_g(s)$ and $B_h(s)$ satisfy the following relations:

1.
   $
   A_{g_1}A_{g_2} = A_{g_1g_2}, \quad A_e = \mathbb{I}
   $

2.
   $
   B_{h_1}B_{h_2} = \delta_{h_1,h_2}B_{h_1}
   $

3.
   $
   A_gB_h = B_{ghg^{-1}}A_g
  $

4.
$A_e(s)=\sum \limits_{h\in G}  B_h(s)=Id$
	
	$\forall {g_1},{g_2,h_1},{h_2}\in G$.
	
\end{proposition}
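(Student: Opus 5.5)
Every relation here is an identity between operators on $\mathcal{H}_{\mathrm{tot}}$. I will check each one on the computational basis $|\{g_e\}\rangle$. Only the edges in $\mathrm{star}(v)$ and $\partial p$ matter; all other tensor factors are untouched and can be ignored.

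\textbf{Relation 1.} On a single edge we have $L_{g_1}L_{g_2}|h\rangle=|g_1g_2h\rangle=L_{g_1g_2}|h\rangle$ and $R_{g_1}R_{g_2}|h\rangle=|hg_2^{-1}g_1^{-1}\rangle=|h(g_1g_2)^{-1}\rangle=R_{g_1g_2}|h\rangle$. So $L$ and $R$ are both left actions of $G$. Since $A_g(s)$ is a tensor product of such factors on distinct edges, the multiplicativity passes to the tensor product. For the unit, $L_e=R_e=\mathrm{id}$.

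\textbf{Relations 2 and 4.} The operators $B_h(s)$ are diagonal in the basis. Their eigenvalue is $\delta_{\Phi,h}$, where $\Phi=\prod_{e\in\partial p}g_e^{\epsilon(e)}$ is the holonomy read clockwise from $v$. Hence $B_{h_1}B_{h_2}=\delta_{h_1,h_2}B_{h_1}$. For fixed $\Phi$, exactly one $h$ has $\delta_{\Phi,h}=1$, so $\sum_h B_h=\mathrm{Id}$. Combined with $A_e=\mathbb{I}$ from Relation 1, this gives Relation 4.

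\textbf{Relation 3.} This is the main step. It suffices to show that $A_g(s)$ maps a basis state with holonomy $\Phi$ to one with holonomy $g\Phi g^{-1}$. Then
\[
B_{ghg^{-1}}A_g|\psi\rangle=\delta_{g\Phi g^{-1},ghg^{-1}}A_g|\psi\rangle=\delta_{\Phi,h}A_g|\psi\rangle=A_gB_h|\psi\rangle .
\]

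Only the two edges of $\partial p$ that meet $v$ are changed by $A_g(s)$: the first edge $e_1$ and the last edge $e_k$ of the clockwise traversal. I handle them by cases on orientation.

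\emph{First edge $e_1$.} If it points away from $v$, it has $\epsilon=+1$ and its label becomes $gg_1$. If it points into $v$, it has $\epsilon=-1$ and its label becomes $g_1g^{-1}$, so the factor $g_1^{-1}$ becomes $gg_1^{-1}$. In both cases the first factor of $\Phi$ is left-multiplied by $g$.

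\emph{Last edge $e_k$.} It returns to $v$. If it points into $v$, it has $\epsilon=+1$ and $g_k\mapsto g_kg^{-1}$. If it points away from $v$, it has $\epsilon=-1$ and $g_k^{-1}\mapsto (gg_k)^{-1}=g_k^{-1}g^{-1}$. In both cases the last factor is right-multiplied by $g^{-1}$.

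So $\Phi\mapsto g\Phi g^{-1}$, as required.

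The only point needing care is a degenerate plaquette in which a single edge is both first and last (a loop at $v$). There $g_1\mapsto gg_1g^{-1}$ directly, since both $L_g$ and $R_g$ act on it, and the same conclusion holds.
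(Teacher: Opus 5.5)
Your proof is correct. The paper gives no proof of this proposition; it is stated as standard background. The closest thing in the paper is the pictorial calculation in the proof of the subsequent $D(G)$-module theorem. That calculation takes one explicit square plaquette, with both boundary edges at $v$ oriented away from $v$, and checks that applying $h\triangleright$ replaces the holonomy $g^4g^5(g^6)^{-1}(g^3)^{-1}$ by $h\,g^4g^5(g^6)^{-1}(g^3)^{-1}h^{-1}$. That is Relation 3 in a single configuration.

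Your four-way case split on the orientations of the first and last edges of the boundary walk is the same mechanism carried out in general. It therefore covers arbitrary edge orientations and plaquette shapes. Your observation that $L_g$ and $R_g$ are both left actions of $G$ also supplies Relation 1, which the paper never checks.

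One small caveat: your claim that only the first and last edges of $\partial p$ are touched by $A_g(s)$ presumes the boundary walk visits $v$ only once. Suppose the walk passed through $v$ again at an interior step. The arriving edge there would pick up a right factor $g^{-1}$ and the leaving edge a left factor $g$. These are adjacent in the ordered product and cancel, so $\Phi\mapsto g\Phi g^{-1}$ still holds. It would be worth stating this explicitly alongside your loop-edge remark.
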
	
 The ground state subspace $\mathcal{H}_{\text{vac}} \subset \mathcal{H}_{\text{tot}}$ is:
\[
\mathcal{H}_{vac}=\{ |\psi \rangle \in \mathcal H_{tot} : A(v)|\psi \rangle=|\psi \rangle,B(p)|\psi \rangle=|\psi \rangle,\forall v\in V,p\in P \}
\]
where
\[
A(v) = \frac{1}{|G|}\sum_{g\in G}A_g(v), \quad B(p) = B_e(p)
\]

\begin{theorem} [D(G)-Module Structure]

The local operator actions:
\[
\triangleright: D(G) \to \mathrm{End}(\mathcal{H}_{\mathrm{tot}}),\quad (\delta_h\otimes g) \triangleright \mapsto B_h(s)A_g(s)
\]
define a  representation of Drinfeld double $D(G)$
\end{theorem}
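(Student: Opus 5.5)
The plan is to define the map on the basis $\{\delta_h\otimes g\}$ of $D(G)$ by $\delta_h\otimes g\mapsto B_h(s)A_g(s)$, extend it linearly, and check that it is a unital algebra homomorphism into $\mathrm{End}(\mathcal{H}_{\mathrm{tot}})$. Since the multiplication of $D(G)$ is fixed on basis elements by
$(\delta_h\otimes g)(\delta_{h'}\otimes g')=\delta_{h,gh'g^{-1}}\,\delta_h\otimes gg'$,
it suffices to prove two things. The first is the operator identity
\[
B_h(s)A_g(s)\,B_{h'}(s)A_{g'}(s)=\delta_{h,gh'g^{-1}}\,B_h(s)A_{gg'}(s).
\]
The second is that the unit $1=\sum_h\delta_h\otimes e$ maps to $\sum_h B_h(s)A_e(s)=\mathrm{Id}$. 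The unit statement is immediate from items 1 and 4 of the preceding Proposition.

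For the product identity, I would apply the Proposition's relations in order:
\begin{align*}
B_hA_gB_{h'}A_{g'}
&=B_h\,B_{gh'g^{-1}}\,A_gA_{g'} &&\text{(item 3)}\\
&=\delta_{h,gh'g^{-1}}\,B_h\,A_{gg'} &&\text{(items 2 and 1).}
\end{align*}
This is exactly the image of the product in $D(G)$, so the map is multiplicative on basis elements and therefore on all of $D(G)$ by bilinearity.

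The real content, and the step I expect to be the main obstacle, lies in the Proposition itself, especially item 3, $A_g(s)B_h(s)=B_{ghg^{-1}}(s)A_g(s)$. Since the Proposition precedes the theorem, I may take it as given. Still, I would sanity-check it against the stated conventions, because with $L_g|x\rangle=|gx\rangle$, $R_g|x\rangle=|xg^{-1}\rangle$ and a face product read from $v$, the sign conventions must match. Concretely, the two edges of $\partial p$ incident to $v$ are the first and last factors of the holonomy read from $v$. $A_g(s)$ left-multiplies the first factor by $g$ and right-multiplies the last by $g^{-1}$; for the opposite orientation the inverse is taken, which gives the same effect. Every other edge of $\partial p$ is untouched. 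So $A_g$ sends a configuration with holonomy $x$ to one with holonomy $gxg^{-1}$, which gives $A_gB_h=B_{ghg^{-1}}A_g$.

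A brief remark would handle the degenerate case where $\partial p$ passes through $v$ more than once. I would assume that does not happen, as is implicit in the lattice picture.
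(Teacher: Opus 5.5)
Your proof is correct and takes essentially the same route as the paper. The paper's proof is the direct lattice computation, on one fixed six-edge picture, showing that $A_h(s)$ conjugates the holonomy of $p$ read from $v$; this is the crossed relation $B_g(s)A_h(s)=A_h(s)B_{h^{-1}gh}(s)$. You additionally make explicit the reduction of multiplicativity and unitality to the Proposition's relations, which the paper leaves implicit, and you check the crossed relation for arbitrary edge orientations rather than a single configuration.
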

\begin{proof}
There is a direct calculation acting on the six relevant edge Hilbert spaces.
\[
\begin{tikzpicture}
 \fill [red](0,0) circle (2pt);
\draw [->,ultra thick](0,0) -- (2,0);
\draw [->,ultra thick](0,0) -- (0,2);
\draw [->,ultra thick](-2,0) -- (0,0);
\draw [->,ultra thick](0,-2) -- (0,0);

\node at (-1, 0.5) {$g^1$};
\node at (1, 0.5) {$g^3$};
\node at (0.5, 1) {$g^4$};
\node at (0.5, -1) {$g^2$};

 \fill[red] (0,0) circle (2pt);
 \fill (0,2) circle (2pt);
 \fill (2,2) circle (2pt);
 \fill (2,0) circle (2pt);

\draw [->,ultra thick](2,0) -- (2,2);
\draw [->,ultra thick](0,2) -- (2,2);
\node [red]at (0.2, -0.2) {$v$};
\node at (2.5,1) {$g^6$};
\node at (1, 2.5) {$g^5$};
\node [blue] at (1, 1) {$p$};

\end{tikzpicture}
\begin{tikzpicture}
 \fill [red](0,0) circle (2pt);
\draw [->,ultra thick](0,0) -- (2,0);
\draw [->,ultra thick](0,0) -- (0,2);
\draw [->,ultra thick](-2,0) -- (0,0);
\draw [->,ultra thick](0,-2) -- (0,0);

\node at (-1, 0.5) {$g^1$};
\node at (1, 0.5) {$g^3$};
\node at (0.5, 1) {$g^4$};
\node at (0.5, -1) {$g^2$};

 \fill[red] (0,0) circle (2pt);
 \fill (0,2) circle (2pt);
 \fill (2,2) circle (2pt);
 \fill (2,0) circle (2pt);
\draw [->,ultra thick](-5.5,0) -- (-2.5,0);
\draw [->,ultra thick](2,0) -- (2,2);
\draw [->,ultra thick](0,2) -- (2,2);
\node [red]at (0.2, -0.2) {$v$};
\node at (2.5,1) {$g^6$};
\node at (1, 2.5) {$g^5$};
\node [blue] at (1, 1) {$p$};
\node  at (-4, 0.5) {$\delta_{h^{-1}gh}\triangleright$};
\node at (-2.5, -1) {$\delta_{h^{-1}gh}(g^4g^5(g^6)^{-1}(g^3)^{-1})$};
\end{tikzpicture}
\]
\[
\begin{tikzpicture}
 \fill [red](0,0) circle (2pt);
\draw [->,ultra thick](0,0) -- (2,0);
\draw [->,ultra thick](0,0) -- (0,2);
\draw [->,ultra thick](-2,0) -- (0,0);
\draw [->,ultra thick](0,-2) -- (0,0);

\node at (-1, 0.5) {$g^1h^{-1}$};
\node at (1, 0.5) {$hg^3$};
\node at (0.5, 1) {$hg^4$};
\node at (0.5, -1) {$g^2h^{-1}$};

 \fill[red] (0,0) circle (2pt);
 \fill (0,2) circle (2pt);
 \fill (2,2) circle (2pt);
 \fill (2,0) circle (2pt);
\draw [->,ultra thick](0,4.5) -- (0,2.5);
\draw [->,ultra thick](2,0) -- (2,2);
\draw [->,ultra thick](0,2) -- (2,2);
\node [red]at (0.2, -0.2) {$v$};
\node at (2.5,1) {$g^6$};
\node at (1, 2.5) {$g^5$};
\node [blue] at (1, 1) {$p$};
\node  at (0.5, 3.5) {$h\triangleright$};
\end{tikzpicture}
\begin{tikzpicture}
 \fill [red](0,0) circle (2pt);
\draw [->,ultra thick](0,0) -- (2,0);
\draw [->,ultra thick](0,0) -- (0,2);
\draw [->,ultra thick](-2,0) -- (0,0);
\draw [->,ultra thick](0,-2) -- (0,0);

\node at (-1, 0.5) {$g^1h^{-1}$};
\node at (1, 0.5) {$hg^3$};
\node at (0.5, 1) {$hg^4$};
\node at (0.5, -1) {$g^2h^{-1}$};

 \fill[red] (0,0) circle (2pt);
 \fill (0,2) circle (2pt);
 \fill (2,2) circle (2pt);
 \fill (2,0) circle (2pt);
\draw [->,ultra thick](-5.5,0) -- (-2.5,0);
\draw [->,ultra thick](2,0) -- (2,2);
\draw [->,ultra thick](0,2) -- (2,2);
\draw [->,ultra thick](0,4.5) -- (0,2.5);
\node [red]at (0.2, -0.2) {$v$};
\node at (2.5,1) {$g^6$};
\node at (1, 2.5) {$g^5$};
\node [blue] at (1, 1) {$p$};
\node  at (-4, 0.5) {$\delta_{g}\triangleright$};
\node at (-2.5, -1) {$\delta_{g}(hg^4g^5(g^6)^{-1}(g^3)^{-1}h^{-1})$};
\node  at (0.5, 3.5) {$h\triangleright$};
\end{tikzpicture}
\]
\end{proof}

\begin{theorem}\cite{CDHP20}{ (Topological Ground State Degeneracy)}

Let $\Sigma$ be a closed oriented surface of genus $g$. The degeneracy of the quantum double model's ground state space is topologically protected and satisfies:
\[
\dim \mathcal{H}_{\mathrm{vac}} = \left| \mathrm{Hom}(\pi_1(\Sigma), G)/G \right|
\]
where
 $\mathrm{Hom}(\pi_1(\Sigma), G)$ denotes group homomorphisms from the fundamental group to $G$ .
 The quotient $/G$ identifies orbits under the adjoint action $h \triangleright \rho = h\rho h^{-1}$ for $h \in G$, where $\rho \in \mathrm{Hom}(\pi_1(\Sigma), G)$
\end{theorem}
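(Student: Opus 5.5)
The plan is to identify $\mathcal{H}_{\mathrm{vac}}$ with the space of flat $G$-connections on $\mathcal{L}$ modulo gauge transformations. Then we compute that quotient on a minimal lattice, and finally argue that the answer does not depend on the lattice.

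\textbf{Step 1: the plaquette constraints.} By the relations in the Proposition, each $B(p)=B_e(p)$ is a projector and each $A(v)$ is a projector. The projectors $A(v)$ commute with one another, with every $B(p)$, and, by relation 3 with $h=e$, with $B_e$. Since each $B(p)$ is diagonal in the basis $\bigotimes_e|g_e\rangle$, the image of $\prod_p B(p)$ is spanned by the basis states whose holonomy around every plaquette is trivial. Call these flat configurations; denote the set by $\mathcal{C}_{\mathrm{flat}}$.

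\textbf{Step 2: the vertex constraints.} The operators $A_g(v)$ generate an action of the gauge group $G^{V}$ on $\mathcal{H}_{\mathrm{tot}}$ that permutes basis states. This action preserves flatness, since it conjugates plaquette holonomies based at $v$ and leaves the others unchanged. The operator $\prod_v A(v)$ is the averaging projector of this action, so $\mathcal{H}_{\mathrm{vac}}$ has a basis of orbit sums and
\[
\dim\mathcal{H}_{\mathrm{vac}}=\bigl|\mathcal{C}_{\mathrm{flat}}/G^{V}\bigr|.
\]

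\textbf{Step 3: orbits as homomorphisms.} Fix a base vertex $v_0$ and a spanning tree $T$ of $\mathcal{L}$. Using the gauge freedom at the vertices $v\neq v_0$, every flat configuration can be brought uniquely to one with $g_e=e$ on every edge of $T$. The remaining edge labels are indexed by generators of $\pi_1$ of the $1$-skeleton, and flatness imposes exactly the plaquette relations. Because $\Sigma$ is obtained from its $1$-skeleton by attaching the faces as $2$-cells, van Kampen's theorem identifies tree-gauge-fixed flat configurations with $\mathrm{Hom}(\pi_1(\Sigma,v_0),G)$. The residual gauge transformation at $v_0$ acts by simultaneous conjugation. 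Hence the orbit set is $\mathrm{Hom}(\pi_1(\Sigma),G)/G$, and the formula follows.

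\textbf{Main obstacle.} The expected difficulty is Step 3. One must carefully check that the gauge-fixing along $T$ is free and transitive on the non-base vertices, that the orientation and ordering conventions for $B_h$ reproduce the cellular presentation of $\pi_1$ correctly, and that holonomies based at different sites $(v,p)$ agree up to conjugation. This last point is why only the condition $B_e=1$ matters, independent of the choice of site. Topological protection, meaning independence of the lattice, then follows from the same computation, because the answer depends only on $\pi_1(\Sigma)$. For genus $g$ one can write $\pi_1(\Sigma)=\langle a_i,b_i\mid\prod_i[a_i,b_i]\rangle$ to make the count explicit.
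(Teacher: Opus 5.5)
The paper does not prove this statement; it quotes it from the cited reference. So there is no in-paper argument to compare with. What you propose is the standard lattice-gauge-theory derivation, and for the dimension formula it is correct:
\begin{itemize}
\item $\mathcal{H}_{\mathrm{vac}}$ is the space of $G^{V}$-invariants in the span of flat basis states. This span is a permutation module, so the dimension is $|\mathcal{C}_{\mathrm{flat}}/G^{V}|$.
\item The spanning-tree slice is exact. A gauge transformation preserving $g_e=e$ on $T$ is constant on the vertex set, so the only residual freedom is simultaneous conjugation.
\item The cellular presentation of $\pi_1(\Sigma,v_0)$ (free on the non-tree edges, modulo face boundaries transported to $v_0$ along $T$) matches flatness exactly. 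The base vertex and traversal direction in $B_e(p)$ are irrelevant because the condition $\mathrm{hol}=e$ is invariant under conjugation and inversion.
\end{itemize}
Two things should be made explicit. First, you need the faces to be open disks and the $1$-skeleton to be connected; otherwise the van Kampen step fails. Second, relation~3 of the Proposition only relates $A_g$ and $B_h$ at the same site. The commutation of $A(v)$ with $B(p)$ for a non-base vertex $v\in\partial p$ therefore rests on the cancellation you invoke in Step~2, not on the Proposition: the holonomy factor arriving at $v$ acquires $g^{-1}$ on the right, and the factor leaving $v$ acquires $g$ on the left.

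Where you fall short of the statement is the clause ``topologically protected''. You redefine it as lattice independence of $\dim\mathcal{H}_{\mathrm{vac}}$, which your count does give. The usual meaning, and the one relevant to the error-correction setting the result is taken from, is local indistinguishability: $P_{\mathrm{vac}}OP_{\mathrm{vac}}=c(O)\,P_{\mathrm{vac}}$ for every operator $O$ supported in the interior of a disk $D\subset\Sigma$. This is the property underlying the stability of the degeneracy under local perturbations. Knowing that the number of orbits is a topological invariant says nothing about how local operators act on $\mathcal{H}_{\mathrm{vac}}$.

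Your orbit-sum basis does make this accessible, but it requires an extra argument:
\begin{itemize}
\item Off-diagonal matrix elements of $O$ between different orbit sums vanish. Two flat configurations that agree off the support of $O$ have identical holonomies along generators of $\pi_1(\Sigma,v_0)$ chosen outside $D$, so they lie in the same orbit.
\item Diagonal elements are orbit independent. Every orbit contains a representative that is trivial on $D$, after which the orbit average reduces to an average over gauge transformations at the vertices of $D$ alone.
\end{itemize}
Either add this, or state explicitly that you are proving only the dimension formula and its independence of the lattice.
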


\begin{corollary}\cite{CDHP20} (Planar Uniqueness)

When $\Sigma$ is a contractible plane ($g=0$), the ground state is unique:
\[
| \mathrm{vac} \rangle = \prod_{v \in V} A(v) \left( \bigotimes_{e \in E} |e\rangle \right)
\]
where $e$ is the identity element of the group $G$, and the tensor product $\bigotimes$ runs over all edges.

\end{corollary}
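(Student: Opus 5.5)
The plan is to obtain the statement as the genus-zero specialization of the Topological Ground State Degeneracy theorem, and then to identify the unique vacuum explicitly.

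\textbf{Step 1: dimension.} For a contractible plane (or the sphere) we have $\pi_1(\Sigma)=1$. Hence $\mathrm{Hom}(\pi_1(\Sigma),G)$ consists only of the trivial homomorphism, and its quotient by conjugation is a single point. The cited theorem then gives $\dim\mathcal{H}_{\mathrm{vac}}=1$. For an honest open planar lattice one should first fix the boundary convention, taking either the sphere compactification or the infinite plane with all plaquettes flat. Under either convention the argument is the same, because the plaquette constraints force trivial holonomy around every contractible loop.

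\textbf{Step 2: the candidate lies in the vacuum space.} Set $|\Omega\rangle=\prod_v A(v)\bigotimes_e|e\rangle$.
\begin{itemize}
\item Vertex terms. By Proposition relation~1, $A_{g}(v)A(v)=A(v)$, so $A(v)$ is an idempotent. The $A(v)$ at different vertices commute, since they act on different endpoints of a shared edge by left and right multiplication. It follows that $A(v)|\Omega\rangle=|\Omega\rangle$ for every $v$.
\item Plaquette terms. The product state $\bigotimes_e|e\rangle$ has trivial holonomy around every face, so $B_e(p)$ fixes it. Relation~3 gives $A_gB_e=B_{geg^{-1}}A_g=B_eA_g$, so $B(p)$ commutes with each $A(v)$ at the same site. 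For $v\notin\partial p$ the two operators act on disjoint edges. For $v\in\partial p$ at a different corner of $p$, the holonomy read from the base corner changes by conjugation, so the condition ``holonomy $=e$'' is preserved. Therefore $B(p)|\Omega\rangle=|\Omega\rangle$.
\end{itemize}

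\textbf{Step 3: the candidate is nonzero.} Expanding the product of averages gives $|\Omega\rangle=|G|^{-|V|}\sum_{(g_v)}\bigotimes_{e=(u\to w)}|g_u g_w^{-1}\rangle$. This is a sum of basis vectors with positive coefficients, so no cancellation can occur; in particular the coefficient of $\bigotimes_e|e\rangle$ is strictly positive.

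Combining Steps 1--3, $|\Omega\rangle$ spans the one-dimensional space $\mathcal{H}_{\mathrm{vac}}$.

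The main obstacle is Step 1's reliance on the boundary convention. If one wishes to avoid the cited theorem, there is a direct route instead. On a simply connected surface, any flat configuration (holonomy $e$ on every face) is a gauge transform $g_e=g_ug_w^{-1}$ of the trivial configuration: fix a spanning tree and integrate along it, using flatness and the triviality of $\pi_1$. Thus flat basis states form a single orbit under the vertex gauge group. A state that is invariant under all $A(v)$ and supported on flat configurations must have constant amplitude on that orbit, which again gives uniqueness.
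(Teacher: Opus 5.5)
Your proposal is correct and takes essentially the paper's route. The paper gives no separate argument and obtains the statement as the $g=0$ specialization of the cited degeneracy theorem, which is exactly your Step~1. Your Steps~2--3 and the gauge-orbit argument go further than the paper: they check that the displayed vector is a nonzero element of $\mathcal{H}_{\mathrm{vac}}$, and they make uniqueness independent of the citation. One small slip that does not affect your conclusion: at a non-base corner of $p$ the holonomy is unchanged, and it is conjugated only at the base corner.
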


\begin{example}(Toric Code Case)

For $G = \mathbb{Z}_2$ on torus ($g=1$):
\[
\dim \mathcal{H}_{\mathrm{vac}} = \left| \mathrm{Hom}(\mathbb{Z}^2, \mathbb{Z}_2)/\mathbb{Z}_2 \right| = 4
\]
Matching the well-known $4$-fold degeneracy of toric code.
\end{example}
\subsection{Quasiparticle and projection operator}
In complex many-body interacting systems, quasiparticles emerge as effective low-energy excitations. Within the framework of the Kitaev quantum double model, these quasiparticles are mathematically characterized as irreducible representations of the Drinfeld double $D(G)$. When the system resides in an excited state, its excitations can be systematically described through these representations.

The irreducible representations of $D(G)$ are parameterized by pairs $(\mathcal{C}, \pi)$, where $\mathcal{C}$ denotes a conjugacy class of $G$ (interpreted as magnetic flux), and $\pi$ is an irreducible representation of the centralizer subgroup $C_G$ of a chosen representative $r_{\mathcal{C}} \in \mathcal{C}$ (interpreted as electric charge). The choice of $r_\mathcal{C}$ affects the centralizer subgroup $C_G$, but $C_G$ remains unique up to isomorphism.

In the case of the generalized toric code (when the group $G$ is $\mathbb{Z}_n$), the corresponding structures are as follows:

\noindent \textbf{Fluxons} $(\mathcal{C}, 1)$:
Acting on $\mathbb{C}\mathcal{C}$,
\[
\delta_g h \triangleright c = \delta_{g,\, h c h^{-1}} h c h^{-1}.
\]

\noindent \textbf{Chargeons} $(\{e\}, \pi)$:
Acting on the representation space $V_\pi$,
\[
\delta_g h \triangleright w = \delta_{g,e} \pi(h) w.
\]

The nontrivial anyonic braiding between excitations is encoded in the $\mathcal{R}$-matrix of $D(G)$:
\begin{enumerate}
    \item \emph{Flux-flux braiding:}
    \[
\Psi(f \otimes f') = \sum_{g \in G} (g \triangleright f') \otimes (\delta_g \triangleright f) = f f' f^{-1} \otimes f.
\]

    \item \emph{Flux-charge braiding:}
    \[
\Psi(f \otimes w) = \sum_{g \in G} (g \triangleright w) \otimes (\delta_g \triangleright f) = \pi(f) w \otimes f.
\]
\end{enumerate}

\begin{remark}
For Abelian groups ($G = \mathbb{Z}_n$), the braiding simplifies to phase factors, recovering the generalized toric code anyons.
\end{remark}
The irreducible representations $(\mathcal{C}, \pi)$ of $D(G)$ can be described as follows \cite{Maj04}:

\begin{enumerate}
    \item Choose a map $q: \mathcal{C} \to G$ satisfying:
    \[
q_c r_{\mathcal{C}} q_c^{-1} = c, \quad \forall c \in \mathcal{C},
\]
    where $r_{\mathcal{C}} \in \mathcal{C}$ is a fixed representative.

    \item Define a cocycle $\zeta: \mathcal{C} \times G \to C_G$ via:
    \[
\zeta_c(g) := q_{g c g^{-1}}^{-1} g q_c \in C_G(r_{\mathcal{C}}).
\]
    This satisfies the cocycle condition:
    \[
\zeta_c(gh) = \zeta_{h c h^{-1}}(g) \cdot \zeta_c(h), \quad \forall g,h \in G.
\]
\end{enumerate}

The $D(G)$-action on $\mathbb{C}\mathcal{C} \otimes V_\pi$ is given by:
\[
\delta_g h \triangleright (c \otimes w) = \delta_{g,\, h c h^{-1}} h c h^{-1} \otimes \pi\left(\zeta_c(h)\right) w.
\]
This defines an irreducible representation. Different choices of $q_c$ yield isomorphic representations. By choosing $q_{r_{\mathcal{C}}} = e$, we recover the chargeon representation rather than an equivalent conjugate of it.

To describe the projection operators for detecting the existence of quasiparticles, we first concentrate on the chargeon sector. For each irreducible representation $\pi$, such quasiparticles can be detected by measuring the observable
\[
O = \sum_{\pi} r_{\pi} P_{\pi}(s) = \sum_{\pi} r_{\pi} P_{\pi} \triangleright_{s},
\quad \text{where } \forall \pi, \, r_{\pi} \in \mathbb{R},
\]
$s$ is a site, and $P_{\pi}$ is the central projection element (central idempotent) in the group algebra $\mathbb{C}G$, given by
\[
P_{\pi} = \frac{\dim V_{\pi}}{|G|} \sum_{g\in G} \bigl(\operatorname{Tr}\pi(g^{-1})\bigr) g.
\]
The centrality, idempotency, and orthogonality follow from the orthogonality relations of finite group characters:
\[
\begin{aligned}
\sum_{h\in G} \operatorname{Tr}\pi(h^{-1}) \operatorname{Tr}\pi'(hg) &= \delta_{\pi,\pi'} \frac{|G|}{\dim V_{\pi}} \operatorname{Tr}\pi(g), \\
\sum_{\pi\in\operatorname{Irr}(G)} \operatorname{Tr}\pi(g^{-1}) \operatorname{Tr}\pi(h) &= \delta_{\mathcal{C}_g,\mathcal{C}_h} |C_G(g)|
\end{aligned}
\]
for all $h,g\in G$ and $\pi, \pi' \in\operatorname{Irr}(G)$, where $\mathcal{C}_g$ denotes the conjugacy class containing $g$.

Here, we define the projection element $\chi_\mathcal{C}$ in $\mathbb{C}(G)$ as the characteristic function of $\mathcal{C}$. For any site $s$,
\[
P_{\chi_\mathcal{C}}(s) = \chi_\mathcal{C} \triangleright_{s}.
\]
For more general cases, we have
\[
P_{\mathcal{C},\pi} = \sum_{c\in \mathcal{C}} \delta_c \otimes q_c P_{\pi} q_c^{-1} = \frac{\dim V_{\pi}}{|C_G|} \sum_{c\in \mathcal{C}} \sum_{n\in C_G} \operatorname{Tr}\pi(n^{-1}) \delta_c \otimes q_c n q_c^{-1},
\]
where $P_{\pi} \in \mathbb{C}C_G$ is the projection operator for the irreducible representation $\pi$ of the centralizer group $C_G$. This defines the site projection operator
\[
P_{\mathcal{C},\pi}(s) = P_{\mathcal{C},\pi} \triangleright_{s}.
\]
Thus, for each site, we have an observable
\[
\Gamma_s = \sum_{\mathcal{C},\pi} r_{\mathcal{C},\pi} P_{\mathcal{C},\pi}(s).
\]

The projection operators for charge and flux excitations are given by
\[
\begin{aligned}
P_{e,1} &= \Lambda^* \otimes \Lambda, \\
P_{e,\pi} &= \Lambda^* \otimes P_{\pi}, \\
P_{\mathcal{C},1} &= \sum_{c\in \mathcal{C}} \delta_c \otimes q_c \Lambda_{C_G} q_c^{-1},
\end{aligned}
\]
where, for a fixed $\mathcal{C}$,
\[
\sum_{\pi\in\operatorname{Irr}(C_G)} P_{\mathcal{C},\pi} = \sum_{c\in \mathcal{C}} \delta_c \otimes q_c \left( \sum_{\pi\in\operatorname{Irr}(C_G)} P_\pi \right) q_c^{-1} = \chi_\mathcal{C} \otimes 1.
\]

A state $|\psi\rangle$ is said to be occupied by a quasiparticle of type $(\mathcal{C}, \pi)$ at a site if it is stabilized under the projection operator $P_{\mathcal{C},\pi}$, i.e.,
\[
P_{\mathcal{C},\pi} |\psi\rangle=|\psi\rangle.
\]

\begin{lemma}\cite{Maj04}
\label{1}
The operators $\{P_{\mathcal{C},\pi}\}$ form a complete orthogonal set of central projectors:
   \[
P_{\mathcal{C},\pi} P_{\mathcal{C}',\pi'} = \delta_{\mathcal{C},\mathcal{C}'} \delta_{\pi,\pi'} P_{\mathcal{C},\pi}
\]
   \[
\sum_{\mathcal{C},\pi} P_{\mathcal{C},\pi} = 1
\]

\end{lemma}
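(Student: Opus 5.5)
We compute directly in $D(G)$ with the multiplication rule $(\delta_g\otimes h)(\delta_{g'}\otimes h')=\delta_{g,hg'h^{-1}}\delta_g\otimes hh'$, using the explicit form
\[
P_{\mathcal{C},\pi}=\sum_{c\in\mathcal{C}}\delta_c\otimes q_cP_\pi q_c^{-1},
\]
where $P_\pi\in\mathbb{C}C_G$ is the central idempotent of $\pi$ in the centralizer $C_G=C_G(r_{\mathcal{C}})$. The key structural observation is that $q_cnq_c^{-1}$ commutes with $c$ for every $n\in C_G$, because $q_cr_{\mathcal{C}}q_c^{-1}=c$. Hence every summand $\delta_c\otimes q_cnq_c^{-1}$ has its group part in $C_G(c)$.

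\textbf{Orthogonality and idempotency.} For a product $(\delta_c\otimes x)(\delta_{c'}\otimes y)$ with $x\in C_G(c)$, the multiplication rule forces $c=xc'x^{-1}$, i.e.\ $c'=x^{-1}cx=c$. So the product equals $\delta_{c,c'}\,\delta_c\otimes xy$. This gives
\[
P_{\mathcal{C},\pi}P_{\mathcal{C}',\pi'}=\sum_{c\in\mathcal{C}\cap\mathcal{C}'}\delta_c\otimes q_c P_\pi P_{\pi'}q_c^{-1}.
\]
If $\mathcal{C}\neq\mathcal{C}'$ the index set is empty, since conjugacy classes are disjoint. If $\mathcal{C}=\mathcal{C}'$, the character orthogonality relations recalled above give $P_\pi P_{\pi'}=\delta_{\pi,\pi'}P_\pi$ in $\mathbb{C}C_G$, which proves the first identity.

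\textbf{Completeness.} Summing over $\pi\in\operatorname{Irr}(C_G)$ gives $\sum_\pi P_\pi=1$ in $\mathbb{C}C_G$. By the computation already displayed in the text, $\sum_\pi P_{\mathcal{C},\pi}=\chi_{\mathcal{C}}\otimes 1$. Summing over all conjugacy classes then yields $\sum_g\delta_g\otimes 1=1_{D(G)}$.

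\textbf{Centrality.} This is the main obstacle, and we check it on generators.

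First take $\delta_g\otimes 1$. We have $(\delta_g\otimes1)(\delta_c\otimes x)=\delta_{g,c}\,\delta_c\otimes x$ and $(\delta_c\otimes x)(\delta_g\otimes 1)=\delta_{c,xgx^{-1}}\,\delta_c\otimes x$. Since $x\in C_G(c)$, the condition $c=xgx^{-1}$ is equivalent to $g=c$, so the two products agree.

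Next take $1\otimes h$. Here $(1\otimes h)P_{\mathcal{C},\pi}(1\otimes h^{-1})=\sum_c\delta_{hch^{-1}}\otimes hq_cP_\pi q_c^{-1}h^{-1}$. Write $hq_c=q_{hch^{-1}}\zeta_c(h)$ with $\zeta_c(h)\in C_G$. Then $hq_cP_\pi q_c^{-1}h^{-1}=q_{hch^{-1}}\,\zeta_c(h)P_\pi\zeta_c(h)^{-1}\,q_{hch^{-1}}^{-1}$. Because $P_\pi$ is central in $\mathbb{C}C_G$, the middle factor collapses to $P_\pi$. Reindexing by $c'=hch^{-1}$, which is a bijection of $\mathcal{C}$, returns $P_{\mathcal{C},\pi}$.

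The delicate points are the well-definedness of $\zeta_c(h)\in C_G$ and the independence of the choice of $q_c$. Both follow from the same centrality of $P_\pi$: if $q_c$ is replaced by $q_cm$ with $m\in C_G$, then $q_cmP_\pi m^{-1}q_c^{-1}=q_cP_\pi q_c^{-1}$.
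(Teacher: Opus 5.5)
Your proof is correct and complete. The paper does not prove this lemma; it quotes it from \cite{Maj04}. The only ingredient it displays for the $P_{\mathcal{C},\pi}$ themselves is $\sum_{\pi}P_{\mathcal{C},\pi}=\chi_{\mathcal{C}}\otimes 1$, which you reuse for completeness. Your direct computation supplies the rest. The observation $q_cC_Gq_c^{-1}=C_G(c)$ makes products of summands diagonal in $c$, which reduces orthogonality and idempotency to the same facts for $P_\pi$ in $\mathbb{C}C_G$. Invariance under conjugation by $1\otimes h$ then follows from $hq_c=q_{hch^{-1}}\zeta_c(h)$ together with centrality of $P_\pi$.

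A more structural alternative uses the isomorphism $\Phi$ of \cite{CM22} quoted in the paper. Summing its defining formula over the diagonal basis gives $\Phi(\mathrm{id}_{V_{\mathcal{C},\pi}})=\sum_{c\in\mathcal{C}}\delta_c\otimes q_cP_\pi q_c^{-1}=P_{\mathcal{C},\pi}$. Once $\Phi$ is known to be an algebra isomorphism (this follows from its bimodule property together with $\sum P_{\mathcal{C},\pi}=1$), the $P_{\mathcal{C},\pi}$ inherit all three properties from the block identities of $\bigoplus_{\mathcal{C},\pi}\operatorname{End}(V_{\mathcal{C},\pi})$. That route is shorter on paper but leans on the quoted isomorphism theorem, which is a heavier input. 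Yours needs only the multiplication rule and standard facts about central idempotents of $\mathbb{C}C_G$.

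There are two minor points. First, in your last paragraph, $\zeta_c(h)\in C_G$ does not come from centrality of $P_\pi$. It comes from the defining property of $q$: $\zeta_c(h)\,r_{\mathcal{C}}\,\zeta_c(h)^{-1}=q_{hch^{-1}}^{-1}(hch^{-1})\,q_{hch^{-1}}=r_{\mathcal{C}}$. Independence of the choice of $q$ is not needed at all, since the lemma concerns the $P_{\mathcal{C},\pi}$ for a fixed choice.

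Second, ``projector'' may be meant to include self-adjointness, for the $*$-structure $(\delta_g\otimes h)^*=\delta_{h^{-1}gh}\otimes h^{-1}$ under which $\triangleright_s$ is a $*$-representation. In that case one more line is needed. For $x\in C_G(c)$ one has $(\delta_c\otimes x)^*=\delta_c\otimes x^{-1}$, and $\overline{\operatorname{Tr}\pi(n^{-1})}=\operatorname{Tr}\pi(n)$. Reindexing $n\mapsto n^{-1}$ then gives $P_{\mathcal{C},\pi}^*=P_{\mathcal{C},\pi}$.
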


For the complex group algebra $\mathbb{C}G$, according to Wedderburn's theorem, there exists a algebra isomorphism $\mathbb{C}G \cong \bigoplus_{\pi} \text{End}(V_{\pi})$. The isomorphism maps an element $g$ to $\pi(g)_{ij}e_i \otimes f^j$, where $e_i$ is a basis of $V_{\pi}$, $f^j$ is the dual basis, and $e_i \otimes f^j$ corresponds to the elementary matrix $E_{ij}$ with 1 in the $i$-th row and $j$-th column when viewing $\text{End}(V_{\pi})$ as matrices. If $v = v^i e_i$, then $\pi(g)v = v^i \pi_{kj} e_k \langle f^j, e_i \rangle = e_k \pi_{ki} v^i$.

Conversely, we define
\[
\begin{aligned}
&\Phi_{\mathbb{C}G}: \bigoplus_{\pi} \text{End}(V_{\pi}) \to \mathbb{C}G, \\
&\Phi(e_i \otimes f^j) = \frac{\dim V_{\pi}}{|G|} \sum_{g \in G} \pi(g^{-1})_{ji} g,
\end{aligned}
\]
where $\Phi$ is a bimodule isomorphism. This can be interpreted as $\mathbb{C}G$ acting on itself or on $\text{End}(V_{\pi}) = V_{\pi} \otimes V_{\pi}^*$, with
\[
h \triangleright e_i = e_k \pi(h)_{ki} \quad \text{and} \quad f^j \triangleleft h = \pi(h)_{jk} f^k,
\]
and $\pi'(P_{\pi}) = \text{id} \, \delta_{\pi, \pi'}$. When $e_i \in V_{\pi}$ and $f^j \in V_{\pi}^*$, we have
\[
P_{\pi} \triangleright e_i = e_k \pi(P_{\pi})_{ki} = e_i, \quad f^j \triangleleft P_{\pi} = \pi(P_{\pi})_{jk} f^k = f^j,
\]
and zero otherwise. Through $\Phi$, the action of $P_{\pi}$ corresponds to left and right multiplication on $\mathbb{C}G$, giving
\[
P_{\pi} \mathbb{C}G = (\mathbb{C}G) P_{\pi} \cong \text{End}(V_{\pi}).
\]

Similarly, we let $D(G)$ act on $\text{End}(V_{\mathcal{C},\pi}) = V_{\mathcal{C},\pi} \otimes V^*_{\mathcal{C},\pi}$.

\begin{theorem} \cite{CM22}
Let $\{c \otimes e_i\}$ be a basis for the $D(G)$-representation $V_{\mathcal{C},\pi}$, and let $\{\delta_d \otimes f^j\}$ be the corresponding dual basis. Then the following map defines a bimodule isomorphism:
\[
\Phi: \bigoplus_{\mathcal{C}, \pi} \text{End}(V_{\mathcal{C}, \pi}) \to D(G),
\]
\[
\Phi(c \otimes e_i \otimes \delta_d \otimes f^j) = \delta_c \otimes q_c \Phi_{\mathbb{C}C_G}(e_i \otimes f^j) q_d^{-1} = \frac{\dim V_{\pi}}{|C_G|} \sum_{n \in C_G} \pi(n^{-1})_{ji} \delta_c \otimes q_c n q_d^{-1},
\]
where $C_G$ denotes the centralizer subgroup.

\end{theorem}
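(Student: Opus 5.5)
Since both sides have dimension $\sum_{\mathcal{C},\pi}(|\mathcal{C}|\dim V_\pi)^2=|G|^2=\dim D(G)$, it suffices to show that $\Phi$ is a bimodule map and that it is injective; surjectivity then follows by counting dimensions. The dimension identity is standard: $\sum_{\pi\in\operatorname{Irr}(C_G)}(\dim V_\pi)^2=|C_G|$, hence $\sum_{\mathcal{C}}|\mathcal{C}|^2|C_G|=\sum_{\mathcal{C}}|\mathcal{C}|\,|G|=|G|^2$.

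\emph{Bimodule property.} The left $D(G)$-action on $\operatorname{End}(V_{\mathcal{C},\pi})=V_{\mathcal{C},\pi}\otimes V^*_{\mathcal{C},\pi}$ acts on the first factor through the representation formula
\[
\delta_g h\triangleright(c\otimes w)=\delta_{g,hch^{-1}}\,hch^{-1}\otimes\pi(\zeta_c(h))w .
\]
For the generator $\delta_g\otimes 1$, left multiplication on $\Phi(c\otimes e_i\otimes\delta_d\otimes f^j)$ gives $\delta_{g,c}$ times the same element, which matches the action. For $1\otimes h$, the crossed relation gives
\[
h\,\delta_c\, q_c n q_d^{-1}=\delta_{hch^{-1}}\, h q_c n q_d^{-1}.
\]
Write $hq_c=q_{hch^{-1}}\zeta_c(h)$, with $\zeta_c(h)\in C_G$ by the definition of the cocycle. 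Substituting $n\mapsto\zeta_c(h)^{-1}n$ in the sum over $C_G$ and using $\pi(n^{-1}\zeta_c(h))_{ji}=\pi(n^{-1})_{jk}\pi(\zeta_c(h))_{ki}$ turns the coefficient into exactly $\Phi(hch^{-1}\otimes e_k\pi(\zeta_c(h))_{ki}\otimes\delta_d\otimes f^j)$. This is the image of the module action, as required. The right action is handled symmetrically, using the dual-module conventions $f^j\triangleleft h$ and $q_d^{-1}h=\zeta_{d'}(h)\,q_{d'}^{-1}$ with $d'=h^{-1}dh$. Equivalently, one can exploit the antipode: right multiplication by $x$ corresponds to the left action of $S(x)$ on the dual factor. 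This reduces the right-module check to the transpose of the left computation.

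\emph{Injectivity.} I would show that the images of distinct basis elements are linearly independent. The images with fixed $(c,d)$ lie in the span of $\delta_c\otimes q_cC_Gq_d^{-1}$, where $c\in\mathcal{C}$ and $d\in\mathcal{C}$. Different pairs $(c,d)$ give supports in disjoint subsets of the basis $\{\delta_x\otimes y\}$ of $D(G)$: the flux label $c$ is read off from $\delta_c$, and the coset $q_cC_Gq_d^{-1}$ determines $d$, because $q_c n q_d^{-1}$ conjugates $d$ to $c$. Different classes $\mathcal{C}$ give different $c$. Within a fixed block $(c,d)$, the map $e_i\otimes f^j\mapsto\frac{\dim V_\pi}{|C_G|}\sum_n\pi(n^{-1})_{ji}\,n$, taken over all $\pi\in\operatorname{Irr}(C_G)$, is precisely $\Phi_{\mathbb{C}C_G}$ transported by the bijection $n\mapsto q_cnq_d^{-1}$. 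It is injective by Wedderburn's theorem together with the Schur orthogonality relations for matrix coefficients,
\[
\sum_n\pi(n^{-1})_{ji}\,\pi'(n)_{kl}=\tfrac{|C_G|}{\dim V_\pi}\,\delta_{\pi\pi'}\delta_{il}\delta_{jk}.
\]
As a bonus, this relation also shows directly that $\Phi$ is multiplicative, in the sense that $\Phi(E_{ij})\Phi(E_{kl})=\delta_{jk}\Phi(E_{il})$.

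The main obstacle is the bookkeeping of the cocycle $\zeta$ in the right-action check: getting the $q_d^{-1}$ factor and the dual-basis conventions consistent. The index conventions (row versus column in $\pi(n^{-1})_{ji}$) must match those fixed earlier for $\Phi_{\mathbb{C}G}$. I would resolve this by first verifying the case $\mathcal{C}=\{e\}$, where $\Phi$ reduces to $\Lambda^*\otimes\Phi_{\mathbb{C}G}$, and then carrying the same conventions through the general computation.
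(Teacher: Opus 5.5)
The paper gives no proof of this statement; it only cites \cite{CM22}. Your argument is correct and is the standard one. For fixed $c\in\mathcal C$, the transporter sets $\{x\in G : xdx^{-1}=c\}=q_cC_Gq_d^{-1}$, with $d\in\mathcal C$, partition $G$. Hence $D(G)=\bigoplus_{\mathcal C}\bigoplus_{c,d\in\mathcal C}\operatorname{span}\{\delta_c\otimes x : x\in q_cC_Gq_d^{-1}\}$, which is the decomposition $D(G)\cong\bigoplus_{\mathcal C}M_{|\mathcal C|}(\mathbb{C}C_G)$, and on each block $\Phi$ is a transported copy of $\Phi_{\mathbb{C}C_G}$.

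This block argument already gives bijectivity, because each block has dimension $|C_G|=\sum_\pi(\dim V_\pi)^2$; your global dimension count is therefore redundant. The right-action bookkeeping you flag as the main obstacle does close up. Using $(\phi\triangleleft x)(v)=\phi(x\triangleright v)$, one gets $(\delta_d\otimes f^j)\triangleleft h=\sum_l\pi(\zeta_{d'}(h))_{jl}\,\delta_{d'}\otimes f^l$ with $d'=h^{-1}dh$. This is exactly what $q_d^{-1}h=\zeta_{d'}(h)q_{d'}^{-1}$ and the substitution $n\mapsto n\,\zeta_{d'}(h)^{-1}$ produce. (Strictly, right multiplication by $x$ is the left action of $S^{-1}(x)$ on $V^*$; your $S(x)$ is fine only because $S^2=\mathrm{id}$ on $D(G)$.)

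There is one slip: the Kronecker deltas in your Schur orthogonality relation are swapped. The correct form is $\sum_{n\in C_G}\pi(n^{-1})_{ji}\,\pi'(n)_{kl}=\frac{|C_G|}{\dim V_\pi}\delta_{\pi\pi'}\delta_{ik}\delta_{jl}$. To see this, take $\pi$ real orthogonal and $i=k\neq j=l$: the left side is the nonzero sum of squares $\sum_n\pi(n)_{ij}^2$, while your right side vanishes. Injectivity is unaffected, since any nondegenerate pairing suffices. However, your ``bonus'' multiplicativity $\Phi(E_{ij})\Phi(E_{kl})=\delta_{jk}\Phi(E_{il})$ follows only from the correct relation. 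With the deltas as you wrote them, the same computation would give $\delta_{ik}\Phi(E_{jl})$.
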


\subsection{Ribbon operator}
We define a ribbon as a strip-like path connecting two sites, composed of alternating edges and dual edges (plaquettes). If the endpoints of a ribbon lie at the same site, we call it a closed ribbon; if the endpoints lie at two distinct, disjoint sites, we call it an open ribbon. Note that there exist ribbons that are neither open nor closed---for example, a ribbon whose endpoints belong to different sites located on distinct faces but sharing the same vertex.

When considering only the vertex path or the face path individually, braiding (i.e., exchanging the positions of two quasiparticles) produces only a phase factor. Non-trivial results arise only when both paths are combined, which motivates the introduction of the ribbon operator $F_{\xi}^{h,g}$.

The figure below illustrates the action of the ribbon operator $F_{\xi}^{h,g}$. In the diagram, edges along the vertex path of the ribbon remain unchanged, while edges along the dashed line, rotated counterclockwise relative to their orientation, are acted upon by $h$ via conjugation along the traversed vertex path.

\begin{tikzpicture}[scale=0.8]
    \node (v0) at (0,0.5) {$v_0$};
    \node (v1) at (6,-1.5) {$v_1$};
    \node (p0) at (-1,-1) {$p_0$};
    \node (p1) at (5,-3) {$p_1$};
        \node (v0) at (0+8,0.5) {$v_0$};
    \node (v1) at (6+8,-1.5) {$v_1$};
    \node (p0) at (-1+8,-1) {$p_0$};
    \node (p1) at (5+8,-3.5) {$p_1$};
    \draw[densely dashed,red] (0,0) -- (-1,-1);
    \draw[densely dashed,red] (2,0) -- (1,-1);
    \draw[densely dashed,red] (4,0) -- (3,-1);
    \draw[densely dashed,red] (4,-2) -- (3,-1);
     \draw[densely dashed,red] (4,-2) -- (3,-3);
     \draw[densely dashed,red] (6,-2) -- (5,-3);
    \draw[densely dashed,red] (0+8,0) -- (-1+8,-1);
    \draw[densely dashed,red] (2+8,0) -- (1+8,-1);
    \draw[densely dashed,red] (4+8,0) -- (3+8,-1);
    \draw[densely dashed,red] (4+8,-2) -- (3+8,-1);
     \draw[densely dashed,red] (4+8,-2) -- (3+8,-3);
     \draw[densely dashed,red] (6+8,-2) -- (5+8,-3);
    \draw [->,ultra thick](0,0) -- (2,0);
    \draw [<-,ultra thick](0,0) -- (0,-2);
    \draw [<-,ultra thick](2,0) -- (2,-2);
    \draw [->,ultra thick](2,0) -- (4,0);
    \draw [->,ultra thick](2,-2) -- (4,-2);
    \draw [<-,ultra thick](4,0) -- (4,-2);
    \draw [->,ultra thick](4,-2) -- (6,-2);
    \draw [<-,ultra thick](4,-2) -- (4,-4);
    \draw [->,ultra thick](0+8,0) -- (2+8,0);
    \draw [<-,ultra thick](0+8,0) -- (0+8,-2);
    \draw [<-,ultra thick](2+8,0) -- (2+8,-2);
    \draw [->,ultra thick](2+8,0) -- (4+8,0);
    \draw [->,ultra thick](2+8,-2) -- (4+8,-2);
    \draw [<-,ultra thick](4+8,0) -- (4+8,-2);
    \draw [->,ultra thick](4+8,-2) -- (6+8,-2);
    \draw [<-,ultra thick](4+8,-2) -- (4+8,-4);

    \node at (-0.5, -2) {$g^1$};
    \node at (1.5, -2) {$g^2$};
    \node at (3, -2.5) {$g^3$};
    \node at (3.5, -3) {$g^4$};

    \node at (1, 0.5) {$h^1$};
    \node at (3, 0.5) {$h^2$};
    \node at (4.5, -1) {$h^3$};
    \node at (5, -1.5) {$h^4$};

    \node at (-2, -1) {$F_{\xi}^{h, g}$};
    \node at (6, -1) {=};
    \node at (-0.5+8-0.2, -1.5) {$g^1h^{-1}$};
    \node at (1.5+8, -2) {$g^2(h^1)^{-1}h^{-1}h^1$};
    \node at (3+8, -2.5) {$g^3(h^2)^{-1}(h^1)^{-1}h^{-1}h^1h^2$};
    \node at (3.5+8, -3) {$g^4h^3(h^2)^{-1}(h^1)^{-1}h^{-1}h^1h^2(h^3)^{-1}$};

    \node at (1+8, 0.5) {$h^1$};
    \node at (3+8, 0.5) {$h^2$};
    \node at (4.5+8, -1) {$h^3$};
    \node at (5+8, -1.5) {$h^4$};

    \node at (8, -4) {$\delta_g(h^1h^2(h^3)^{-1}h^4)$};

    \fill (0,0) circle (2pt);
    \fill (2,0) circle (2pt);
    \fill (4,0) circle (2pt);
    \fill (4,-2) circle (2pt);
	\fill (6,-2) circle (2pt);
    \fill (0+8,0) circle (2pt);
    \fill (2+8,0) circle (2pt);
    \fill (4+8,0) circle (2pt);
    \fill (4+8,-2) circle (2pt);
	\fill (6+8,-2) circle (2pt);
\end{tikzpicture}

By definition, we can compute the composition effects under the concatenation of ribbons of the same type connecting distinct sites, as well as the sequential application of distinct ribbons at the same site:
\[
F_{\xi' \circ \xi}^{h, g} = \sum_{f \in G} F_{\xi'}^{f^{-1} h f, f^{-1} g} \circ F_{\xi}^{h, f}; \quad F_{\xi}^{h, g} \circ F_{\xi'}^{h', g'} = \delta_{g, g'} F_{\xi}^{h h', g}.
\]
In the first formula, the upper indices involve a conjugation action on the first component, while the second component corresponds to the comultiplication $\Delta(\delta_g)$. From the second formula, we deduce that the adjoint of the ribbon operator is
\[
\left(F_{\xi}^{h, g}\right)^{\dagger} = F_{\xi}^{h^{-1}, g}.
\]

For a closed ribbon operator, we consider the action of its two components: the first component applies $\delta_{g^{-1}} \triangleright$ at the endpoints, whereas the second component sequentially applies conjugation actions along the vertex path associated with the ribbon. Through calculation, we observe that the edges connected to intermediate vertices along the ribbon remain invariant under the action of the ribbon operator, whereas only the edges at the endpoints exhibit non-trivial effects.

This leads us to analyze the commutation relations between ribbon operators, vertex operators, and face operators.

\begin{lemma}\cite{CM22} \label{5}
Let $\xi$ be a ribbon connecting sites $s_{0} = (v_{0}, p_{0})$ and $s_{1} = (v_{1}, p_{1})$. For all vertices $v \notin \{v_{0}, v_{1}\}$ and faces $p \notin \{p_{0}, p_{1}\}$, we have
\[
\left[F_{\xi}^{h, g}, f \triangleright_{v}\right] = 0, \quad \left[F_{\xi}^{h, g}, \delta_{e} \triangleright_{p}\right] = 0.
\]
For disjoint $s_{0}$ and $s_{1}$ (i.e., sharing neither vertices nor faces), the following relations hold:
\[
\begin{aligned}
& f \triangleright_{s_{0}} \circ F_{\xi}^{h, g} = F_{\xi}^{fhf^{-1}, fg} \circ f \triangleright_{s_{0}}, \quad \delta_{f} \triangleright_{s_{0}} \circ F_{\xi}^{h, g} = F_{\xi}^{h, g} \circ \delta_{h^{-1} f} \triangleright_{s_{0}}, \\
& f \triangleright_{s_{1}} \circ F_{\xi}^{h, g} = F_{\xi}^{h, gf^{-1}} \circ f \triangleright_{s_{1}}, \quad \delta_{f} \triangleright_{s_{1}} \circ F_{\xi}^{h, g} = F_{\xi}^{h, g} \circ \delta_{fg^{-1} hg} \triangleright_{s_{1}}.
\end{aligned}
\]
\end{lemma}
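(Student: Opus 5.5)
The plan is a direct computation on the edges touched by $F_{\xi}^{h,g}$, using the explicit description of the ribbon action given in the figure. For $F_\xi^{h,g}$, let $h^1,\dots,h^k$ be the labels on the direct (vertex-path) edges. Each dual (dashed) edge is multiplied by $h$ conjugated by the partial product of the direct labels up to that point. The factor $\delta_g$ is evaluated on the full ordered product of direct labels from $v_0$ to $v_1$.

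\textbf{Away from the endpoints.} Take a vertex $v\notin\{v_0,v_1\}$ on the ribbon. Then $f\triangleright_v$ multiplies the direct edges entering and leaving $v$ by $f^{-1}$ on one side and $f$ on the other. So the ordered product of direct labels, and hence the $\delta_g$ factor, is unchanged, since the inserted $f^{-1}f$ cancel. The conjugating partial products change by right multiplication by $f$ or $f^{-1}$. The dashed edges at $v$ are changed by $f\triangleright_v$ in a way that exactly compensates, so the two orders of application agree. This is checked by comparing both orders on the few edges incident to $v$.

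For a face $p\notin\{p_0,p_1\}$, the ribbon modifies the dashed edges of $\partial p$ by left and right multiplication by the same conjugated $h$. The holonomy around $p$ is therefore changed only by a conjugation, so $\delta_e\triangleright_p$ is preserved. Vertices and faces not met by the ribbon commute trivially.

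\textbf{At $s_0$.} The vertex action $f\triangleright_{s_0}$ left-multiplies the first direct edge by $f$. This replaces the path product $x$ by $fx$, turning the condition $\delta_g(x)$ into $\delta_{fg}$. It also conjugates the element applied to the first dashed edge, turning $h$ into $fhf^{-1}$. Together these give $f\triangleright_{s_0}F^{h,g}=F^{fhf^{-1},fg}f\triangleright_{s_0}$.

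For the face operator, $F^{h,g}$ multiplies the edge of $p_0$ crossed by the ribbon by $h^{\pm1}$. Read from $v_0$, this left-multiplies the holonomy of $p_0$ by $h$. Hence measuring $\delta_f$ after the ribbon equals measuring $\delta_{h^{-1}f}$ before it.

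\textbf{At $s_1$.} The vertex action $f\triangleright_{s_1}$ multiplies the last direct edge by $f^{-1}$. The path product becomes $xf^{-1}$, giving $\delta_{gf^{-1}}$, and no dashed edge at $v_1$ is affected.

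For $\delta_f\triangleright_{s_1}$, the dashed edge at $p_1$ is multiplied by $x^{-1}hx$ with $x=g$ enforced by $\delta_g$. The holonomy of $p_1$ read from $v_1$ is modified by this factor. Tracking the orientation convention (clockwise from $v_1$, opposite side of the ribbon) yields $\delta_f\circ F=F\circ\delta_{fg^{-1}hg}$.

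Disjointness of $s_0$ and $s_1$ guarantees that these endpoint effects do not interfere with each other.

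\textbf{Expected obstacle.} The main difficulty is bookkeeping of orientations and of the order of factors at the endpoint faces: whether the modification is left or right multiplication, and by $h$ or $h^{-1}$. I would fix a canonical local orientation, as in the figure, verify there, and reduce general orientations to it via $g\mapsto g^{-1}$ edge relabelling. I would also use the composition rule $F_{\xi'\circ\xi}^{h,g}=\sum_f F_{\xi'}^{f^{-1}hf,f^{-1}g}F_\xi^{h,f}$ to reduce to short ribbons. That reduction works because the endpoint relations for $s_0$ only involve the first segment and those for $s_1$ only the last.
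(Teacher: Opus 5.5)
The paper does not prove this lemma; it quotes it from \cite{CM22}, where (following Bombin and Martin-Delgado) it is checked by a local edge-by-edge computation of the kind you propose. So your route is the standard one, but as written it has a genuine gap at the only non-obvious formula, $\delta_f\triangleright_{s_1}\circ F_\xi^{h,g}=F_\xi^{h,g}\circ\delta_{fg^{-1}hg}\triangleright_{s_1}$. You assert it (``tracking the orientation convention \dots\ yields'') rather than derive it, and the mechanism you describe is not what happens. In the pictured ribbon the last crossed edge is \emph{not} multiplied by a conjugate of $h$ by the full path product $g$. It becomes $g^4y^{-1}h^{-1}y$ with $y=h^1h^2(h^3)^{-1}$, the partial product up to the vertex where $g^4$ attaches, so $h^4$ is not yet included. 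The full $g$ enters only because the clockwise holonomy of $p_1$ read from $v_1$ ends with $g^4h^4$:
\[
\cdots g^4\,y^{-1}h^{-1}y\,h^4=(\cdots g^4h^4)\,(yh^4)^{-1}h^{-1}(yh^4).
\]
So on the range of $\delta_g(yh^4)$ the holonomy is right-multiplied by $g^{-1}h^{-1}g$, and that is what gives $\delta_{fg^{-1}hg}$. Multiplying the holonomy by your factor $g^{-1}hg$ on either side would give $\delta_{fg^{-1}h^{-1}g}$ or $\delta_{g^{-1}h^{-1}gf}$ instead. The same ``pass through the direct edges'' step is also what actually makes your interior-face argument work, since the two crossed edges of an interior face are hit by \emph{different} conjugates of $h$.

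Your endpoint checks also rely on features peculiar to the pictured ribbon. It starts with a dual triangle, so the crossed edge of $p_0$ sits at $v_0$ and is hit by $h^{-1}$ itself. It ends with a direct triangle, so ``no dashed edge at $v_1$ is affected.'' A ribbon may instead start with a direct triangle. Then the crossed edge of $p_0$ is hit by $x_1^{-1}h^{\pm1}x_1$, and the factor $h$ in front of the $p_0$-holonomy appears only after passing through $x_1$. A ribbon may also end with a dual triangle. Then a crossed edge at $v_1$ is hit by $g^{-1}h^{-1}g$ and also by $f\triangleright_{s_1}$, and you need $f^{-1}(gf^{-1})^{-1}h^{-1}(gf^{-1})=g^{-1}h^{-1}gf^{-1}$. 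Relabelling $g\mapsto g^{-1}$ does not reduce these cases to the picture, so check them. For $f\triangleright_{s_0}$, check every dashed edge, not just the first, via $(fx)^{-1}(fhf^{-1})(fx)=x^{-1}hx$ for each partial product $x$. Vertices at the far end of a crossed edge also share edges with $F_\xi^{h,g}$; they commute with it because one operator multiplies on the left and the other on the right, with conjugators involving only direct-path labels.

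Finally, your composition-rule reduction cannot go down to single triangles: their two ends are not disjoint, and the endpoint relations fail there.
\begin{itemize}
\item $F^{h,g}_\tau=T^g_\tau$ is diagonal and independent of $h$, so it commutes with $\delta_f\triangleright_{s_0}$ rather than shifting it to $\delta_{h^{-1}f}$.
\item For $F^{h,g}_{\tau^*}=\delta_{g,e}L^h_{\tau^*}$ one gets $\delta_{g,e}L^{fhf^{-1}}_{\tau^*}\circ f\triangleright$, whereas $F^{fhf^{-1},fg}_{\tau^*}$ vanishes for $g=e\neq f$.
\end{itemize}
Instead, the first (last) piece must contain every triangle touching $v_0$ and $p_0$ (resp.\ $v_1$ and $p_1$). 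That piece must be checked by hand, and the rest commuted past the endpoint operator using the first half of the lemma.
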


It follows from the preceding lemma that every term in the Hamiltonian whose support does not contain $s_0$ or $s_1$ commutes with the ribbon operator. At the endpoints $s_0$ and $s_1$, however, the nontrivial commutation relations derived above imply the creation of a quasiparticle at $s_0$ together with its corresponding antiparticle at $s_1$. Thus, the ribbon operator acts as a quasiparticle--antiquasiparticle pair-creation operator.

We next introduce triangle operators to make the local structure of ribbon operators explicit. A ribbon can be decomposed into a sequence of triangles, with the corresponding triangle operators serving as the elementary building blocks of the ribbon operator.

\begin{definition} (Triangular Operators)
To decompose ribbon operators into elementary units, we define
direct-triangle operator: $T_{\tau}^g$, acting along a directed triangle $\tau$.
dual-triangle operator: $L_{\tau^*}^h$, acting on a dual triangle $\tau^*$.

\[
\begin{tikzpicture}[xshift=4]

    \draw[densely dashed,red] (0,0) -- (1,-1);
    \draw[densely dashed,red] (2,0) -- (1,-1);

    \draw [->,ultra thick](0,0) -- (2,0);


    \node at (-1, 0) {$T_{\tau}^g$};
    \node at (1, 0.5) {$h^1$};
    \node at (0, -0.5) {$s_0$};
		\node at (2, -0.5) {$s_1$};

    \fill (0,0) circle (2pt);
    \fill (2,0) circle (2pt);

\end{tikzpicture}
\begin{tikzpicture}[xshift=7]

    \draw[densely dashed,red] (0,0) -- (1,-1);
    \draw[densely dashed,red] (2,0) -- (1,-1);

    \draw [->,ultra thick](0,0) -- (2,0);


    \node at (1, 0.5) {$h^1$};
    \node at (0, -0.5) {$s_0$};
		\node at (2, -0.5) {$s_1$};

    \fill (0,0) circle (2pt);
    \fill (2,0) circle (2pt);

		\node at (-1.5, 0) {$=$};
    \node at (-0.8, 0) {$\delta_{g}(h^1)$};
    \node at (1, 0.5) {$h^1$};
    \node at (0, -0.5) {$s_0$};
		\node at (2, -0.5) {$s_1$};
\end{tikzpicture}
\]
\[
\begin{tikzpicture}

    \draw[densely dashed,red] (0,0) -- (1,-1);
    \draw[densely dashed,red] (0,0) -- (-1,-1);

    \draw[->,ultra thick,densely dashed,blue] (1,-1) -- (-1,-1);

    \draw [<-,ultra thick](0,0) -- (0,-2);

    \node at (-0.5, -2) {$g^1$};
    \node at (-1.5, -1) {$L_{\tau^*}^h$};

    \node at (0-1, -0.5) {$s_0$};
		\node at (2-1, -0.5) {$s_1$};

    \fill (0,0) circle (2pt);
    \fill (0,-2) circle (2pt);
\end{tikzpicture}
\begin{tikzpicture}[xshift=3]

    \draw[densely dashed,red] (0,0) -- (1,-1);
    \draw[densely dashed,red] (0,0) -- (-1,-1);

    \draw[->,ultra thick,densely dashed,blue] (1,-1) -- (-1,-1);

    \draw [<-,ultra thick](0,0) -- (0,-2);

    \node at (-0.5, -2) {$g^1h^{-1}$};
    \node at (-1.5, -1) {$=$};

    \node at (0-1, -0.5) {$s_0$};
		\node at (2-1, -0.5) {$s_1$};

    \fill (0,0) circle (2pt);
    \fill (0,-2) circle (2pt);

\end{tikzpicture}
\]
\end{definition}

Using the orientation of the original lattice, together with the induced orientation of the dual lattice, we can decompose a ribbon into a sequence of triangles and express the corresponding ribbon operator in terms of triangle operators.

\[
\begin{tikzpicture}
    \node (v0) at (0,0.5) {$v_0$};
    \node (v1) at (6,-1.5) {$v_1$};
    \node (p0) at (-1,-1) {$p_0$};
    \node (p1) at (5,-3) {$p_1$};

    \draw[densely dashed,red] (0,0) -- (-1,-1);
    \draw[densely dashed,red] (0,0) -- (1,-1);
    \draw[densely dashed,red] (2,0) -- (1,-1);
    \draw[densely dashed,red] (2,0) -- (3,-1);
    \draw[densely dashed,red] (4,0) -- (3,-1);
    \draw[densely dashed,red] (4,-2) -- (3,-1);
    \draw[densely dashed,red] (4,-2) -- (5,-3);
    \draw[densely dashed,red] (4,-2) -- (3,-3);
    \draw[densely dashed,red] (6,-2) -- (5,-3);
    \draw[<-,ultra thick,densely dashed,blue] (-1,-1) -- (1,-1);
    \draw[<-,ultra thick,densely dashed,blue] (1,-1) -- (3,-1);
    \draw[<-,ultra thick,densely dashed,blue] (3,-1) -- (3,-3);
    \draw[<-,ultra thick,densely dashed,blue] (3,-3) -- (5,-3);

    \draw [->,ultra thick](0,0) -- (2,0);
    \draw [<-,ultra thick](0,0) -- (0,-2);
    \draw [<-,ultra thick](2,0) -- (2,-2);
    \draw [->,ultra thick](2,0) -- (4,0);
    \draw [->,ultra thick](2,-2) -- (4,-2);
    \draw [<-,ultra thick](4,0) -- (4,-2);
    \draw [->,ultra thick](4,-2) -- (6,-2);
    \draw [<-,ultra thick](4,-2) -- (4,-4);

    \node at (-0.5, -2) {$g^1$};
    \node at (1.5, -2) {$g^2$};
    \node at (3, -2.5) {$g^3$};
    \node at (3.5, -3) {$g^4$};

    \node at (1, 0.5) {$h^1$};
    \node at (3, 0.5) {$h^2$};
    \node at (4.5, -1) {$h^3$};
    \node at (5, -1.5) {$h^4$};

    \node at (-2, -1) {$F_{\xi}^{h, g}$};

    \fill (0,0) circle (2pt);
    \fill (2,0) circle (2pt);
    \fill (4,0) circle (2pt);
    \fill (4,-2) circle (2pt);
	\fill (6,-2) circle (2pt);
\end{tikzpicture}
\]
Furthermore, by definition,
\[
F_{\tau}^{h,g}=T_{\tau}^{g},
\qquad
F_{\tau^*}^{h,g}=\delta_{g,e}L_{\tau^*}^{h}.
\]
A general ribbon operator can be constructed by composing the corresponding triangle operators. The triangle operators satisfy the multiplication rules
\[
\begin{aligned}
T_{\tau}^{g}\circ T_{\tau}^{g'}
&=
\delta_{g,g'}T_{\tau}^{g},
\\
L_{\tau^*}^{h}\circ L_{\tau^*}^{h'}
&=
L_{\tau^*}^{hh'}.
\end{aligned}
\]
Consequently, the algebras generated by the direct and dual triangle operators are
\[
\begin{aligned}
\mathcal{A}_{\tau}
&:=
\operatorname{span}\{T_{\tau}^{g}\mid g\in G\}
\cong \mathbb{C}(G),
\\
\mathcal{A}_{\tau^*}
&:=
\operatorname{span}\{L_{\tau^*}^{h}\mid h\in G\}
\cong \mathbb{C}G.
\end{aligned}
\]
We next examine the action of ribbon operators on the ground state.

\begin{proposition} \cite{CM22}
Let \( |\text{vac}\rangle \) be the ground state on the plane \( \Sigma \), and let \( \xi \) be a ribbon connecting the specified sites \( s_{0} = (v_{0}, p_{0}) \) and \( s_{1} = (v_{1}, p_{1}) \). Define
\[
|\psi^{h, g}\rangle := F_{\xi}^{h, g}|\text{vac}\rangle.
\]

(1) The state \( |\psi^{h, g}\rangle \) depends only on the sites \( s_{0} \) and \( s_{1} \), and is independent of the choice of ribbon \( \xi \).

(2) The subspace
\[
\mathcal{L}(s_{0}, s_{1}) := \left\{ |\psi\rangle \in \mathcal{H} \, \big| \, A(v)|\psi\rangle = B(p)|\psi\rangle = |\psi\rangle, \, \forall v \notin \{v_{0}, v_{1}\}, \, p \notin \{p_{0}, p_{1}\} \right\}
\]
is spanned by \( \{|\psi^{h, g}\rangle \, | \, h, g \in G\} \).

(3) When the sites \( s_{0} = (v_{0}, p_{0}) \) and \( s_{1} = (v_{1}, p_{1}) \) are disjoint (i.e., the ribbon is open), the set \( \{|\psi^{h, g}\rangle \, | \, h, g \in G\} \) forms an orthonormal basis for \( \mathcal{L}(s_{0}, s_{1}) \), called the group basis.

(4) For disjoint sites \( s_{0} \) and \( s_{1} \), the action of operators on \( \mathcal{L}(s_{0}, s_{1}) \) follows \ref{5}:
\[
\begin{aligned}
& f \triangleright_{s_{0}} |\psi^{h, g}\rangle = |\psi^{fhf^{-1}, fg}\rangle, \quad \delta_{f} \triangleright_{s_{0}} |\psi^{h, g}\rangle = \delta_{f, h} |\psi^{h, g}\rangle, \\
& f \triangleright_{s_{1}} |\psi^{h, g}\rangle = |\psi^{h, gf^{-1}}\rangle, \quad \delta_{f} \triangleright_{s_{1}} |\psi^{h, g}\rangle = \delta_{f, g^{-1}h^{-1}g} |\psi^{h, g}\rangle.
\end{aligned}
\]
Via the mapping \( |\psi^{h, g}\rangle \mapsto \delta_{h} g \), this action is isomorphic to the left-right regular representation of \( D(G) \).
\end{proposition}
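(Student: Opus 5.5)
This proposition has four parts, and I plan to prove them in the order (4), (1), (2), (3). The group-basis states $|\psi^{h,g}\rangle=F_\xi^{h,g}|\mathrm{vac}\rangle$ are built directly from the ribbon operators, so most of the work is pushing local operators through $F_\xi^{h,g}$ using Lemma~\ref{5} and then using the vacuum conditions.

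\textbf{Part (4).} Apply Lemma~\ref{5} to $f\triangleright_{s_0}F_\xi^{h,g}|\mathrm{vac}\rangle$, which gives $F_\xi^{fhf^{-1},fg}\,f\triangleright_{s_0}|\mathrm{vac}\rangle$. Since $A_f(v_0)|\mathrm{vac}\rangle=|\mathrm{vac}\rangle$, this equals $|\psi^{fhf^{-1},fg}\rangle$; the $s_1$ case is identical. For $\delta_f\triangleright_{s_0}$, Lemma~\ref{5} gives $F_\xi^{h,g}\,\delta_{h^{-1}f}\triangleright_{s_0}|\mathrm{vac}\rangle$. The vacuum has trivial flux, $B_k(p)|\mathrm{vac}\rangle=\delta_{k,e}|\mathrm{vac}\rangle$, so this is $\delta_{f,h}|\psi^{h,g}\rangle$. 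Similarly, $\delta_{fg^{-1}hg}$ at $s_1$ forces $f=g^{-1}h^{-1}g$. Comparing these formulas with left multiplication by $f$ and by $\delta_f$, and right multiplication by $f^{-1}$, on $\delta_h g\in D(G)$ (using the crossed relation) gives the regular bimodule identification.

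\textbf{Part (1).} Given two ribbons $\xi,\xi'$ with the same endpoints, on the plane one can be deformed into the other by a sequence of elementary moves across single vertices or faces. Each move changes $F_\xi$ only by factors involving $A_f(v)$ or $B_k(p)$ at interior vertices and faces. I would write this change using the composition rule $F_{\xi'\circ\xi}^{h,g}=\sum_f F^{f^{-1}hf,f^{-1}g}_{\xi'}F^{h,f}_\xi$: the difference between the two paths is a closed ribbon around a contractible region, and on $|\mathrm{vac}\rangle$ such a ribbon acts as $\delta_{g,e}$ (trivial flux) times a gauge transformation, which the vacuum absorbs.

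\textbf{Parts (2) and (3).} For (3), compute $\langle\psi^{h',g'}|\psi^{h,g}\rangle=\langle\mathrm{vac}|F^{h'^{-1},g'}_\xi F^{h,g}_\xi|\mathrm{vac}\rangle=\delta_{g,g'}\langle\mathrm{vac}|F_\xi^{h'^{-1}h,g}|\mathrm{vac}\rangle$. By part (4), distinct $h$ values are eigenvectors of the projector $\delta_h\triangleright_{s_0}$ with different eigenvalues, hence orthogonal. The remaining value $\langle\mathrm{vac}|F^{e,g}_\xi|\mathrm{vac}\rangle$ is a $\delta_g$ on the path product along the ribbon, and its expectation in the uniform gauge superposition is $1/|G|$. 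So normalization holds up to a global rescaling, which I would take to be part of the definition.

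For (2), $\mathcal L(s_0,s_1)$ is invariant under the $D(G)\otimes D(G)^{\mathrm{op}}$ action at the two sites. Contracting all stabilizers away from $s_0,s_1$ reduces the surface to a single effective edge pair carrying $\mathbb{C}(G)\otimes\mathbb{C}G$. This gives $\dim\mathcal L\le|G|^2$, and the $|G|^2$ orthogonal states from (3) then span.

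\textbf{Main obstacle.} The dimension bound in (2) is the hardest step. I would prove it by a Tannaka/Peter--Weyl style reduction: the planar ground state is unique, and $\mathcal L$ is the image of the ground-state sector under local operators at the two sites, which can be expressed via ribbon operators through the bimodule isomorphism $\Phi$.
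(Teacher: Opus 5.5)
The paper quotes this proposition from the literature without proof, so I am comparing your proposal with the standard argument.

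\textbf{What is correct.} Your part (4) is right. The vacuum is fixed by $A_f(v_0)$ and $A_f(v_1)$, and $B_k(s_i)|\mathrm{vac}\rangle=\delta_{k,e}|\mathrm{vac}\rangle$. At $s_1$, both $f$ and $\delta_f$ act by right multiplication by their antipodes $f^{-1}$ and $\delta_{f^{-1}}$. In (3), your orthogonality argument is correct, and so is the computation $\langle\psi^{h,g}|\psi^{h,g}\rangle=\langle\mathrm{vac}|F_\xi^{e,g}|\mathrm{vac}\rangle=1/|G|$. You are right that the group basis is orthogonal and becomes orthonormal only after rescaling.

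\textbf{The gap.} You never prove the upper bound $\dim\mathcal{L}(s_0,s_1)\le|G|^2$, yet both (2) and the word ``basis'' in (3) rest on it.
\begin{itemize}
\item ``Contracting all stabilizers to one effective edge pair'' is only a slogan unless you actually construct the lattice-reduction isometries.
\item Your fallback is circular. Products of site operators at $s_0$ and $s_1$ send $|\mathrm{vac}\rangle$ into $\mathbb{C}|\mathrm{vac}\rangle$. Operators supported near $s_0$ and near $s_1$ separately cannot create a flux pair $h\neq e$ at $p_0,p_1$ without leaving $\mathcal{L}(s_0,s_1)$. The assertion that every vector of $\mathcal{L}(s_0,s_1)$ arises from $|\mathrm{vac}\rangle$ via operators connecting the two sites is exactly statement (2).
\item The isomorphism $\Phi$ is a change of basis inside $D(G)$ and bounds no dimension.
\item Statement (2) is stated without disjointness, while your route passes through (3), which assumes it.
\end{itemize}

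\textbf{How to close it.} One concrete way is an orbit count in the configuration basis.
\begin{itemize}
\item The $B(p)$ are diagonal and the $A_f(v)$ permute basis states. So $\mathcal{L}(s_0,s_1)$ is the set of vectors supported on the configurations $\mathcal{F}$ flat at every $p\notin\{p_0,p_1\}$ and invariant under $\mathcal{G}=G^{V\setminus\{v_0,v_1\}}$. It is spanned by the $\mathcal{G}$-orbit sums.
\item For disjoint sites, gauge-fix along a spanning tree containing the direct path of $\xi$. What remains is the product $g$ along that path and the non-tree holonomies based at $v_0$.
\item Flatness on the complement of $p_0\cup p_1$ (an annulus) forces each non-tree holonomy to be a power of the flux $h$ around $p_0$. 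Hence there is at most one orbit per $(h,g)$, so $\dim\mathcal{L}(s_0,s_1)\le|G|^2$.
\item $F_\xi^{h,g}$ acts on basis states by a $0/1$ projection followed by a permutation, and $|\mathrm{vac}\rangle$ has nonnegative coefficients. Also, $|\psi^{h,g}\rangle$ lies in $\mathcal{L}(s_0,s_1)$ by the first half of the ribbon commutation lemma; you should state this explicitly. It follows that $|\psi^{h,g}\rangle$ is a positive multiple of the $(h,g)$ orbit sum.
\end{itemize}
This gives (2) and (3) together. Comparing norms, it also gives (1) for ribbons whose direct paths are homotopic relative to their endpoints in that annulus.

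\textbf{On (1).} Your closed-ribbon sketch is in the standard spirit. However, it presupposes that a contractible closed ribbon acts on $|\mathrm{vac}\rangle$ as $\delta_{g,e}$. It also uses the composition rule for backtracking ribbons. Both need their own deformation argument.
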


Part~(4) of Proposition~2.6 implies that $\mathcal{L}(s_0,s_1)$ admits an alternative basis consisting of quasiparticle excitation states. The following corollaries are immediate consequences.
\begin{corollary}\cite{CM22}
Let $\xi$ be an open ribbon connecting disjoint sites $s_0$ and $s_1$. Then
   $\mathcal{L}(s_0, s_1)$ admits a quasiparticle basis indexed by irreducible representations $(\mathcal{C}, \pi)$ of $D(G)$:
   \[
|u, v; \mathcal{C}, \pi\rangle = \frac{\dim V_\pi}{|C_G|} {F'}_{\xi}^{\mathcal{C},\pi; u,v} | \text{vac} \rangle,
\]
   where $u = (c, i)$, $v = (d, j)$, $c, d \in \mathcal{C}$, and ${F'}_{\xi}^{\mathcal{C}, \pi; u, v} := \sum_{n \in C_G} \pi(n^{-1})_{ji} F_{\xi}^{c, q_{c} n q_d^{-1}}$.
\end{corollary}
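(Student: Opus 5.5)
The plan is to read the statement off Proposition~2.6(4) by a change of basis, using the bimodule isomorphism $\Phi:\bigoplus_{\mathcal{C},\pi}\mathrm{End}(V_{\mathcal{C},\pi})\to D(G)$ of the theorem cited from \cite{CM22}. By Proposition~2.6(3), the group-basis states $\{|\psi^{h,g}\rangle\}$ form an orthonormal basis of $\mathcal{L}(s_0,s_1)$. The map $|\psi^{h,g}\rangle\mapsto\delta_h g$ is then a linear isomorphism $\mathcal{L}(s_0,s_1)\cong D(G)$ that intertwines the actions at $s_0$ and $s_1$ with the left--right regular representation. Composing with $\Phi$ transports the matrix-unit basis $\{c\otimes e_i\otimes\delta_d\otimes f^j\}$ of $\bigoplus\mathrm{End}(V_{\mathcal{C},\pi})$ to a basis of $\mathcal{L}(s_0,s_1)$.

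The key computation is to apply the inverse of the first isomorphism to $\Phi(c\otimes e_i\otimes\delta_d\otimes f^j)$. Since
\[
\Phi(c\otimes e_i\otimes\delta_d\otimes f^j)=\frac{\dim V_\pi}{|C_G|}\sum_{n\in C_G}\pi(n^{-1})_{ji}\,\delta_c\otimes q_c n q_d^{-1},
\]
linearity of $h,g\mapsto F_\xi^{h,g}|\mathrm{vac}\rangle$, together with the correspondence $\delta_h g\leftrightarrow|\psi^{h,g}\rangle$, gives exactly
\[
\frac{\dim V_\pi}{|C_G|}\sum_{n}\pi(n^{-1})_{ji}F_\xi^{c,q_cnq_d^{-1}}|\mathrm{vac}\rangle=\frac{\dim V_\pi}{|C_G|}{F'}_\xi^{\mathcal{C},\pi;u,v}|\mathrm{vac}\rangle .
\]
Because $\Phi$ is a linear bijection, the images of the matrix-unit basis form a basis. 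Therefore the vectors $|u,v;\mathcal{C},\pi\rangle$, indexed by all $(\mathcal{C},\pi)$, $u=(c,i)$ and $v=(d,j)$, form a basis of $\mathcal{L}(s_0,s_1)$. A dimension check confirms this: $\sum_{\mathcal{C},\pi}(|\mathcal{C}|\dim V_\pi)^2=|G|^2=\dim D(G)$.

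To justify calling this a quasiparticle basis, I will use that $\Phi$ is a bimodule map. The left action at $s_0$ then acts only on the $V_{\mathcal{C},\pi}$ factor, via the representation $(\mathcal{C},\pi)$. Using Lemma~\ref{1} and the identity $P_{\mathcal{C}',\pi'}\,\Phi(\mathrm{End}(V_{\mathcal{C},\pi}))=\delta\,\Phi(\mathrm{End}(V_{\mathcal{C},\pi}))$, it follows that $P_{\mathcal{C},\pi}(s_0)$ stabilizes each $|u,v;\mathcal{C},\pi\rangle$. At $s_1$ the right action corresponds to the dual representation, i.e.\ the antiparticle.

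The main obstacle is the bookkeeping of left versus right actions. Proposition~2.6(4) gives $f\triangleright_{s_1}$ as right multiplication by $f^{-1}$, and $\delta_f\triangleright_{s_1}$ involves the antipode. One must check that these match the right $D(G)$-module structure on $V^*_{\mathcal{C},\pi}$ used in $\Phi$, up to the antipode. I would verify this directly on group-basis elements with the formulas of Lemma~\ref{5}. The basis claim itself, however, needs only that $\Phi$ is a linear isomorphism.
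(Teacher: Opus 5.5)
Your proposal is correct and follows the same route as the paper. The paper presents this corollary as an immediate consequence of Proposition~2.6(4): identify $\mathcal{L}(s_0,s_1)$ with $D(G)$ via $|\psi^{h,g}\rangle\mapsto\delta_h g$, then pull back the matrix-unit basis through the bimodule isomorphism $\Phi$. Your explicit check that $\Phi(c\otimes e_i\otimes\delta_d\otimes f^j)$ corresponds to $\frac{\dim V_\pi}{|C_G|}{F'}_\xi^{\mathcal{C},\pi;u,v}|\mathrm{vac}\rangle$, together with the dimension count and the projector argument at $s_0$, fills in details the paper leaves implicit.
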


These states realize the left site action $\triangleright_{s_0}$ on $\mathcal{L}(s_0,s_1)$, while the action $\triangleright_{s_1}$ at the right endpoint is related to a right site action through the antipode $S$ of $D(G)$. The ribbon operators ${F'}_{\xi}^{\mathcal{C},\pi;u,v}$ form a basis of an operator space that commutes with all Hamiltonian terms except those supported at the ribbon endpoints.
\begin{corollary}\cite{CM22}
If a quantum state \( |\psi\rangle \in \mathcal{L}(s_{0}, s_{1}) \) can detect quasiparticles of type \( \mathcal{C},\pi \) via a non-zero projection \( P_{\mathcal{C},\pi} \triangleright_{s_{0}}|\psi\rangle \), then
\[
P_{\mathcal{C},\pi} \triangleright_{s_{0}}|\psi\rangle= |\psi\rangle\triangleleft_{s_{1}}P_{\mathcal{C},\pi},
\]
thus automatically detecting quasiparticles at \( s_{1} \), and vice versa. Specifically, for the Bell state
\[
| \text{Bell}, \xi \rangle = \sum_{h \in G} F_{\xi}^{h, e} | \text{vac} \rangle,
\]
all projections are non-zero, i.e., for any \( \mathcal{C},\pi \),
\[
P_{\mathcal{C},\pi} \triangleright_{s_{0}}|\text{Bell}, \xi\rangle= |\text{Bell}, \xi\rangle\triangleleft_{s_{1}}P_{\mathcal{C},\pi}.
\]
\end{corollary}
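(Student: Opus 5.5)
We need to prove the last corollary. It has two claims. First, for $|\psi\rangle \in \mathcal{L}(s_0,s_1)$,
\[
P_{\mathcal{C},\pi}\triangleright_{s_0}|\psi\rangle = |\psi\rangle\triangleleft_{s_1}P_{\mathcal{C},\pi},
\]
where the right action at $s_1$ is $|\psi\rangle\triangleleft_{s_1} x := S(x)\triangleright_{s_1}|\psi\rangle$. Second, for the Bell state $|\text{Bell},\xi\rangle$ every projection $P_{\mathcal{C},\pi}\triangleright_{s_0}|\text{Bell},\xi\rangle$ is nonzero.

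\textbf{Step 1: reduce to the group basis.} By part (4) of the preceding proposition, the map $|\psi^{h,g}\rangle\mapsto \delta_h g$ identifies $\mathcal{L}(s_0,s_1)$ with $D(G)$. Under this map, $x\triangleright_{s_0}$ becomes left multiplication by $x$, and $S(x)\triangleright_{s_1}$ becomes right multiplication by $x$.

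I would verify this on generators. For $f\triangleright_{s_1}$, part (4) sends $\delta_h g\mapsto \delta_h gf^{-1}$, which is right multiplication by $f^{-1}=S(f)$. For $\delta_f\triangleright_{s_1}$, it multiplies by $\delta_{f,g^{-1}h^{-1}g}$. Right multiplication by $\delta_{f^{-1}}=S(\delta_f)$ gives
\[
(\delta_h g)\delta_{f^{-1}} = \delta_{h,gf^{-1}g^{-1}}\,\delta_h g,
\]
and the two conditions coincide. So, as a bimodule over $D(G)$, the space $\mathcal{L}(s_0,s_1)$ is $D(G)$ itself with its left and right regular actions.

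\textbf{Step 2: centrality gives the identity.} By Lemma~\ref{1}, $P_{\mathcal{C},\pi}$ is central in $D(G)$. Hence for the element $a\in D(G)$ corresponding to $|\psi\rangle$, we have $P_{\mathcal{C},\pi}\,a = a\,P_{\mathcal{C},\pi}$. Translating back through Step 1 gives the displayed identity, and it holds for every $|\psi\rangle$. In particular, a nonzero projection at $s_0$ forces a nonzero, equal projection at $s_1$, and conversely.

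\textbf{Step 3: the Bell state.} The Bell state corresponds to
\[
\sum_h \delta_h e = 1\in D(G).
\]
Then $P_{\mathcal{C},\pi}\triangleright_{s_0}|\text{Bell},\xi\rangle$ corresponds to $P_{\mathcal{C},\pi}\cdot 1 = P_{\mathcal{C},\pi}$. This is nonzero, since by Lemma~\ref{1} it is a nonzero idempotent (it acts as the identity on $V_{\mathcal{C},\pi}$). Because the group basis is orthonormal by part (3), the map of Step 1 is injective, so the corresponding state is nonzero.

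\textbf{Main obstacle.} The hardest point is Step 1: pinning down that the $s_1$-action is exactly right multiplication twisted by the antipode. This requires careful bookkeeping of the $\delta$-part convention $g^{-1}h^{-1}g$ against $S(\delta_f)=\delta_{f^{-1}}$, and of the ordering in the crossed product. Once this is settled, the rest follows from centrality alone.
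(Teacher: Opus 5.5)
Your proof is correct and follows essentially the same argument as the paper. The paper presents this corollary as an immediate consequence of part (4) of the preceding proposition: the map $|\psi^{h,g}\rangle\mapsto\delta_h g$ identifies $\mathcal{L}(s_0,s_1)$ with the left--right regular bimodule of $D(G)$, with the $s_1$-action entering through the antipode; centrality of $P_{\mathcal{C},\pi}$ then gives the identity, and the Bell state corresponds to $1\in D(G)$. One step in your Step 1 is left implicit: going from ``$x\triangleright_{s_1}$ is right multiplication by $S(x)$'' to ``$S(x)\triangleright_{s_1}$ is right multiplication by $x$'' uses $S^2=\mathrm{id}$ on $D(G)$, which does hold.
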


The space $\mathcal{L}(s_0,s_1)$ decomposes into distinct sectors labelled by the representations $(\mathcal{C},\pi)$, with basis states $|u,v;\mathcal{C},\pi\rangle$. The Bell state $|\text{Bell},\xi\rangle$ is given by the diagonal sum over these sectors:
\[
|\text{Bell},\xi\rangle
=
\sum_{\mathcal{C},\pi}\sum_u
|u,u;\mathcal{C},\pi\rangle.
\]
For a fixed representation $(\mathcal{C},\pi)$, projecting the Bell state onto the corresponding sector gives
\[
P_{\mathcal{C},\pi}\triangleright_{s_0}
|\text{Bell},\xi\rangle
=
|\text{Bell};\mathcal{C},\pi,\xi\rangle,
\]
where
\[
|\text{Bell};\mathcal{C},\pi,\xi\rangle
=
\sum_u |u,u;\mathcal{C},\pi\rangle.
\]
These states $|\text{Bell};\mathcal{C},\pi,\xi\rangle$ are referred to as minimal Bell states. They can be expressed explicitly in terms of ribbon operators as
\[
|\text{Bell};\mathcal{C},\pi,\xi\rangle
=
\frac{\dim V_\pi}{|C_G|}
W_\xi^{\mathcal{C},\pi}
|\text{vac}\rangle,
\qquad
W_\xi^{\mathcal{C},\pi}
:=
\sum_u {F'}_\xi^{\mathcal{C},\pi;u,u},
\]
where $W_\xi^{\mathcal{C},\pi}$ is called the ribbon trace operator.

For quasiparticle basis operators associated with open ribbons, the following relation holds:

\begin{lemma}  \cite{CM22}
Let \(\xi\) and \(\xi'\) be open ribbons connecting sites \( s_{0} = (v_{0}, p_{0}) \text{ to } s_{1} = (v_{1}, p_{1}) \) and \( s_{1} = (v_{1}, p_{1}) \text{ to } s_{2} = (v_{2}, p_{2}) \), respectively. Then
\[
{F'}_{\xi^{\prime} \circ \xi}^{\mathcal C, \pi; u, v} = \sum_{w} {F'}_{\xi^{\prime}}^{\mathcal C, \pi; w, v} \circ {F'}_{\xi}^{\mathcal C, \pi; u, w}.
\]
\end{lemma}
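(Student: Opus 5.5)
The plan is to reduce the statement to the ribbon concatenation rule
\[
F_{\xi'\circ\xi}^{h,g}=\sum_{f\in G}F_{\xi'}^{f^{-1}hf,\,f^{-1}g}\circ F_{\xi}^{h,f},
\]
which is the ribbon counterpart of the coproduct $\Delta(\delta_g\otimes h)$. We then push this through the change of basis that defines ${F'}^{\mathcal C,\pi;u,v}_\xi$. The coefficient matrices $\pi(n^{-1})_{ji}$ in that definition are the matrix coefficients of the bimodule isomorphism $\Phi$ of the theorem cited from \cite{CM22}. The claimed identity is therefore the statement that $\Phi$ intertwines the ribbon coproduct with matrix multiplication in $\mathrm{End}(V_{\mathcal C,\pi})$, i.e.\ the coproduct of a matrix coefficient of an irreducible representation is the sum over intermediate indices. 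We will verify this directly.

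\textbf{Step 1 (expand).} Write $u=(c,i)$, $v=(d,j)$, $w=(b,k)$. By definition,
\[
{F'}_{\xi'\circ\xi}^{\mathcal C,\pi;u,v}=\sum_{n\in C_G}\pi(n^{-1})_{ji}\,F_{\xi'\circ\xi}^{c,\;q_cnq_d^{-1}}.
\]
Applying the concatenation rule gives
\[
\sum_{n\in C_G}\sum_{f\in G}\pi(n^{-1})_{ji}\,F_{\xi'}^{f^{-1}cf,\;f^{-1}q_cnq_d^{-1}}\circ F_\xi^{c,f}.
\]

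\textbf{Step 2 (reparametrize $f$).} Each $f\in G$ has a unique factorization $f=q_c m q_b^{-1}$, where $b=f^{-1}cf\in\mathcal C$ and $m\in C_G=C_G(r_{\mathcal C})$. This holds because $q_c^{-1}fq_b$ commutes with $r_{\mathcal C}$ exactly when $f^{-1}cf=b$. Under this factorization,
\[
F_\xi^{c,f}=F_\xi^{c,\,q_cmq_b^{-1}},\qquad f^{-1}q_cnq_d^{-1}=q_b\,(m^{-1}n)\,q_d^{-1}.
\]
Setting $n'=m^{-1}n$, the second factor becomes $F_{\xi'}^{b,\,q_bn'q_d^{-1}}$. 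The sum over $f$ is thus replaced by a sum over $b\in\mathcal C$ and $m\in C_G$, and the sum over $n$ is replaced by a sum over $n'\in C_G$.

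\textbf{Step 3 (split the coefficient).} Using the homomorphism property, $\pi(n^{-1})=\pi(n'^{-1})\pi(m^{-1})$, so
\[
\pi(n^{-1})_{ji}=\sum_k\pi(n'^{-1})_{jk}\,\pi(m^{-1})_{ki}.
\]
Substituting, the double sum factorizes as
\[
\sum_{b,k}\Bigl(\sum_{n'}\pi(n'^{-1})_{jk}F_{\xi'}^{b,\,q_bn'q_d^{-1}}\Bigr)\circ\Bigl(\sum_m\pi(m^{-1})_{ki}F_\xi^{c,\,q_cmq_b^{-1}}\Bigr).
\]
This is exactly $\sum_w {F'}_{\xi'}^{\mathcal C,\pi;w,v}\circ{F'}_\xi^{\mathcal C,\pi;u,w}$ with $w=(b,k)$.

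\textbf{Main obstacle.} The delicate point is Step 2. We must check that the conjugated flux label $f^{-1}hf$ produced by the concatenation rule matches the first index $b$ required by ${F'}_{\xi'}$, and that the correspondence $f\leftrightarrow(b,m)$ is a bijection that keeps track of the representative choices $q_b$. If the paper's conventions for the second ribbon use $f^{-1}g$ versus $gf^{-1}$ (equivalently, left versus right comultiplication factors), then the index order $ji$ versus $ij$ must be checked against these conventions. In that case one may instead need $\pi(n^{-1})=\pi(m^{-1})\pi(n'^{-1})$ together with the substitution $n'=nm^{-1}$. I would settle this by checking the case $\xi,\xi'$ single triangles, and otherwise redo Step 3 with the transposed factorization, which yields the same identity.
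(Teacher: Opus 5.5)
Your proposal is correct and follows the paper's proof essentially step for step: expand with the concatenation rule $F_{\xi'\circ\xi}^{h,g}=\sum_f F_{\xi'}^{f^{-1}hf,f^{-1}g}\circ F_\xi^{h,f}$, reparametrize $f=q_cmq_b^{-1}$ with $b=f^{-1}cf$, and split $\pi(n^{-1})_{ji}=\sum_k\pi((m^{-1}n)^{-1})_{jk}\pi(m^{-1})_{ki}$. The hedging in your final paragraph is not needed. With the paper's stated convention (the factor $f^{-1}g$ in the second ribbon), the factorization $\pi(n^{-1})=\pi(n'^{-1})\pi(m^{-1})$ with $n'=m^{-1}n$ already gives exactly the index orders $(j,k)$ and $(k,i)$ that the definition of ${F'}$ requires.
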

\begin{proof}
By the composition property of ribbon operators, we have
\[
\begin{aligned}
{F'}_{\xi'\circ\xi}^{\mathcal C,\pi;(c,i),(d,j)} &= \sum_{n\in C_G} \pi(n^{-1})_{ji} F_{\xi'\circ\xi}^{c,q_c n q_d^{-1}} \\
&= \sum_{f\in G} \sum_{n\in C_G} \pi(n^{-1})_{ji} F_{\xi'}^{f^{-1}c f, f^{-1} q_c n q_d^{-1}} \circ F_{\xi}^{c,f} \\
&= \sum_{b\in \mathcal C} \sum_{k} \sum_{m,n\in C_G} \pi((m^{-1} n)^{-1})_{jk} \pi(m^{-1})_{ki} F_{\xi'}^{b,q_b m^{-1} n q_d^{-1}} \circ F_{\xi}^{c,q_c m q_b^{-1}}\\
&= \sum_{w} {F'}_{\xi^{\prime}}^{\mathcal C, \pi; w, v} \circ {F'}_{\xi}^{\mathcal C, \pi; u, w},
\end{aligned}
\]
where \( w=(b,k) \).
\end{proof}

For ribbon trace operators defined on the same ribbon, the following relation holds in the electric charge sector.
\begin{lemma}  \cite{CM22}
\label{2}
\[
W_{\xi}^{e, \pi} \circ W_{\xi}^{e, \pi^{\prime}} = W_{\xi}^{e, \pi \otimes \pi^{\prime}}.
\]
\begin{proof}
Direct computation yields
\[
\begin{aligned}
W_{\xi}^{e,\pi} \circ W_{\xi}^{e,\pi'} & = \sum_{n,n'\in G} \operatorname{Tr}_{\pi}(n^{-1}) F_{\xi}^{e,n} \operatorname{Tr}_{\pi'}(n'^{-1}) F_{\xi}^{e,n'} \\
& = \sum_{n,n'\in G} \operatorname{Tr}_{\pi}(n^{-1}) \operatorname{Tr}_{\pi'}(n'^{-1}) \delta_{n,n'} F_{\xi}^{e,n} \\
& = \sum_{n} \operatorname{Tr}_{\pi\otimes\pi'}(n^{-1}) F_{\xi}^{e,n}.
\end{aligned}
\]
\end{proof}
\end{lemma}

For the adjoint of a ribbon trace operator $W_{\xi}^{\mathcal C,\pi}$, we have
\begin{lemma}\cite{CM22}\label{3}
Let \(\xi\) be an open ribbon connecting sites \( s_{0} = (v_{0}, p_{0}) \text{ to } s_{1} = (v_{1}, p_{1}) \). Then
\[
W_{\xi}^{\mathcal C, \pi \dagger} = W_{\xi}^{\mathcal C^{*}, \pi^{*}},
\]
where \( \pi^* \) is the conjugate unitary representation of \( C_G \), and \( \mathcal C^{*} = \mathcal C^{-1} \) satisfies
\[
r_{\mathcal C}^{*} = r_{\mathcal C}^{-1} \quad \text{and} \quad q_{c^{-1}} = q_{c} \quad \text{(defining \( q: \mathcal C^{-1} \rightarrow G \))}.
\]
\end{lemma}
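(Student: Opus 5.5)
We verify the identity at the level of the expansion of $W_\xi^{\mathcal C,\pi}$ in terms of elementary ribbon operators, using the adjoint rule $\bigl(F_\xi^{h,g}\bigr)^\dagger=F_\xi^{h^{-1},g}$ derived earlier. By the definition of the ribbon trace operator and the quasiparticle basis operators,
\[
W_\xi^{\mathcal C,\pi}=\sum_{c\in\mathcal C}\sum_i\sum_{n\in C_G}\pi(n^{-1})_{ii}\,F_\xi^{c,\,q_c n q_c^{-1}}
=\sum_{c\in\mathcal C}\sum_{n\in C_G}\operatorname{Tr}\pi(n^{-1})\,F_\xi^{c,\,q_cnq_c^{-1}} .
\]
Taking the adjoint termwise gives
\[
W_\xi^{\mathcal C,\pi\,\dagger}=\sum_{c\in\mathcal C}\sum_{n\in C_G}\overline{\operatorname{Tr}\pi(n^{-1})}\,F_\xi^{c^{-1},\,q_cnq_c^{-1}} .
\]

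The next step is to rewrite each ingredient in terms of the data of $\mathcal C^*=\mathcal C^{-1}$. First, since $\pi$ is unitary, $\overline{\operatorname{Tr}\pi(n^{-1})}=\operatorname{Tr}\pi^*(n^{-1})$. Second, the centralizer of $r_{\mathcal C}^{-1}$ coincides with that of $r_{\mathcal C}$, so $n$ ranges over the correct group, and $\pi^*$ is an irreducible representation of it. Third, with the choice $q_{c^{-1}}=q_c$ one checks that $q_{c^{-1}}r_{\mathcal C}^{-1}q_{c^{-1}}^{-1}=(q_cr_{\mathcal C}q_c^{-1})^{-1}=c^{-1}$, so this is a legitimate section $\mathcal C^{-1}\to G$. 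Substituting $c'=c^{-1}\in\mathcal C^{-1}$ then turns the sum into
\[
\sum_{c'\in\mathcal C^{-1}}\sum_{n\in C_G}\operatorname{Tr}\pi^*(n^{-1})\,F_\xi^{c',\,q_{c'}nq_{c'}^{-1}}=W_\xi^{\mathcal C^*,\pi^*}.
\]
If the normalization conventions of the minimal Bell states are included, the real prefactor $\dim V_\pi/|C_G|$ is unchanged, since $\dim V_{\pi^*}=\dim V_\pi$.

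The main point to watch is that $W_\xi^{\mathcal C,\pi}$ depends only on the character of $\pi$ and is independent of the choice of $q$. Consequently, the specific choice $q_{c^{-1}}=q_c$ stated in the lemma is harmless, and any other section gives the same operator. I expect this to follow by noting that a different section $q'_c=q_cm_c$ with $m_c\in C_G$ changes $q_cnq_c^{-1}$ only by the conjugation $n\mapsto m_cnm_c^{-1}$, which leaves the class function $\operatorname{Tr}\pi$ invariant. I would check this reindexing explicitly. The rest is bookkeeping.
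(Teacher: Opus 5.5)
Your proof is correct. You expand $W_\xi^{\mathcal C,\pi}=\sum_{c\in\mathcal C}\sum_{n\in C_G}\operatorname{Tr}\pi(n^{-1})F_\xi^{c,q_cnq_c^{-1}}$, take adjoints termwise using $(F_\xi^{h,g})^\dagger=F_\xi^{h^{-1},g}$, and reindex over $\mathcal C^{-1}$ using $C_G(r_{\mathcal C}^{-1})=C_G(r_{\mathcal C})$ and the section $q_{c^{-1}}=q_c$; this is the natural direct verification. The paper gives no proof of its own and only cites \cite{CM22}, and your closing remark that $W$ does not depend on the choice of section $q$ is correct but unnecessary, since the lemma already fixes $q_{c^{-1}}=q_c$.
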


Finally, we consider ribbons connecting multiple sites. The following lemma determines the dimension of the space generated by ribbon operators on $n+1$ sites and provides an orthogonal basis indexed by group elements.
\begin{lemma}\cite{CM22}\label{4}
In the \( D(G) \)-model on a boundaryless planar lattice \( \Sigma \), let \( s_0,s_1,...,s_n \) be \( n+1 \) disjoint sites. The space \( \mathcal{L}(s_{0}, s_{1}, \ldots, s_{n}) \) satisfies
\[
\dim (\mathcal{L}(s_{0}, s_{1}, \ldots, s_{n})) = |G|^{2n},
\]
with an orthogonal group basis:
\[
\{|\psi^{\{h^{1}, h^{2}, \ldots, h^{n}\},\{g^{1}, g^{2}, \ldots, g^{n}\}}\rangle \mid h^{1}, h^{2}, \ldots, h^{n}, g^{1}, g^{2}, \ldots, g^{n} \in G\}.
\]
\end{lemma}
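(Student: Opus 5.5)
We prove Lemma~\ref{4} by induction on $n$. For $n=1$ it is part~(3) of Proposition~2.6: $\mathcal{L}(s_0,s_1)$ has the orthonormal group basis $\{|\psi^{h,g}\rangle\}$, of dimension $|G|^2$. In general, fix pairwise disjoint ribbons $\xi_1,\dots,\xi_n$, where $\xi_k$ joins $s_0$ to $s_k$. Choose them so that they leave $s_0$ in a fixed cyclic order and otherwise have disjoint supports. For group labels set
\[
|\psi^{\{h^k\},\{g^k\}}\rangle := F_{\xi_n}^{h^n,g^n}\circ\cdots\circ F_{\xi_1}^{h^1,g^1}\,|\text{vac}\rangle .
\]
By Lemma~\ref{5}, each $F_{\xi_k}$ commutes with $A(v)$ and $B(p)$ for every $v,p$ away from $s_0,s_k$. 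So every such state lies in $\mathcal{L}(s_0,\dots,s_n)$.

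\textbf{Orthogonality.} Away from $s_0$, the ribbons act on disjoint edge sets. Their direct-triangle parts $T^{g}_{\tau}$ are diagonal projections in the edge basis. The dual-triangle parts $L^{h}_{\tau^*}$ are invertible left/right translations. Using $F^{h,g}F^{h',g'}=\delta_{g,g'}F^{hh',g}$ and $(F^{h,g})^\dagger=F^{h^{-1},g}$, we compute $\langle\psi^{\{h'\},\{g'\}}|\psi^{\{h\},\{g\}}\rangle$ and push each pair through. This shows the inner product vanishes unless $g^k=g'^k$ for all $k$. For $h^k\neq h'^k$, we use the flux measurements $\delta_f\triangleright_{s_k}$ at the far endpoints $s_k$ (as in part~(4) of Proposition~2.6). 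These act only near $s_k$, are untouched by the other ribbons, and have different eigenvalues $g^{-1}h^{-1}g$ on the two states. Hence distinct labels give orthogonal, nonzero states, and the dimension is at least $|G|^{2n}$.

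\textbf{Spanning.} This is the main obstacle. Let $|\phi\rangle\in\mathcal{L}(s_0,\dots,s_n)$. We choose an auxiliary ribbon $\xi_n$ from $s_0$ to $s_n$ and show that
\[
|\phi\rangle=\sum_{h,g}F_{\xi_n}^{h,g}|\phi_{h,g}\rangle ,
\qquad |\phi_{h,g}\rangle\in\mathcal{L}(s_0,\dots,s_{n-1}).
\]
To do this we apply the projector onto trivial excitation at $s_n$, which is $A(v_n)B(p_n)$. We then use the completeness $\sum_{\mathcal C,\pi}P_{\mathcal C,\pi}=1$ of Lemma~\ref{1} together with the Bell-state identity: the excitation at $s_n$ can be moved to $s_0$ by inverting the ribbon, since $\sum_h F_{\xi}^{h,e}$ transports local $D(G)$-content. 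Concretely, I would mimic the $n=1$ proof. The operators $f\triangleright_{s_n}$ and $\delta_f\triangleright_{s_n}$ generate a copy of $D(G)$ acting on $\mathcal{L}$. By part~(4), the ribbon operators $F_{\xi_n}^{h,g}$ intertwine this action with the regular representation. Combining these gives an isomorphism
\[
\mathcal{L}(s_0,\dots,s_n)\cong D(G)\otimes \mathcal{L}(s_0,\dots,s_{n-1}).
\]
Induction then yields dimension at most $|G|^2\cdot|G|^{2(n-1)}$. The delicate point is that all the excitation content, including the fusion channel, can be pushed to $s_0$. This uses planarity, so that there are no nontrivial holonomies and $|\text{vac}\rangle$ is unique by Corollary~2.1, together with ribbon independence from part~(1) of Proposition~2.6. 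Together with the lower bound, we get equality $|G|^{2n}$ and the orthogonal group basis.
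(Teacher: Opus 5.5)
The paper gives no proof of this lemma; it is imported from Cowtan--Majid. Your argument therefore has to stand on its own, and its decisive step, the upper bound $\dim\mathcal{L}(s_0,\dots,s_n)\le|G|^{2n}$, is not established.

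The claimed isomorphism $\mathcal{L}(s_0,\dots,s_n)\cong D(G)\otimes\mathcal{L}(s_0,\dots,s_{n-1})$ is essentially the whole content of the lemma, and none of the tools you invoke yields it:
\begin{itemize}
\item Part~(4) of Proposition~2.6 describes how $f\triangleright_{s_1}$ and $\delta_f\triangleright_{s_1}$ act on states already of the form $F^{h,g}_\xi|\mathrm{vac}\rangle$. Using it to conclude that an arbitrary $|\phi\rangle$ equals $\sum_{h,g}F^{h,g}_{\xi_n}|\phi_{h,g}\rangle$ is circular.
\item The completeness relation $\sum_{\mathcal{C},\pi}P_{\mathcal{C},\pi}=1$ at $s_n$ only splits $\mathcal{L}(s_0,\dots,s_n)$ into sectors by the charge at $s_n$. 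The projector $A(v_n)B(p_n)$ identifies the trivial sector with $\mathcal{L}(s_0,\dots,s_{n-1})$. Nothing in your argument controls the multiplicities of the nontrivial sectors $(\mathcal{C},\pi)\neq(e,1)$, and that is exactly what ``the excitation at $s_n$ can be pushed to $s_0$'' would have to prove.
\item Planarity, uniqueness of $|\mathrm{vac}\rangle$ and path independence are named but never enter a concrete step.
\end{itemize}

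Your orthogonality step also fails as written. All $\xi_k$ start at $s_0$, and ribbon operators sharing a site need not commute (they obey braiding-type relations), so you cannot ``push each pair through''. You also never check that the states are nonzero, which the lower bound needs.

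Both gaps can be closed.

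\textbf{Dimension.} Count directly instead of inducting. $B(p)$ projects onto flatness at $p$ and $A(v)$ averages over gauge transformations at $v$. Hence $\mathcal{L}(s_0,\dots,s_n)$ is spanned by the orbit sums of $\mathcal{G}'=G^{V\setminus\{v_0,\dots,v_n\}}$ acting on the set $\mathcal{F}$ of configurations flat on every face other than $p_0,\dots,p_n$.
\begin{itemize}
\item The action is free: a stabilizing gauge transformation is trivial at $v_0$ and this propagates along the edges of the connected lattice.
\item Gauge-fixing a spanning tree gives $|\mathcal{F}|=|G|^{|V|-1}\,|\mathrm{Hom}(\pi_1(X),G)|$. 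Here $X$ is the sphere with the open faces $p_0,\dots,p_n$ removed, so $\pi_1(X)$ is free of rank $n$.
\item Hence $\dim\mathcal{L}(s_0,\dots,s_n)=|G|^{|V|-1+n}/|G|^{|V|-n-1}=|G|^{2n}$.
\end{itemize}

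\textbf{Orthogonality and nonvanishing.} Choose $\xi_1,\dots,\xi_{n-1}$ away from $s_n$ and peel off the last-applied ribbon $\xi_n$. The adjacent pair gives $\delta_{g^n,g'^n}F^{m,g^n}_{\xi_n}$ with $m=(h'^n)^{-1}h^n$.
\begin{itemize}
\item If $m\neq e$, the term vanishes: the ket has flux $(g^n)^{-1}m^{-1}g^n\neq e$ at $s_n$, while the bra has trivial flux there.
\item If $m=e$, conjugate by $A_f(v_n)$. It commutes with $F_{\xi_k}$ for $k<n$ and fixes $|\mathrm{vac}\rangle$, and by Lemma~\ref{5} it turns $F^{e,g}_{\xi_n}$ into $F^{e,gf^{-1}}_{\xi_n}$. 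So the term is independent of $g$, and since $\sum_g F^{e,g}_{\xi_n}=\mathrm{id}$ it equals $\frac{1}{|G|}$ times the corresponding inner product for $n-1$ ribbons.
\end{itemize}
By induction the Gram matrix is $|G|^{-n}$ times the identity. So the $|G|^{2n}$ ribbon states are orthogonal and nonzero, and by the dimension count they form a basis.
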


It follows from this lemma that the subspaces generated by disjoint ribbons are mutually orthogonal.

\section{General ribbon-commutation relations and protected logical qudits for a finite group}\label{sec:construction}
Restricting to the sectors corresponding to electric excitations, we write $W_{\xi}^{\pi}:=W_{\xi}^{e,\pi}$.
Let
$
\varepsilon\in\operatorname{Irr}_1(G)
$
be a nontrivial one-dimensional irreducible representation of finite order
$
\operatorname{ord}(\varepsilon)=d>1.
$
Set
$
K:=\ker\varepsilon.
$
Then $K\triangleleft G$. Since the image of $\varepsilon$ is cyclic of order $d$, the first isomorphism theorem gives
$
G/K\cong \operatorname{Im}(\varepsilon)\cong \mathbb Z_d.
$
Choose $t\in G$ such that $tK$ generates $G/K$, and set
$
\omega:=\varepsilon(t).
$
After replacing $t$ by a suitable generator if necessary, we may assume that
$
\omega=e^{2\pi i/d}.
$
Hence
\begin{equation}
G=\bigsqcup_{r=0}^{d-1}t^rK,
\qquad
\varepsilon(t^rk)=\omega^r
\quad(k\in K).
\label{eq:coset-decomposition}
\end{equation}
Since $\varepsilon$ is one-dimensional, it factors through the abelianization of $G$, and therefore
$
[G,G]\subseteq K.
$

Moreover, if $G/[G,G]\cong \mathbb Z_d$ and $\varepsilon$ is faithful on the abelianization, then $K=[G,G]$.

\subsection{The representation-theoretic condition}
\begin{definition}
Let
$
\varepsilon\in\operatorname{Irr}_1(G)
$
be a one-dimensional irreducible representation. An irreducible representation
$
\rho\in\operatorname{Irr}(G)
$
is called an $\varepsilon$-stable representation if
$
\rho\otimes\varepsilon\simeq\rho.$
\end{definition}
For an $\varepsilon$-stable representation
$\rho$,
since $\operatorname{ord}(\varepsilon)=d$, this implies
\begin{equation*}
\rho\otimes\varepsilon^j\simeq\rho,\qquad\rho\otimes\varepsilon^{-j}\simeq\rho,
\qquad j=0,1,\ldots,d-1,
\label{eq:rho-stable-powers-fixed}
\end{equation*}

\begin{lemma}
Let $\rho\in\operatorname{Irr}(G)$. Then
$
\rho\in\operatorname{Irr}(G)
$
is  $\varepsilon$-stable
if and only if
\[
\operatorname{Tr}\rho(g)=0,
\qquad g\notin K=\ker\varepsilon.
\]
\end{lemma}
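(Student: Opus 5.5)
The plan is to argue entirely with characters, using the fact that two complex representations of a finite group are isomorphic if and only if their characters coincide. Write $\chi_\rho(g)=\operatorname{Tr}\rho(g)$. Since $\varepsilon$ is one-dimensional, the character of $\rho\otimes\varepsilon$ is simply
\[
\chi_{\rho\otimes\varepsilon}(g)=\varepsilon(g)\,\chi_\rho(g),\qquad g\in G.
\]
Therefore $\rho$ is $\varepsilon$-stable exactly when $\varepsilon(g)\chi_\rho(g)=\chi_\rho(g)$ for all $g\in G$, that is, when $(\varepsilon(g)-1)\chi_\rho(g)=0$ for every $g$.

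For the forward direction, assume $\rho\otimes\varepsilon\simeq\rho$ and take $g\notin K$. By definition of $K=\ker\varepsilon$ we have $\varepsilon(g)\neq 1$. The scalar $\varepsilon(g)-1$ is then nonzero, so the identity above forces $\chi_\rho(g)=0$.

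For the converse, assume $\chi_\rho(g)=0$ for all $g\notin K$. Then the identity $\varepsilon(g)\chi_\rho(g)=\chi_\rho(g)$ holds off $K$ because both sides vanish. On $K$ it holds because $\varepsilon(g)=1$ there. Hence $\chi_{\rho\otimes\varepsilon}=\chi_\rho$. Since $\rho\otimes\varepsilon$ is irreducible (a twist by a one-dimensional character preserves irreducibility, as its character has the same norm), it is isomorphic to $\rho$.

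The only step needing any care is invoking "equal characters imply isomorphic representations". This is standard over $\mathbb C$ for finite groups and follows from the character orthogonality relations already recalled in the preliminaries, so I do not expect any real obstacle. Optionally, one can also remark that by the coset decomposition \eqref{eq:coset-decomposition} the condition is equivalent to $\chi_\rho$ vanishing on the cosets $t^rK$ for $r=1,\dots,d-1$.
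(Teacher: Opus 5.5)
Your proposal is correct and follows essentially the same route as the paper: it writes the character of $\rho\otimes\varepsilon$ as $\varepsilon(g)\operatorname{Tr}\rho(g)$, reduces stability to $(\varepsilon(g)-1)\operatorname{Tr}\rho(g)=0$ for all $g$, and concludes in both directions exactly as the paper does. Your remark that $\rho\otimes\varepsilon$ is irreducible is harmless but not needed, since over $\mathbb{C}$ characters determine every finite-group representation up to isomorphism, not only irreducible ones.
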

\begin{proof}
Since $\varepsilon$ is one-dimensional,
$
\operatorname{Tr}(\rho\otimes\varepsilon)(g)
=\varepsilon(g)\operatorname{Tr}\rho(g).
$
Thus $\rho\otimes\varepsilon\simeq\rho$ if and only if
$
(\varepsilon(g)-1)\operatorname{Tr}\rho(g)=0$
for all
$
 g\in G.
$
If $g\notin K$, then $\varepsilon(g)\neq1$, and hence $\operatorname{Tr}\rho(g)=0$.
Conversely, if $\operatorname{Tr}\rho(g)=0$ for every $g\notin K$, then the equality above holds for every $g\in G$; therefore $\rho\otimes\varepsilon$ and $\rho$ have the same character and are isomorphic.
\end{proof}

The central projection element corresponding to $\rho$ is
$
P_\rho
=
\frac{\dim V_\rho}{|G|}
\sum_{g\in G}
\bigl(\operatorname{Tr}\rho(g^{-1})\bigr)g.
$
For an $\varepsilon$-stable $\rho$, $\operatorname{Tr}\rho(g)=0$ for $g\notin K$, and therefore
\[
P_\rho
=
\frac{\dim V_\rho}{|G|}
\sum_{k\in K}
\bigl(\operatorname{Tr}\rho(k^{-1})\bigr)k.
\]
The corresponding electric charge projector is
$
P_{e,\rho}=\Lambda^*\otimes P_\rho=\delta_e\otimes P_\rho.
$
\subsection{Existence criterion for an \texorpdfstring{$\varepsilon$}{epsilon}-stable representation}
While the preceding lemma characterizes $\varepsilon$-stable irreducible representations, we now establish the conditions for their existence. Recall that $K=\ker\varepsilon$ and $G/K\simeq\mathbb Z_d$. Let $t\in G$ be an element such that the coset $tK$ generates $G/K$. For $\tau\in\operatorname{Irr}(K)$ and $g\in G$, we define the conjugate representation as\[
\tau^g(k):=\tau(g^{-1}kg), \qquad k\in K,
\]and its associated inertia subgroup as\[
I_G(\tau):=\{g\in G\mid \tau^g\simeq\tau\}.
\]

\begin{theorem}[Existence of an $\varepsilon$-stable representation]
There exists an $\varepsilon$-stable irreducible representation
$
\rho\in\operatorname{Irr}(G)
$
if and only if there exists
$
\tau\in\operatorname{Irr}(K)
$
such that
$
I_G(\tau)=K.
$
Equivalently, the action of $G/K\simeq\mathbb Z_d$ on
$\operatorname{Irr}(K)$ contains an orbit of length $d$; i.e.,
$
\tau,\tau^t,\tau^{t^2},\ldots,\tau^{t^{d-1}}
$
are pairwise inequivalent. In this case, one may take
$
\rho=\operatorname{Ind}_K^G\tau,
$
and
$
\dim V_\rho=d\,\dim V_\tau.
$
\end{theorem}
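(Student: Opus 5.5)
The plan is to use Clifford theory for the normal subgroup $K\triangleleft G$ with cyclic quotient $G/K\simeq\mathbb Z_d$, together with the trace criterion of the preceding lemma: $\rho$ is $\varepsilon$-stable if and only if $\operatorname{Tr}\rho$ vanishes off $K$. Since $G/K$ is abelian, $K$ acts trivially on $\operatorname{Irr}(K)$ by conjugation up to isomorphism. Hence $K\subseteq I_G(\tau)$ for every $\tau$, and the $G$-action factors through $G/K=\langle tK\rangle$. The orbit of $\tau$ then has length $[G:I_G(\tau)]$. So $I_G(\tau)=K$ is equivalent to the orbit $\tau,\tau^t,\ldots,\tau^{t^{d-1}}$ having length $d$, i.e.\ to these $d$ representations being pairwise inequivalent. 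This proves the stated equivalence of the two formulations.

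\emph{Sufficiency.} Suppose $I_G(\tau)=K$ and set $\rho=\operatorname{Ind}_K^G\tau$, so that $\dim V_\rho=[G:K]\dim V_\tau=d\dim V_\tau$.

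First I show $\rho$ is irreducible, by Mackey's criterion or Frobenius reciprocity. We have $\langle\rho,\rho\rangle_G=\langle\tau,\operatorname{Res}_K\operatorname{Ind}_K^G\tau\rangle_K$. Because $K$ is normal, $\operatorname{Res}_K\operatorname{Ind}_K^G\tau\simeq\bigoplus_{r=0}^{d-1}\tau^{t^r}$. Pairwise inequivalence of these summands then gives $\langle\rho,\rho\rangle_G=1$.

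Second I show $\varepsilon$-stability, which I would check directly on characters. The induced character formula gives
\[
\operatorname{Tr}\rho(g)=\frac{1}{|K|}\sum_{x\in G,\ x^{-1}gx\in K}\operatorname{Tr}\tau(x^{-1}gx).
\]
If $g\notin K$, then $x^{-1}gx\notin K$ for every $x$ by normality, so the sum is empty and $\operatorname{Tr}\rho(g)=0$. The lemma then gives $\rho\otimes\varepsilon\simeq\rho$. (Alternatively, $\rho\otimes\varepsilon\simeq\operatorname{Ind}_K^G(\tau\otimes\varepsilon|_K)=\operatorname{Ind}_K^G\tau$ because $\varepsilon|_K$ is trivial.)

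\emph{Necessity.} Let $\rho$ be $\varepsilon$-stable and pick an irreducible constituent $\tau$ of $\operatorname{Res}_K\rho$. Clifford's theorem gives $\operatorname{Res}_K\rho\simeq e\bigoplus_{i=1}^{m}\tau^{t^{i}}$, where $m=[G:I_G(\tau)]$ is the orbit length, which divides $d$. I compute $\langle\operatorname{Res}_K\rho,\operatorname{Res}_K\rho\rangle_K$ in two ways.

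\begin{itemize}[label={}]
\item From the Clifford decomposition, $\langle\operatorname{Res}_K\rho,\operatorname{Res}_K\rho\rangle_K=e^2m$.
\item Using that $\operatorname{Tr}\rho$ vanishes off $K$,
\[
\langle\operatorname{Res}_K\rho,\operatorname{Res}_K\rho\rangle_K=\frac{1}{|K|}\sum_{k\in K}|\operatorname{Tr}\rho(k)|^2=\frac{|G|}{|K|}\langle\rho,\rho\rangle_G=d.
\]
\end{itemize}

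Hence $e^2m=d$. In parallel, Frobenius reciprocity gives $\operatorname{Ind}_K^G\tau\simeq\bigoplus_{j=0}^{d-1}\rho\otimes\varepsilon^j$, since the irreducible representations lying over $\tau$ differ by characters of $G/K$. With $\rho\otimes\varepsilon^j\simeq\rho$ this becomes $\operatorname{Ind}_K^G\tau\simeq\rho^{\oplus e}$ in general.

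The main obstacle is ruling out $e>1$. This is where cyclicity of $G/K$ enters: for cyclic quotients $\tau$ extends to its inertia group $I_G(\tau)$, and this forces $e=1$. I would argue it concretely, as follows.

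\begin{itemize}[label={}]
\item Choose $\tilde\tau$ on $I=I_G(\tau)$ extending $\tau$. It exists because $I/K$ is cyclic, generated by $t^{m}$; let $A$ intertwine $\tau$ and $\tau^{t^m}$, and rescale $A$ so that the $d/m$-th power relation matches $\tau(t^d)$.
\item By Clifford correspondence, $\rho=\operatorname{Ind}_I^G\psi$ with $\psi$ irreducible over $\tau$. Gallagher's theorem gives $\psi=\tilde\tau\otimes\lambda$ for $\lambda$ a character of the abelian group $I/K$, so $\operatorname{Res}_K\psi=\tau$ and $e=1$.
\end{itemize}

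Then $m=d$, i.e.\ $I_G(\tau)=K$, and $\rho\simeq\operatorname{Ind}_K^G\tau$, which also yields the dimension formula.
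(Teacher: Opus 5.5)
Your proof is correct and follows essentially the same route as the paper. For sufficiency you use Mackey plus Frobenius reciprocity and the tensor identity; for necessity you use Clifford's theorem to get $e^2[G:I_G(\tau)]=d$, extend $\tau$ over the cyclic quotient $I_G(\tau)/K$, and apply Gallagher's theorem to force $e=1$. The only real variation is that you get $\langle\operatorname{Res}_K^G\rho,\operatorname{Res}_K^G\rho\rangle_K=d$ directly from the vanishing of $\operatorname{Tr}\rho$ off $K$, rather than from $\operatorname{Ind}_K^G\operatorname{Res}_K^G\rho\simeq\rho^{\oplus d}$.

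You should delete the unused ``in parallel'' aside, because $\operatorname{Ind}_K^G\tau\simeq\bigoplus_{j=0}^{d-1}\rho\otimes\varepsilon^j$ is false. Comparing dimensions, it would force $e[G:I_G(\tau)]=1$; for example, when $I_G(\tau)=K$ the left side is $\rho$ itself, not $\rho^{\oplus d}$. The identity you presumably meant is $\operatorname{Ind}_K^G\operatorname{Res}_K^G\rho\simeq\bigoplus_{j=0}^{d-1}\rho\otimes\varepsilon^j$, which is exactly what the paper uses.
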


\begin{proof}
Suppose that $\tau\in\operatorname{Irr}(K)$ satisfies
$I_G(\tau)=K$, and set
$
\rho:=\operatorname{Ind}_K^G\tau.
$
Since $K\triangleleft G$ and $G/K=\langle tK\rangle$, the Mackey decomposition gives
\[
\operatorname{Res}_K^G\operatorname{Ind}_K^G\tau
\simeq
\bigoplus_{r=0}^{d-1}\tau^{t^r}.
\]
By Frobenius reciprocity,
\begin{align*}
\left\langle
\operatorname{Ind}_K^G\tau,
\operatorname{Ind}_K^G\tau
\right\rangle_G
&=
\left\langle
\tau,
\operatorname{Res}_K^G\operatorname{Ind}_K^G\tau
\right\rangle_K\\
&=
\sum_{r=0}^{d-1}\langle\tau,\tau^{t^r}\rangle_K
=1.
\end{align*}
Hence, $\rho\in\operatorname{Irr}(G)$. Since $K=\ker\varepsilon$, we have
$
\operatorname{Res}_K^G\varepsilon=1_K.
$
Using the tensor identity,
\begin{align*}
\rho\otimes\varepsilon
&=(\operatorname{Ind}_K^G\tau)\otimes\varepsilon\\
&\simeq
\operatorname{Ind}_K^G
\left(\tau\otimes\operatorname{Res}_K^G\varepsilon\right)\\
&=
\operatorname{Ind}_K^G\tau
=\rho.
\end{align*}
Thus, $\rho$ is $\varepsilon$-stable, and
$
\dim V_\rho=[G:K]\dim V_\tau=d\,\dim V_\tau.
$

Conversely, suppose that $\rho\in\operatorname{Irr}(G)$ is
$\varepsilon$-stable. Since $\operatorname{ord}(\varepsilon)=d$,
\[
\rho\otimes\varepsilon^j\simeq\rho,
\qquad j=0,1,\ldots,d-1.
\]
Because $G/K\simeq\mathbb Z_d$ and $\varepsilon$ generates its character group,
\[
\operatorname{Ind}_K^G1_K
\simeq
1\oplus\varepsilon\oplus\cdots\oplus\varepsilon^{d-1}.
\]
Hence,
\[
\operatorname{Ind}_K^G\operatorname{Res}_K^G\rho
\simeq
\rho\otimes\operatorname{Ind}_K^G1_K
\simeq
\rho^{\oplus d}.
\]
By Frobenius reciprocity,
\begin{equation}
\left\langle
\operatorname{Res}_K^G\rho,
\operatorname{Res}_K^G\rho
\right\rangle_K=d.
\label{eq:restriction-inner-product}
\end{equation}
Let $\tau\in\operatorname{Irr}(K)$ be an irreducible constituent of
$\operatorname{Res}_K^G\rho$, and set
$I:=I_G(\tau)$.
By Clifford theory,
\[
\operatorname{Res}_K^G\rho
\simeq
e\bigoplus_{x\in G/I}\tau^x
\]
for some positive integer $e$, where the representations $\tau^x$
are pairwise inequivalent. Hence
\[
\left\langle
\operatorname{Res}_K^G\rho,
\operatorname{Res}_K^G\rho
\right\rangle_K
=
e^2[G:I].
\]
Therefore,
$
e^2[G:I]=d.
$

Since $K\subseteq I\subseteq G$ and $G/K\simeq\mathbb Z_d$ is cyclic,
the quotient $I/K$ is cyclic. Hence $\tau$ extends to an irreducible
representation $\widetilde{\tau}$ of $I$. By Clifford correspondence,
there exists ${\sigma}\in\operatorname{Irr}(I)$ such that
$
\rho\simeq\operatorname{Ind}_I^G{\sigma}
$
and  $\operatorname{Res}_K^I{\sigma}\simeq e\tau$.
Since $\tau$ extends to $I$, every irreducible representation of $I$
lying above $\tau$ has the form
\[
{\sigma}\simeq\widetilde{\tau}\otimes\lambda,
\]
where $\lambda\in\operatorname{Irr}(I/K)$. Since $I/K$ is cyclic,
$\lambda$ is one-dimensional and is trivial on $K$. Thus
\[
\operatorname{Res}_K^I{\sigma}\simeq\tau,
\]
so the Clifford multiplicity is $e=1$.

Combining this with~\eqref{eq:restriction-inner-product} gives
\[
[G:I]=d=[G:K].
\]
Since $K\subseteq I$, we conclude that $I=K$.
\end{proof}

\subsection{Ribbon trace operators \texorpdfstring{$W_{\xi}^{\varepsilon^j}$}{W}}

Let $\xi$ be an open ribbon connecting the sites
$
s_0=(v_0,p_0)
$
and
$
s_1=(v_1,p_1).
$
For $j=0,1,\ldots,d-1$, the ribbon trace operator associated with the one-dimensional representation $\varepsilon^j$ is
\[
W_{\xi}^{\varepsilon^j}
=
\sum_{g\in G}
\varepsilon^j(g^{-1})F_{\xi}^{e,g}.
\]
Using~\eqref{eq:coset-decomposition}, we write $g=t^rk$ for $k\in K$. Then
$
\varepsilon^j(g^{-1})=\omega^{-jr},
$
and hence
\[
W_{\xi}^{\varepsilon^j}
=
\sum_{r=0}^{d-1}\omega^{-jr}
\sum_{k\in K}F_{\xi}^{e,t^rk}.
\]

\subsection{The relation between \texorpdfstring{$P_{e,\pi}$}{P} and \texorpdfstring{$W_{\xi}^{\varepsilon^j}$}{W}}
Let $\pi\in\operatorname{Irr}(G)$. Recall that
\[
P_\pi
=
\frac{\dim V_\pi}{|G|}
\sum_{g\in G}
\bigl(\operatorname{Tr}\pi(g^{-1})\bigr)g.
\]
According to \ref{5}, at the initial site $s_0$ of $\xi$,
\[
f\triangleright_{s_0}\circ F_{\xi}^{e,g}
=
F_{\xi}^{e,fg}\circ f\triangleright_{s_0}.
\label{eq:initial-basic-fixed}
\]
Therefore,
\begin{align*}
P_\pi\triangleright_{s_0}W_{\xi}^{\varepsilon^j}
&=
\frac{\dim V_\pi}{|G|}
\sum_{f,g\in G}
\bigl(\operatorname{Tr}\pi(f^{-1})\bigr)
\varepsilon^j(g^{-1})
\,f\triangleright_{s_0}\circ F_{\xi}^{e,g}
\notag\\
&=
\frac{\dim V_\pi}{|G|}
\sum_{f,g\in G}
\bigl(\operatorname{Tr}\pi(f^{-1})\bigr)
\varepsilon^j(g^{-1})
F_{\xi}^{e,fg}\circ f\triangleright_{s_0}.
\end{align*}
Setting $u=fg$, we have
\[
\varepsilon^j(g^{-1})
=
\varepsilon^j(u^{-1})\varepsilon^j(f).
\]
Hence,
\begin{align*}
P_\pi\triangleright_{s_0}W_{\xi}^{\varepsilon^j}
&=
\left(
\sum_{u\in G}\varepsilon^j(u^{-1})F_{\xi}^{e,u}
\right)
\circ
\left[
\frac{\dim V_\pi}{|G|}
\sum_{f\in G}
\varepsilon^j(f)
\bigl(\operatorname{Tr}\pi(f^{-1})\bigr)f
\right]\triangleright_{s_0}.
\end{align*}
The first factor is $W_{\xi}^{\varepsilon^j}$. Moreover, we note
\begin{align*}
\operatorname{Tr}(\pi\otimes\varepsilon^{-j})(f^{-1})
&=
\operatorname{Tr}\pi(f^{-1})\varepsilon^{-j}(f^{-1})
=
\varepsilon^j(f)\operatorname{Tr}\pi(f^{-1}).
\end{align*}
Thus, the second factor is $P_{\pi\otimes\varepsilon^{-j}}$, and therefore
\[
P_\pi\triangleright_{s_0}W_{\xi}^{\varepsilon^j}
=
W_{\xi}^{\varepsilon^j}\circ
P_{\pi\otimes\varepsilon^{-j}}\triangleright_{s_0}.
\]

At the terminal site $s_1$, \ref{5} gives
\[
f\triangleright_{s_1}\circ F_{\xi}^{e,g}
=
F_{\xi}^{e,gf^{-1}}\circ f\triangleright_{s_1}.
\]
Therefore,
\begin{align*}
P_\pi\triangleright_{s_1}W_{\xi}^{\varepsilon^j}
&=
\frac{\dim V_\pi}{|G|}
\sum_{f,g\in G}
\bigl(\operatorname{Tr}\pi(f^{-1})\bigr)
\varepsilon^j(g^{-1})
F_{\xi}^{e,gf^{-1}}\circ f\triangleright_{s_1}.
\end{align*}
Setting $u=gf^{-1}$, we find that
\[
\varepsilon^j(g^{-1})
=
\varepsilon^j(f^{-1})\varepsilon^j(u^{-1}).
\]
Consequently,
\begin{align*}
P_\pi\triangleright_{s_1}W_{\xi}^{\varepsilon^j}
&=
W_{\xi}^{\varepsilon^j}\circ
\left[
\frac{\dim V_\pi}{|G|}
\sum_{f\in G}
\varepsilon^j(f^{-1})
\bigl(\operatorname{Tr}\pi(f^{-1})\bigr)f
\right]\triangleright_{s_1}.
\end{align*}
Since
$
\operatorname{Tr}(\pi\otimes\varepsilon^j)(f^{-1})
=
\varepsilon^j(f^{-1})\operatorname{Tr}\pi(f^{-1}),
$
we obtain
\[
P_\pi\triangleright_{s_1}W_{\xi}^{\varepsilon^j}
=
W_{\xi}^{\varepsilon^j}\circ
P_{\pi\otimes\varepsilon^j}\triangleright_{s_1}.
\]

For the $\delta_e$ part, \ref{5} gives
\[
\delta_e\triangleright_{s_i}\circ F_{\xi}^{e,g}
=
F_{\xi}^{e,g}\circ\delta_e\triangleright_{s_i},
\quad i=0,1.
\]

These observations lead to the following lemma.

\begin{lemma}
Let $\pi\in\operatorname{Irr}(G)$, and let $\varepsilon\in\operatorname{Irr}_1(G)$ be a nontrivial one-dimensional irreducible representation. Then,\[
P_\pi\triangleright_{s_0}W_{\xi}^{\varepsilon^j}
=
W_{\xi}^{\varepsilon^j}\circ
P_{\pi\otimes\varepsilon^{-j}}\triangleright_{s_0},\quad
P_\pi\triangleright_{s_1}W_{\xi}^{\varepsilon^j}
=
W_{\xi}^{\varepsilon^j}\circ
P_{\pi\otimes\varepsilon^j}\triangleright_{s_1}.
\]
If $\pi=\rho$ is an $\varepsilon$-stable representation, we have
$
P_{\rho\otimes\varepsilon^{-j}}=P_{\rho\otimes\varepsilon^{j}}=P_\rho,
$
so
\[
P_\rho\triangleright_{s_i}W_{\xi}^{\varepsilon^j}
=
W_{\xi}^{\varepsilon^j}\circ P_\rho\triangleright_{s_i},\quad i=0,1.
\]
Then we obtain
\begin{equation}
P_{e,\rho}\triangleright_{s_i}W_{\xi}^{\varepsilon^j}
=
W_{\xi}^{\varepsilon^j}\circ
P_{e,\rho}\triangleright_{s_i},
\quad i=0,1.
\label{eq:full-projector-commutation-fixed}
\end{equation}
\end{lemma}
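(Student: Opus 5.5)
The plan is to assemble the lemma from the three computations carried out just before it, so that the proof becomes a matter of organizing them and checking the pieces fit.

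\emph{First step: the commutation at $s_0$.} Expand $P_\pi$ as $\frac{\dim V_\pi}{|G|}\sum_f \operatorname{Tr}\pi(f^{-1})\,f$ and $W_\xi^{\varepsilon^j}$ as $\sum_g\varepsilon^j(g^{-1})F_\xi^{e,g}$. The endpoint relation of Lemma~\ref{5} with $h=e$ gives $f\triangleright_{s_0}\circ F_\xi^{e,g}=F_\xi^{e,fg}\circ f\triangleright_{s_0}$. Reindex by $u=fg$ and use multiplicativity of the one-dimensional character, $\varepsilon^j(g^{-1})=\varepsilon^j(u^{-1})\varepsilon^j(f)$, so the double sum factorizes. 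The second factor is then identified with $P_{\pi\otimes\varepsilon^{-j}}$ via $\operatorname{Tr}(\pi\otimes\varepsilon^{-j})(f^{-1})=\varepsilon^j(f)\operatorname{Tr}\pi(f^{-1})$. Since $\dim(\pi\otimes\varepsilon^{-j})=\dim\pi$, the normalizing constant matches.

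\emph{Second step: the commutation at $s_1$.} The argument is the same, using $f\triangleright_{s_1}\circ F_\xi^{e,g}=F_\xi^{e,gf^{-1}}\circ f\triangleright_{s_1}$, the substitution $u=gf^{-1}$, and the splitting $\varepsilon^j(g^{-1})=\varepsilon^j(f^{-1})\varepsilon^j(u^{-1})$. This yields $P_{\pi\otimes\varepsilon^{j}}$.

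\emph{Third step: the $\varepsilon$-stable case.} If $\pi=\rho$ is $\varepsilon$-stable, iterate $\rho\otimes\varepsilon\simeq\rho$ to get $\rho\otimes\varepsilon^{\pm j}\simeq\rho$. Central idempotents depend only on the character, so $P_{\rho\otimes\varepsilon^{\pm j}}=P_\rho$.

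\emph{Fourth step: the full projector.} For $P_{e,\rho}=\delta_e\otimes P_\rho$, note that the site action of $\delta_e\otimes P_\rho$ is $\delta_e\triangleright_{s_i}\circ P_\rho\triangleright_{s_i}$, by the $D(G)$-module structure theorem. Lemma~\ref{5} with $h=e$ gives $\delta_f\triangleright_{s_0}F_\xi^{e,g}=F_\xi^{e,g}\delta_f\triangleright_{s_0}$ and $\delta_f\triangleright_{s_1}F_\xi^{e,g}=F_\xi^{e,g}\delta_{f}\triangleright_{s_1}$, since $g^{-1}eg=e$. Hence $\delta_e\triangleright_{s_i}$ commutes with every $F_\xi^{e,g}$, and therefore with $W_\xi^{\varepsilon^j}$. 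Commuting the two factors past $W_\xi^{\varepsilon^j}$ one at a time gives \eqref{eq:full-projector-commutation-fixed}.

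The main point needing care is the hypothesis of Lemma~\ref{5} that $s_0,s_1$ are disjoint, so that the endpoint relations hold. This is guaranteed because $\xi$ is open. A second point is checking that reindexing $u=fg$ (respectively $u=gf^{-1}$) for fixed $f$ is a bijection of $G$, so that the factorization is legitimate. Everything else is routine bookkeeping of characters.
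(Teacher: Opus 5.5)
Your proposal is correct and follows the paper's argument essentially step for step. It uses the same endpoint relations of Lemma~\ref{5} with $h=e$, the same reindexings $u=fg$ and $u=gf^{-1}$ with the splitting of the one-dimensional character, and the same observation that $\delta_e\triangleright_{s_i}$ commutes with each $F_\xi^{e,g}$. The only addition is that you spell out the factorization $P_{e,\rho}\triangleright_{s_i}=\delta_e\triangleright_{s_i}\circ P_\rho\triangleright_{s_i}$ used to combine the pieces, which the paper leaves implicit.
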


\subsection{Protected logical qudit}
We now provide a concrete construction of a protected logical qudit in the $D(G)$ Kitaev quantum double model. Assume the lattice on $\Sigma$ is in the ground state, and let $\xi$ be an open ribbon connecting the sites $s_0 = (v_0, p_0)$ and $s_1 = (v_1, p_1)$, as illustrated in the figure. Applying the ribbon operator $W_{\xi}^{e, \rho}$ generates a $\rho$ quasiparticle at $s_0$ and its antiparticle $\rho^*$ at $s_1$.

\[
\begin{tikzpicture}[scale=0.8]
    \draw (0,0) rectangle (6,6);
    \draw (0,2) -- (6,2);
        \draw (0,4) -- (6,4);
    \draw (2,0) -- (2,6);
        \draw (4,0) -- (4,6);
    \fill (0,0) circle (2pt);
    \fill (2,0) circle (2pt);
    \fill (4,0) circle (2pt);
    \fill (6,0) circle (2pt);
    \fill (0,2) circle (2pt);
    \fill (2,2) circle (2pt);
    \fill (4,2) circle (2pt);
    \fill (6,2) circle (2pt);
        \fill (0,4) circle (2pt);
        \fill (0,6) circle (2pt);
    \fill (2,4) circle (2pt);
    \fill (4,4) circle (2pt);
    \fill (6,4) circle (2pt);
        \fill (0,4) circle (2pt);
    \fill (2,4) circle (2pt);
    \fill (4,4) circle (2pt);
    \fill (6,4) circle (2pt);
    \draw [->,ultra thick](0,0) -- (2,0);
    \draw [->,ultra thick](2,0) -- (4,0);
    \draw [->,ultra thick](4,0) -- (6,0);
    \draw [->,ultra thick](0,2) -- (2,2);
    \draw [->,ultra thick](2,2) -- (4,2);
    \draw [->,ultra thick](4,2) -- (6,2);
    \draw [->,ultra thick](0,4) -- (2,4);
    \draw [->,ultra thick](2,4) -- (4,4);
    \draw [->,ultra thick](4,4) -- (6,4);
    \draw [->,ultra thick](0,6) -- (2,6);
    \draw [->,ultra thick](2,6) -- (4,6);
    \draw [->,ultra thick](4,6) -- (6,6);
    \draw [->,ultra thick](0,0) -- (0,2);
    \draw [->,ultra thick](0,2) -- (0,4);
    \draw [->,ultra thick](0,4) -- (0,6);
    \draw [->,ultra thick](2,0) -- (2,2);
    \draw [->,ultra thick](2,2) -- (2,4);
    \draw [->,ultra thick](2,4) -- (2,6);
    \draw [->,ultra thick](4,0) -- (4,2);
    \draw [->,ultra thick](4,2) -- (4,4);
    \draw [->,ultra thick](4,4) -- (4,6);
    \draw [->,ultra thick](6,0) -- (6,2);
    \draw [->,ultra thick](6,2) -- (6,4);
    \draw [->,ultra thick](6,4) -- (6,6);
    \draw[densely dashed,red] (1,1) -- (2,2);
    \draw[densely dashed,red] (3,1) -- (2,2);
    \draw[densely dashed,red] (3,1) -- (4,2);
    \draw[densely dashed,red] (5,1) -- (4,2);
     \node at (1, 1.5) {$s_0$};
     \node at (5, 1.5) {$s_1$};
     \node at (2.5, 2.5) {$\rho$};
     \node at (4.5, 2.5) {$\rho^*$};

\end{tikzpicture}
\]

Next, we apply the ribbon operator $W_{\xi'}^{\rho}$ along another open ribbon $\xi'$ connecting sites $s_2 = (v_2, p_2)$ and $s_3 = (v_3, p_3)$. We define the logical zero state as
\[
|0_L\rangle := W_{\xi'}^{\rho} \circ W_{\xi}^{\rho}|vac\rangle,
\]
in which the $\rho$ quasiparticles and their antiparticles $\rho^*$ occupy four distinct sites. Note that this state satisfies
\[
P_{e,\rho} \triangleright_{s_0} W_{\xi'}^{\rho} \circ W_{\xi}^{\rho} |vac\rangle = W_{\xi'}^{\rho} \circ W_{\xi}^{\rho} |vac\rangle.
\]

\[
\begin{tikzpicture}[scale=0.8]
    \draw (0,0) rectangle (6,6);
    \draw (0,2) -- (6,2);
        \draw (0,4) -- (6,4);
    \draw (2,0) -- (2,6);
        \draw (4,0) -- (4,6);
    \fill (0,0) circle (2pt);
    \fill (2,0) circle (2pt);
    \fill (4,0) circle (2pt);
    \fill (6,0) circle (2pt);
    \fill (0,2) circle (2pt);
    \fill (2,2) circle (2pt);
    \fill (4,2) circle (2pt);
    \fill (6,2) circle (2pt);
        \fill (0,4) circle (2pt);
                \fill (0,6) circle (2pt);
    \fill (2,4) circle (2pt);
    \fill (4,4) circle (2pt);
    \fill (6,4) circle (2pt);
        \fill (0,4) circle (2pt);
    \fill (2,4) circle (2pt);
    \fill (4,4) circle (2pt);
    \fill (6,4) circle (2pt);
    \draw [->,ultra thick](0,0) -- (2,0);
    \draw [->,ultra thick](2,0) -- (4,0);
    \draw [->,ultra thick](4,0) -- (6,0);
    \draw [->,ultra thick](0,2) -- (2,2);
    \draw [->,ultra thick](2,2) -- (4,2);
    \draw [->,ultra thick](4,2) -- (6,2);
    \draw [->,ultra thick](0,4) -- (2,4);
    \draw [->,ultra thick](2,4) -- (4,4);
    \draw [->,ultra thick](4,4) -- (6,4);
    \draw [->,ultra thick](0,6) -- (2,6);
    \draw [->,ultra thick](2,6) -- (4,6);
    \draw [->,ultra thick](4,6) -- (6,6);
    \draw [->,ultra thick](0,0) -- (0,2);
    \draw [->,ultra thick](0,2) -- (0,4);
    \draw [->,ultra thick](0,4) -- (0,6);
    \draw [->,ultra thick](2,0) -- (2,2);
    \draw [->,ultra thick](2,2) -- (2,4);
    \draw [->,ultra thick](2,4) -- (2,6);
    \draw [->,ultra thick](4,0) -- (4,2);
    \draw [->,ultra thick](4,2) -- (4,4);
    \draw [->,ultra thick](4,4) -- (4,6);
    \draw [->,ultra thick](6,0) -- (6,2);
    \draw [->,ultra thick](6,2) -- (6,4);
    \draw [->,ultra thick](6,4) -- (6,6);
    \draw[densely dashed,red] (1,1) -- (2,2);
    \draw[densely dashed,red] (3,1) -- (2,2);
    \draw[densely dashed,red] (3,1) -- (4,2);
    \draw[densely dashed,red] (5,1) -- (4,2);
     \node at (1, 1.5) {$s_0$};
     \node at (5, 1.5) {$s_1$};
     \node at (2.5, 2.5) {$\rho$};
     \node at (4.5, 2.5) {$\rho^*$};
    \draw[densely dashed,blue] (1,5) -- (2,4);
    \draw[densely dashed,blue] (3,5) -- (2,4);
    \draw[densely dashed,blue] (3,5) -- (4,4);
    \draw[densely dashed,blue] (5,5) -- (4,4);
     \node at (1, 4.5) {$s_2$};
     \node at (5, 4.5) {$s_3$};
     \node at (2.5, 3.5) {$\rho$};
     \node at (4.5, 3.5) {$\rho^*$};
\end{tikzpicture}
\]

Finally, we consider applying the ribbon operator $W_{\xi''}^{\varepsilon^j}$ along the open ribbon $\xi''$ connecting sites $s_0 = (v_0, p_0)$ and $s_2 = (v_2, p_2)$, and define
\[
|j_L\rangle
:=
W_{\xi''}^{\varepsilon^j}|0_L\rangle,
\qquad j=0,1,\ldots,d-1.
\]

\[
\begin{tikzpicture}[scale=0.8]
    \draw (0,0) rectangle (6,6);
    \draw (0,2) -- (6,2);
        \draw (0,4) -- (6,4);
    \draw (2,0) -- (2,6);
        \draw (4,0) -- (4,6);
    \fill (0,0) circle (2pt);
    \fill (2,0) circle (2pt);
    \fill (4,0) circle (2pt);
    \fill (6,0) circle (2pt);
    \fill (0,2) circle (2pt);
    \fill (2,2) circle (2pt);
    \fill (4,2) circle (2pt);
    \fill (6,2) circle (2pt);
        \fill (0,4) circle (2pt);
                \fill (0,6) circle (2pt);
    \fill (2,4) circle (2pt);
    \fill (4,4) circle (2pt);
    \fill (6,4) circle (2pt);
        \fill (0,4) circle (2pt);
    \fill (2,4) circle (2pt);
    \fill (4,4) circle (2pt);
    \fill (6,4) circle (2pt);
    \draw [->,ultra thick](0,0) -- (2,0);
    \draw [->,ultra thick](2,0) -- (4,0);
    \draw [->,ultra thick](4,0) -- (6,0);
    \draw [->,ultra thick](0,2) -- (2,2);
    \draw [->,ultra thick](2,2) -- (4,2);
    \draw [->,ultra thick](4,2) -- (6,2);
    \draw [->,ultra thick](0,4) -- (2,4);
    \draw [->,ultra thick](2,4) -- (4,4);
    \draw [->,ultra thick](4,4) -- (6,4);
    \draw [->,ultra thick](0,6) -- (2,6);
    \draw [->,ultra thick](2,6) -- (4,6);
    \draw [->,ultra thick](4,6) -- (6,6);
    \draw [->,ultra thick](0,0) -- (0,2);
    \draw [->,ultra thick](0,2) -- (0,4);
    \draw [->,ultra thick](0,4) -- (0,6);
    \draw [->,ultra thick](2,0) -- (2,2);
    \draw [->,ultra thick](2,2) -- (2,4);
    \draw [->,ultra thick](2,4) -- (2,6);
    \draw [->,ultra thick](4,0) -- (4,2);
    \draw [->,ultra thick](4,2) -- (4,4);
    \draw [->,ultra thick](4,4) -- (4,6);
    \draw [->,ultra thick](6,0) -- (6,2);
    \draw [->,ultra thick](6,2) -- (6,4);
    \draw [->,ultra thick](6,4) -- (6,6);
    \draw[densely dashed,red] (1,1) -- (2,2);
    \draw[densely dashed,red] (3,1) -- (2,2);
    \draw[densely dashed,red] (3,1) -- (4,2);
    \draw[densely dashed,red] (5,1) -- (4,2);
     \node at (1, 1.5) {$s_0$};
     \node at (5, 1.5) {$s_1$};
     \node at (2.5, 2.5) {$\rho$};
     \node at (4.5, 2.5) {$\rho^*$};
    \draw[densely dashed,blue] (1,5) -- (2,4);
    \draw[densely dashed,blue] (3,5) -- (2,4);
    \draw[densely dashed,blue] (3,5) -- (4,4);
    \draw[densely dashed,blue] (5,5) -- (4,4);
     \node at (1, 4.5) {$s_2$};
     \node at (5, 4.5) {$s_3$};
     \node at (2.5, 3.5) {$\rho$};
     \node at (4.5, 3.5) {$\rho^*$};
    \draw[densely dashed,green] (1,3) -- (2,2);
    \draw[densely dashed,green] (1,3) -- (2,4);
        \draw[densely dashed,green] (1,1) -- (2,2);
            \draw[densely dashed,green] (1,5) -- (2,4);

\end{tikzpicture}
\]

Since
$
P_{e,\rho}\triangleright_{s_0}|0_L\rangle=|0_L\rangle,
$
using~\eqref{eq:full-projector-commutation-fixed} gives
\begin{align*}
P_{e,\rho}\triangleright_{s_0}|j_L\rangle
&=
P_{e,\rho}\triangleright_{s_0}W_{\xi''}^{\varepsilon^j}|0_L\rangle
=
W_{\xi''}^{\varepsilon^j}\circ
P_{e,\rho}\triangleright_{s_0}|0_L\rangle \notag\\
&=
W_{\xi''}^{\varepsilon^j}|0_L\rangle
=|j_L\rangle.
\end{align*}
Similarly, we have
$
P_{e,\rho}\triangleright_{s_2}|j_L\rangle=|j_L\rangle.
$

Since $\xi''$ has no endpoint at $s_1$ or $s_3$, \ref{5} gives
\[
P_{e,\rho^*}\triangleright_{s_1}|j_L\rangle=|j_L\rangle,
\qquad
P_{e,\rho^*}\triangleright_{s_3}|j_L\rangle=|j_L\rangle.
\]
Thus, all $|j_L\rangle$ occupy the same local quasiparticle sectors as $|0_L\rangle$.

By the orthogonality of the projection operators, for any
$
(\mathcal{C},\pi)\neq(e,\rho),
$
we have
\[
P_{\mathcal{C},\pi}\triangleright_{s_0}|j_L\rangle
=
P_{\mathcal{C},\pi}\triangleright_{s_2}|j_L\rangle
=0.
\]

\begin{lemma}
For any $(\mathcal{C}, \pi) \neq (e,\rho)$, $P_{\mathcal{C},\pi}\triangleright_{s_0}|j_L\rangle = P_{\mathcal{C},\pi}\triangleright_{s_2}|j_L\rangle = 0$, and $P_{e,\rho}\triangleright_{s_0}|j_L\rangle = P_{e,\rho}\triangleright_{s_2}|j_L\rangle =|j_L\rangle$.
\end{lemma}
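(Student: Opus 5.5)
The plan is to establish the two stabilization identities first and then obtain the vanishing statements as a formal consequence of Lemma~\ref{1}.

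\textbf{Stabilization at $s_0$.} I would first check that $|0_L\rangle$ is stabilized by $P_{e,\rho}\triangleright_{s_0}$. The ribbon $\xi'$ has no endpoint at $s_0$, and $s_0$ is disjoint from $s_2,s_3$. By the first part of Lemma~\ref{5}, every vertex and face operator at $s_0$ commutes with $W_{\xi'}^{\rho}$, so
\[
P_{e,\rho}\triangleright_{s_0}|0_L\rangle=W_{\xi'}^{\rho}\,P_{e,\rho}\triangleright_{s_0}W_{\xi}^{\rho}|\text{vac}\rangle .
\]
By the minimal Bell state discussion, $W_{\xi}^{\rho}|\text{vac}\rangle$ is proportional to $|\text{Bell};e,\rho,\xi\rangle=P_{e,\rho}\triangleright_{s_0}|\text{Bell},\xi\rangle$. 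Since $P_{e,\rho}$ is idempotent, this state is fixed by $P_{e,\rho}\triangleright_{s_0}$.

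\textbf{Stabilization at $s_2$.} The same argument applies with the roles of $\xi$ and $\xi'$ exchanged. The operators $W_{\xi}^{\rho}$ and $W_{\xi'}^{\rho}$ act on disjoint supports, so they commute, and $s_2$ is the initial site of $\xi'$.

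\textbf{Passing to $|j_L\rangle$.} Next I would apply~\eqref{eq:full-projector-commutation-fixed} to the ribbon $\xi''$, whose endpoints are $s_0$ and $s_2$. Because $\rho$ is $\varepsilon$-stable, this gives
\[
P_{e,\rho}\triangleright_{s_i}W_{\xi''}^{\varepsilon^j}=W_{\xi''}^{\varepsilon^j}P_{e,\rho}\triangleright_{s_i},\qquad i\in\{0,2\}.
\]
Combined with the previous two steps, this yields $P_{e,\rho}\triangleright_{s_i}|j_L\rangle=|j_L\rangle$ for $i=0,2$.

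\textbf{Main obstacle.} The hard step is justifying the commutation at $s_2$. The identity~\eqref{eq:full-projector-commutation-fixed} was derived for the sites $s_0,s_1$ of a ribbon disjoint from other excitations, and here $s_0,s_2$ already carry the $\rho$-charges created by $\xi$ and $\xi'$. I would argue that Lemma~\ref{5} is an operator identity that holds regardless of the state it acts on. Its derivation only requires $s_0$ and $s_2$ to be disjoint, which they are by construction. Hence the computation preceding~\eqref{eq:full-projector-commutation-fixed} transfers verbatim, with $s_1$ replaced by $s_2$, and uses $P_{\rho\otimes\varepsilon^{j}}=P_\rho$ at the terminal endpoint.

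\textbf{Vanishing of the other projectors.} Finally, for $(\mathcal C,\pi)\neq(e,\rho)$, Lemma~\ref{1} gives $P_{\mathcal C,\pi}P_{e,\rho}=0$ as elements of $D(G)$. Since $\triangleright_{s}$ is an algebra representation by the $D(G)$-module structure theorem, for $i\in\{0,2\}$ we get
\[
P_{\mathcal C,\pi}\triangleright_{s_i}|j_L\rangle=P_{\mathcal C,\pi}\triangleright_{s_i}P_{e,\rho}\triangleright_{s_i}|j_L\rangle=(P_{\mathcal C,\pi}P_{e,\rho})\triangleright_{s_i}|j_L\rangle=0 .
\]
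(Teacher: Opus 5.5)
Your proof is correct and follows the same route as the paper. You stabilize $|0_L\rangle$ by $P_{e,\rho}$ at $s_0$ and $s_2$, transfer this to $|j_L\rangle$ through the ribbon--projector commutation relation for $W_{\xi''}^{\varepsilon^j}$ (valid at both endpoints because $\rho$ is $\varepsilon$-stable), and finish with the orthogonality of central projectors from Lemma~\ref{1}; beyond the paper, you also justify why $W_{\xi}^{\rho}|\mathrm{vac}\rangle$ is fixed by $P_{e,\rho}\triangleright_{s_0}$ (minimal Bell state plus idempotency) and why the commutation identity applies with $s_2$ as the terminal site of $\xi''$ (it is an operator identity needing only that $s_0$ and $s_2$ be disjoint).
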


At $s_1$ and $s_3$, the analogous statement is obtained by replacing $\rho$ by $\rho^*$.

We identify
$
X_L:=W_{\xi''}^{\varepsilon}.
$
By \ref{2}, $W_{\xi''}^{\varepsilon}\circ W_{\xi''}^{\varepsilon^j} = W_{\xi''}^{\varepsilon^{j+1}}$, and hence
\[
X_L|j_L\rangle =|((j+1)\bmod d)_L\rangle.
\]
Moreover,
$
X_L^d
=
W_{\xi''}^{\varepsilon^d}
=
W_{\xi''}^{1}
=Id.
$
Next, we verify the orthogonality of $|j_L\rangle$ and $|k_L\rangle$:
\begin{align*}
 \langle j_L | k_L \rangle &= \langle \mathrm{vac} | W_{\xi}^{\rho^\dagger} \circ W_{\xi'}^{\rho^\dagger} \circ W_{\xi''}^{{\varepsilon^{j}}^\dagger}\circ W_{\xi''}^{\varepsilon^{k}}  \circ W_{\xi'}^{\rho} \circ W_{\xi}^{\rho} | \mathrm{vac} \rangle \notag\\
& = \langle \mathrm{vac} | W_{\xi''}^{{\varepsilon}^{k-j}} \circ  W_{\xi'}^{\rho^\dagger} \circ W_{\xi'}^{\rho} \circ W_{\xi}^{\rho^\dagger} \circ W_{\xi}^{\rho} | \mathrm{vac} \rangle \notag\\
& = \langle \mathrm{vac} | W_{\xi''}^{{\varepsilon}^{k-j}}  \circ W_{\xi'}^{\rho^* \otimes \rho} \circ W_{\xi}^{\rho^* \otimes \rho} | \mathrm{vac} \rangle.
\end{align*}
This follows because for any ribbons $\xi, \xi'$ and $g, g' \in G$, we have
$
[F_{\xi}^{e,g}, F_{\xi^{\prime}}^{e,g^{\prime}}] = 0.
$
and the third equality is derived from \ref{2} and \ref{3}. Furthermore,  we note that $ W_{\xi''}^{{\varepsilon}^{j-k}}  | \mathrm{vac} \rangle \in \mathcal{L}(s_0, s_2) $, while $ W_{\xi'}^{\rho^* \otimes \rho} \circ W_{\xi}^{\rho^* \otimes \rho} | \mathrm{vac} \rangle \notin \mathcal{L}(s_0, s_2)$. Thus, by \ref{4}, we obtain $\langle j_L | k_L \rangle  = 0.$

Define
\[
\mathcal H_L:=\operatorname{span}\{|0_L\rangle,|1_L\rangle,\ldots,|(d-1)_L\rangle\}.
\]
Then $\dim\mathcal H_L=d$.

\begin{theorem}[Logical qudit and quasiparticle-sector detection]
Let $G$ be a finite group and let
$
\varepsilon\in\operatorname{Irr}_1(G)
$
with $\operatorname{ord}(\varepsilon)=d>1$. Suppose
$
\rho\in\operatorname{Irr}(G)
$
is $\varepsilon$-stable.
Then $\mathcal H_L$ is a $d$-dimensional protected logical subspace. Its logical $X$ gate is
$
X_L=W_{\xi''}^{\varepsilon},
$
with
\[
X_L|j_L\rangle=|((j+1)\bmod d)_L\rangle,
\qquad
X_L^d=Id.
\]
Moreover, for $i=0,2$,
\[
P_{e,\rho}\triangleright_{s_i}|j_L\rangle
=
|j_L\rangle,
\quad j=0,\ldots,d-1,
\]
whereas
\[
P_{\mathcal{C},\pi}\triangleright_{s_i}|j_L\rangle
=
0
\]
for every
$
(\mathcal{C},\pi)\neq(e,\rho).
$

Therefore, any error which takes the logical subspace from the $(e,\rho)$ sector into a different quasiparticle sector is detectable by the corresponding quasiparticle-sector measurement.
\end{theorem}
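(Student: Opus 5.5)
The plan is to assemble the theorem from the ingredients already established in this section, in four steps.

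\textbf{Step 1: sector stabilization.} First I will record that $|0_L\rangle=W_{\xi'}^{\rho}\circ W_{\xi}^{\rho}|vac\rangle$ lies in the $(e,\rho)$ sector at $s_0$ and $s_2$. The reason is that the left site action on $W_\xi^{\rho}|vac\rangle$ realizes the minimal Bell state $|\text{Bell};e,\rho,\xi\rangle$, up to normalization. The ribbon $\xi'$ has no endpoint at $s_0$, so by Lemma~\ref{5} it commutes with the site action at $s_0$; symmetrically, $\xi$ commutes with the site action at $s_2$.

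Next I apply the commutation relation \eqref{eq:full-projector-commutation-fixed}. Here $\varepsilon$-stability is used, in the form $P_{\rho\otimes\varepsilon^{\pm j}}=P_\rho$. This lets $P_{e,\rho}\triangleright_{s_i}$ pass through $W_{\xi''}^{\varepsilon^j}$ for $i=0,2$, which gives $P_{e,\rho}\triangleright_{s_i}|j_L\rangle=|j_L\rangle$. Lemma~\ref{1} (orthogonality and completeness of the central projectors) then yields $P_{\mathcal C,\pi}\triangleright_{s_i}|j_L\rangle=P_{\mathcal C,\pi}P_{e,\rho}\triangleright_{s_i}|j_L\rangle=0$ for all $(\mathcal C,\pi)\neq(e,\rho)$. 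This is exactly the lemma preceding the theorem, and I would simply cite it.

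\textbf{Step 2: the logical $X$ gate.} Lemma~\ref{2} gives $W_{\xi''}^{\varepsilon}\circ W_{\xi''}^{\varepsilon^j}=W_{\xi''}^{\varepsilon^{j+1}}$. Since $\varepsilon^d=1$ and $W_{\xi''}^{1}=\sum_g F_{\xi''}^{e,g}=\mathrm{Id}$ (by completeness of the $\delta_g$ part of the ribbon algebra), we get $X_L|j_L\rangle=|(j+1 \bmod d)_L\rangle$ and $X_L^d=\mathrm{Id}$.

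\textbf{Step 3: dimension.} I will show that the $|j_L\rangle$ are nonzero and pairwise orthogonal. Nonvanishing comes from $|0_L\rangle$ being a nonzero multiple of a product of minimal Bell states on disjoint ribbons; by Lemma~\ref{4} the group bases on disjoint sites are orthogonal and span spaces of the expected dimension. Each $X_L$ is unitary, since $W^{\varepsilon\dagger}=W^{\varepsilon^{-1}}$ by Lemma~\ref{3}, so $|j_L\rangle\neq0$ for every $j$.

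For orthogonality I follow the computation given above: commute the $F^{e,g}$ operators past each other, reduce $\langle j_L|k_L\rangle$ to $\langle vac|W_{\xi''}^{\varepsilon^{k-j}}\,W_{\xi'}^{\rho^*\otimes\rho}W_{\xi}^{\rho^*\otimes\rho}|vac\rangle$, and expand everything in the group basis of $\mathcal L(s_0,s_1,s_2,s_3)$.

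This is the step I expect to be the main obstacle. The quoted argument about membership in $\mathcal L(s_0,s_2)$ is loose, so I would instead argue with an eigenvalue. The operator $W_{\xi''}^{\varepsilon}$ acts as a charge-$\varepsilon$ pair creator, so I would try to exhibit a unitary operator that commutes with the state-preparation operators and distinguishes the $|j_L\rangle$ by eigenvalue. Concretely, take the closed-ribbon charge measurement of $\varepsilon$ on a loop enclosing $s_0$ alone, or the site action of $t$ restricted to the $\varepsilon$-isotypic part. By the coset decomposition \eqref{eq:coset-decomposition}, this operator should have eigenvalue $\omega^{j}$ on $|j_L\rangle$, and distinct eigenvalues of a unitary operator force orthogonality. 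Then $\dim\mathcal H_L=d$.

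\textbf{Step 4: error detection.} Let $E$ be an error with $P_{e,\rho}\triangleright_{s_i}E|\psi\rangle=0$ for $\psi\in\mathcal H_L$, meaning $E$ moves the state out of the $(e,\rho)$ sector. For such an error, measuring the observable $\Gamma_{s_i}$ returns a label different from $(e,\rho)$ with certainty on that component, whereas every logical state gives $(e,\rho)$ deterministically. This is immediate from Step 1.
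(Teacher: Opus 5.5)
Steps 1, 2 and 4 reproduce the paper's argument, and you are right that the paper's orthogonality step (``$\in\mathcal L(s_0,s_2)$ versus $\notin\mathcal L(s_0,s_2)$'') is loose. The eigenvalue argument you substitute in Step 3 fails, however. Both operators you propose act at $s_0$ alone, and the $|j_L\rangle$ cannot be separated there.

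By your own Step 1, every $|j_L\rangle$ lies in the $(e,\rho)$ sector at $s_0$. So a charge measurement on a loop enclosing only $s_0$ returns $\rho$ for every $j$. The projector $P_{e,\varepsilon}\triangleright_{s_0}$ annihilates all of them, so their ``$\varepsilon$-isotypic part'' at $s_0$ is zero. A closed chargeon loop $W^{\varepsilon}$ around $s_0$ only sees flux, so it acts as the identity on these flux-free states.

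Nor does $t\triangleright_{s_0}$ commute with the preparation operator $W^{\rho}_{\xi}$. Lemma~\ref{5} gives $f\triangleright_{s_0}\circ W^{\rho}_{\xi}=\sum_{u}\operatorname{Tr}\rho(u^{-1}f)\,F^{e,u}_{\xi}\circ f\triangleright_{s_0}$ and $f\triangleright_{s_0}\circ W^{\varepsilon^j}_{\xi''}=\varepsilon^j(f)\,W^{\varepsilon^j}_{\xi''}\circ f\triangleright_{s_0}$. Hence $\langle j_L|\,f\triangleright_{s_0}\,|j_L\rangle=c\,\varepsilon^{j}(f)\operatorname{Tr}\rho(f)$, with $c>0$ independent of $j$ and $f$. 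For $f=t$ this vanishes, so $|j_L\rangle$ is not even an eigenvector of the unitary $t\triangleright_{s_0}$. For general $f$ it is independent of $j$, because $\operatorname{Tr}\rho$ vanishes off $K$. Thus $\varepsilon$-stability makes the logical states indistinguishable by the site algebra at $s_0$, which is exactly what protection should mean. No such operator can have eigenvalue $\omega^j$ on $|j_L\rangle$.

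The missing idea is that $j$ labels the fusion channel $\varepsilon^{j}\subset\rho\otimes\rho^*$ of the pair at $s_0,s_1$. The distinguishing operator must therefore act on both ends of $\xi$: a loop around $s_0$ and $s_1$, but not around $s_2,s_3$. Take $U:=t\triangleright_{s_0}\circ t\triangleright_{s_1}$, which permutes the group basis and is therefore unitary. Then:
\begin{itemize}
\item By Lemma~\ref{5}, $U\circ F^{e,g}_{\xi}=F^{e,tgt^{-1}}_{\xi}\circ U$, so $U$ commutes with $W^{\rho}_{\xi}$ because $\operatorname{Tr}\rho$ is a class function.
\item $U$ commutes with $W^{\rho}_{\xi'}$, and $U|vac\rangle=|vac\rangle$.
\item Since $s_1$ is not an endpoint of $\xi''$, $U\circ W^{\varepsilon^j}_{\xi''}=\omega^{j}\,W^{\varepsilon^j}_{\xi''}\circ U$.
\end{itemize}
Hence $U|j_L\rangle=\omega^{j}|j_L\rangle$. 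Eigenvectors of a unitary with distinct eigenvalues are orthogonal, giving $\langle j_L|k_L\rangle=0$ for $j\neq k$.

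The same mechanism repairs the paper's route. $A(v_1)A(v_3)$ sends $W^{\rho^*\otimes\rho}_{\xi'}W^{\rho^*\otimes\rho}_{\xi}|vac\rangle$ to $|vac\rangle$, so its component in $\mathcal L(s_0,s_2)$ is the vacuum. Moreover $\langle vac|W^{\varepsilon^{m}}_{\xi''}|vac\rangle=0$ for $m\not\equiv 0 \pmod d$.

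Note that this orthogonality argument uses nothing about $\rho$. $\varepsilon$-stability enters only through the sector statements of Step 1.
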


\subsection{Symmetric groups, alternating groups, and arbitrary logical qudits}

We now apply the preceding criterion to the symmetric and alternating groups,
and then construct an explicit semidirect-product family realizing arbitrary
logical qudits.

\begin{corollary}[Symmetric groups]
For every $n\geq3$, the group $S_n$ admits a
$\operatorname{sgn}$-stable irreducible representation. Consequently, the
$D(S_n)$ model admits the above logical-qubit construction with
\[
X_L=W_{\xi''}^{\operatorname{sgn}}, \qquad X_L^2=Id.
\]
\end{corollary}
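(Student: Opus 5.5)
We need to show that $S_n$ for $n\ge3$ has an irreducible $\rho$ with $\rho\otimes\operatorname{sgn}\simeq\rho$. The logical-qubit statement then follows directly from the Logical qudit theorem with $d=2$, $\varepsilon=\operatorname{sgn}$ and $K=A_n$. By the existence theorem, we must produce $\tau\in\operatorname{Irr}(A_n)$ with $I_{S_n}(\tau)=A_n$, i.e.\ $\tau^{t}\not\simeq\tau$ for a transposition $t$. Equivalently, by the trace lemma, we need $\rho$ whose character vanishes on all odd permutations.

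\textbf{Approach via partitions.} I will use the standard fact that $\chi^\lambda\otimes\operatorname{sgn}=\chi^{\lambda'}$, where $\lambda'$ is the conjugate partition. So an $\operatorname{sgn}$-stable irreducible exists exactly when there is a self-conjugate partition $\lambda=\lambda'$ of $n$. The key step is to construct one for every $n\ge3$. I would first try hooks and near-hooks.
\begin{itemize}
\item For odd $n=2m+1$, take the hook $(m+1,1^m)$. It is self-conjugate.
\item For even $n$, hooks fail. For $n=4$, take $(2,2)$. For even $n=2m\ge 6$, take $(m-1,2,1^{m-3})$. Its conjugate has first column length $1+1+(m-3)+\ldots$; I would verify this carefully. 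The cleaner route is to use Durfee-square/diagonal-hook decompositions: a self-conjugate partition of $n$ corresponds to a partition of $n$ into distinct odd parts. For even $n\ge4$, write $n=(n-3)+3$ when $n-3\ge5$, i.e.\ $n\ge8$, using the distinct odd parts $n-3$ and $3$. For $n=4$ use $3+1$, and for $n=6$ use $5+1$.
\end{itemize}
In general, $n=(n-1)+1$ works for all even $n\ge4$, since $n-1\ge3>1$ gives two distinct odd parts. For odd $n$ the single part $n$ suffices. This yields the required $\lambda$ for every $n\ge3$.

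\textbf{Alternative (if avoiding the conjugate-partition fact).} Argue that the number of $\operatorname{sgn}$-stable irreducibles equals the number of $S_n$-conjugacy classes contained in $A_n$ that split in $A_n$. Such splitting classes are exactly the cycle types with distinct odd parts, which again exist for $n\ge3$. This matches the Clifford-theory criterion $I_{S_n}(\tau)=A_n$ of the existence theorem.

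\textbf{Conclusion.} With $\rho=\chi^\lambda$ for self-conjugate $\lambda$, the Logical qudit theorem with $d=\operatorname{ord}(\operatorname{sgn})=2$ gives $X_L=W_{\xi''}^{\operatorname{sgn}}$. Lemma \ref{2} gives $X_L^2=W_{\xi''}^{\operatorname{sgn}^2}=W_{\xi''}^{1}=Id$.

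The main obstacle is justifying $\chi^\lambda\otimes\operatorname{sgn}=\chi^{\lambda'}$. I would cite it as a standard fact of the representation theory of $S_n$ (for example via Young symmetrizers, where rows and columns are swapped) rather than reprove it.
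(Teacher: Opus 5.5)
Your proof is correct and is essentially the paper's: both use $S^\lambda\otimes\operatorname{sgn}\simeq S^{\lambda^t}$ and exhibit a self-conjugate partition, with the same hook $(m+1,1^m)$ for odd $n$, and the diagonal-hook partition you build from $(n-1)+1$ is exactly the paper's $(m,2,1^{m-2})$ for $n=2m$. Delete your first even-case guess $(m-1,2,1^{m-3})$, which is a partition of $2m-2$ rather than $2m$.
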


\begin{proof}
For $n\geq2$,
\[
[S_n,S_n]=A_n, \qquad S_n/A_n\simeq\mathbb Z_2,
\]
so the only nontrivial one-dimensional irreducible representation is the sign
representation
\[
\varepsilon=\operatorname{sgn}, \qquad \operatorname{ord}(\operatorname{sgn})=2.
\]
The irreducible representations of $S_n$ are the Specht modules $S^\lambda$
indexed by partitions $\lambda\vdash n$, and
\[
S^\lambda\otimes\operatorname{sgn}\simeq S^{\lambda^t}.
\]
Thus, $S^\lambda$ is $\operatorname{sgn}$-stable exactly when
$\lambda=\lambda^t$. Such a self-conjugate partition exists for every
$n\geq3$. If $n=2m+1$ is odd, take
$
\lambda=(m+1,1^m),
$
and if $n=2m$ with $m\geq2$, take
$
\lambda=(m,2,1^{m-2}).
$
Both cases are self-conjugate. Hence, an $\operatorname{sgn}$-stable irreducible
representation exists, and the logical operator has order $2$.
\end{proof}

Next, we consider the alternating groups; in particular, we show that $A_4$ is the unique case realizing a logical qutrit.

\begin{corollary}[Alternating groups]
Among the alternating groups $A_n$ with $n\geq3$, $A_4$ is the unique group that supports a
nontrivial logical qudit by the present $\varepsilon$-stable-representation
mechanism. The group $A_4$ yields a logical qutrit, whereas $A_3$ and
$A_n$ for $n\geq5$ do not satisfy the required conditions.
\end{corollary}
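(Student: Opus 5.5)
The mechanism needs a nontrivial one-dimensional representation $\varepsilon$ of $A_n$, together with an $\varepsilon$-stable irreducible representation. So the plan is to first decide when $\operatorname{Irr}_1(A_n)$ contains a nontrivial element, and then apply the existence theorem (an orbit of length $d$ of $G/K$ on $\operatorname{Irr}(K)$) in the surviving cases. One-dimensional representations factor through the abelianization, so everything reduces to computing $A_n/[A_n,A_n]$.

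\emph{Step 1 ($n\ge5$).} Here $A_n$ is simple and non-abelian, so $[A_n,A_n]=A_n$ and the only one-dimensional representation is the trivial one. There is then no $\varepsilon$ with $\operatorname{ord}(\varepsilon)=d>1$, and the construction does not apply.

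\emph{Step 2 ($n=3$).} We have $A_3\simeq\mathbb Z_3$, which has nontrivial characters $\varepsilon$ of order $3$. However, every irreducible representation of $A_3$ is one-dimensional, so $\rho\otimes\varepsilon\simeq\rho$ would force $\varepsilon$ to be trivial. Equivalently, in the existence theorem $K=\ker\varepsilon=\{e\}$, and $\operatorname{Irr}(K)=\{1\}$ consists of a single point fixed by $G/K$, so there is no orbit of length $3$. The lemma characterizing $\varepsilon$-stable representations by traces gives the same conclusion: $\operatorname{Tr}\rho(g)=0$ off $K$ is impossible for a character of degree one.

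\emph{Step 3 ($n=4$).} We have $[A_4,A_4]=V_4=\{e,(12)(34),(13)(24),(14)(23)\}$ and $A_4/V_4\simeq\mathbb Z_3$. Take $\varepsilon$ with kernel $K=V_4$, $d=3$, and $t=(123)$. Then $\operatorname{Irr}(V_4)$ consists of the trivial character and three nontrivial characters. Conjugation by $t$ cyclically permutes the three involutions of $V_4$, and hence cyclically permutes the three nontrivial characters. Any nontrivial $\tau$ therefore has $I_G(\tau)=V_4$, so the theorem yields $\rho=\operatorname{Ind}_{V_4}^{A_4}\tau$ of dimension $3$. As a check, this is the standard $3$-dimensional irreducible representation of $A_4$, whose character vanishes on the $3$-cycles, that is, off $K$.

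By the logical-qudit theorem, $D(A_4)$ then gives a logical qutrit with $X_L=W_{\xi''}^{\varepsilon}$ and $X_L^3=Id$; the choice $\varepsilon^2$ gives the same qutrit structure. The main obstacle is not hard, but it is the one point needing care: the $n=3$ case, where a nontrivial $\varepsilon$ does exist, so the failure must be attributed to the absence of a stable $\rho$ rather than to the absence of $\varepsilon$.
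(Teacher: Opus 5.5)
Your proposal is correct and matches the paper's proof essentially step for step. Perfectness of $A_n$ for $n\ge5$ rules out a nontrivial $\varepsilon$, and one-dimensionality of every irreducible representation of $A_3\simeq\mathbb Z_3$ rules out an $\varepsilon$-stable $\rho$; for $A_4$, the three nontrivial characters of $V_4=[A_4,A_4]$ form a single orbit of length $3$ under $A_4/V_4$, so the existence theorem gives the $3$-dimensional $\rho=\operatorname{Ind}_{V_4}^{A_4}\tau\simeq\chi_3$. Your extra checks for $n=3$ (through the orbit criterion and the trace lemma) and your explicit verification that $(123)$ cyclically permutes the involutions are consistent additions, not a different route.
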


\begin{proof}
For $A_4$, as detailed in the appendix, we have
\[
[A_4,A_4]=\{e,u_1,u_2,u_3\}=:U, \qquad A_4/U\simeq\mathbb Z_3.
\]
Take $\varepsilon=\chi_1$, with $\operatorname{ord}(\chi_1)=3$ and $K=\ker\chi_1=U$.
The three nontrivial irreducible representations of $U\simeq\mathbb Z_2\times\mathbb Z_2$ form a single orbit of length $3$ under the conjugation action of $A_4/U$. Hence, for any of the nontrivial irreducible representations $\tau\in\operatorname{Irr}(U)$, we have $I_{A_4}(\tau)=U=K$. The existence theorem therefore gives
\[
\rho=\operatorname{Ind}_{U}^{A_4}\tau, \qquad \dim V_\rho=3, \qquad \rho\otimes\chi_1\simeq\rho.
\]
This irreducible representation $\rho$ is precisely the representation $\chi_3$ discussed in the appendix, which is consistent with $\chi_1\otimes\chi_3\simeq\chi_3$. Consequently, we obtain $X_L=W_{\xi''}^{\chi_1}$ and $X_L^3=Id$, which admits a logical qutrit.

For $A_3\simeq\mathbb Z_3$, every irreducible representation is one-dimensional. If $\varepsilon\neq1$ and $\rho\otimes\varepsilon\simeq\rho$, the cancellation of one-dimensional characters would imply $\varepsilon=1$, which is a contradiction. Thus, $A_3$ admits no nontrivial $\varepsilon$-stable representation.

For $n\geq5$, the group $A_n$ is perfect:
\[
[A_n,A_n]=A_n.
\]
Therefore,
\[
A_n/[A_n,A_n]=1, \qquad \operatorname{Irr}_1(A_n)=\{1\}.
\]
There is no nontrivial one-dimensional representation $\varepsilon$, and the present construction cannot produce a nontrivial logical qudit.
\end{proof}
Furthermore, we provide a semidirect-product realization for arbitrary logical qudits.

\begin{theorem}[Semidirect-product realization of arbitrary logical qudits]
For every integer $d\geq2$, there exists a finite non-Abelian group $G_d$ with
a one-dimensional irreducible representation $\varepsilon$ of order $d$ and
an $\varepsilon$-stable irreducible representation $\rho_d$ of dimension
$d$. Consequently, the preceding construction realizes a logical
$d$-level qudit.
\end{theorem}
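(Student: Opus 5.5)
Take $G_d:=(\mathbb Z_2)^d\rtimes\mathbb Z_d$, the family announced in the abstract, where a generator $t$ of $\mathbb Z_d$ acts on $K:=(\mathbb Z_2)^d$ by cyclically permuting the coordinates, $t\,(a_1,\dots,a_d)\,t^{-1}=(a_d,a_1,\dots,a_{d-1})$. The plan is to apply the existence theorem for $\varepsilon$-stable representations: produce $\varepsilon$ of order $d$ with $\ker\varepsilon=K$, then exhibit a one-dimensional $\tau\in\operatorname{Irr}(K)$ with trivial inertia $I_{G_d}(\tau)=K$. That theorem then gives $\rho_d=\operatorname{Ind}_K^{G_d}\tau$ irreducible and $\varepsilon$-stable, with $\dim V_{\rho_d}=d\cdot 1=d$. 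The logical-qudit statement follows directly from the theorem on logical qudits and quasiparticle-sector detection.

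\textbf{Step 1 (the character $\varepsilon$).} The projection $G_d\to\mathbb Z_d$, $(a,t^r)\mapsto r$, is a surjective homomorphism with kernel $K$. Composing it with $r\mapsto\omega^r$, where $\omega=e^{2\pi i/d}$, gives a one-dimensional $\varepsilon$ of order exactly $d$ with $\ker\varepsilon=K$. Non-Abelianness holds because $t$ does not commute with $(1,0,\dots,0)$, since $d\ge2$ and the cyclic shift moves that vector.

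\textbf{Step 2 (the character $\tau$, the main step).} Characters of $K$ are $\tau_b(a)=(-1)^{b\cdot a}$ for $b\in\mathbb F_2^d$. Conjugation by $t$ acts on them by cyclically shifting $b$. Take $b=(1,0,\dots,0)$. Its shifts $t^r b=e_{r+1}$ for $r=0,\dots,d-1$ are pairwise distinct, so $\tau_b,\tau_b^t,\dots,\tau_b^{t^{d-1}}$ are pairwise inequivalent. Because $K$ is Abelian, every element of $K$ acts trivially on $\operatorname{Irr}(K)$, so the stabiliser of $\tau_b$ in $G_d$ is exactly $K$, i.e.\ $I_{G_d}(\tau_b)=K$. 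The only care needed is the orientation convention $\tau^g(k)=\tau(g^{-1}kg)$: it turns a shift of $a$ into the opposite shift of $b$, which does not affect distinctness.

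\textbf{Step 3 (conclusion).} Invoking the existence theorem, $\rho_d=\operatorname{Ind}_K^{G_d}\tau_b$ is irreducible of dimension $d$ and satisfies $\rho_d\otimes\varepsilon\simeq\rho_d$. As an independent check, the lemma characterising $\varepsilon$-stability via traces applies: the induced character vanishes off the normal subgroup $K$. The logical-qudit theorem then gives the $d$-dimensional $\mathcal H_L$ with $X_L=W_{\xi''}^{\varepsilon}$ and $X_L^d=Id$.

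I expect no real obstacle; the only delicate point is verifying that the inertia group is exactly $K$, which comes down to the orbit of $e_1$ under cyclic shift having length $d$.
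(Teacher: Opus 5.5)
Your proposal is correct and follows essentially the same route as the paper. You use the same group $(\mathbb Z_2)^d\rtimes\mathbb Z_d$, the same character $\varepsilon$ with kernel $K$, and the same $\tau=(-1)^{x_0}$ whose $t$-orbit has length $d$, so that $I_{G_d}(\tau)=K$; you then apply the existence theorem and the logical-qudit theorem, as the paper does. Your explicit verification that $G_d$ is non-Abelian and your trace-vanishing cross-check via the induced character are small additions that the paper leaves implicit.
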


\begin{proof}
Let $K_d:=(\mathbb Z_2)^d$, and let $\mathbb Z_d=\langle t\rangle$
act on $K_d$ by cyclic permutation of the coordinates:
\[
t\triangleright(x_0,x_1,\ldots,x_{d-1})
:=t^{-1}(x_0,x_1,\ldots,x_{d-1})t
:=(x_1,x_2,\ldots,x_{d-1},x_0).
\]
Define
$
G_d:=K_d\rtimes\mathbb Z_d.
$
Let
$
\omega=e^{2\pi i/d}
$
and define
$
\varepsilon(x,t^r):=\omega^r.
$
Then, $\operatorname{ord}(\varepsilon)=d$, $\ker\varepsilon=K_d$, and $G_d/K_d\simeq\mathbb Z_d$.
Since $K_d$ is Abelian, all its irreducible representations are
one-dimensional. Define
$
\tau(x_0,x_1,\ldots,x_{d-1}):=(-1)^{x_0}.
$
For $r=0,\ldots,d-1$, conjugation by $t^r$ cyclically
permutes the coordinates of $K_d$, and hence
$
\tau^{t^r}(x_0,\ldots,x_{d-1})
=
(-1)^{x_r},
$
up to the choice of orientation of the cyclic action.
Thus, the representations
$
\tau,\tau^t,\ldots,\tau^{t^{d-1}}
$
are pairwise inequivalent.

Since $K_d$ is Abelian,
$
\tau^k=\tau
$
for every $k\in K_d$.
Every element of $G_d$ can be written uniquely as $kt^r$ with
$k\in K_d$ and $0\leq r<d$. Therefore,
$
\tau^{kt^r}\simeq\tau
$
holds if and only if $r=0$. Consequently,
we have $I_{G_d}(\tau)=K_d.$
By the existence theorem,
$
\rho_d:=\operatorname{Ind}_{K_d}^{G_d}\tau
$
is irreducible and $\varepsilon$-stable, and
$
\dim V_{\rho_d}
=[G_d:K_d]\dim V_\tau=d.
$

The logical states and logical $X$ gate are
$
|0_L\rangle := W_{\xi'}^{\rho_d} \circ W_{\xi}^{\rho_d}|\mathrm{vac}\rangle,
$
$
|j_L\rangle=W_{\xi''}^{\varepsilon^j}|0_L\rangle
$
for $j=0,1,\ldots,d-1$, and
$X_L=W_{\xi''}^{\varepsilon}.$
By Lemma~2.4,
$
X_L|j_L\rangle=|((j+1)\bmod d)_L\rangle,
$
and
$
X_L^d=Id.
$
Thus, the family
$
(\mathbb Z_2)^d\rtimes\mathbb Z_d
$
realizes logical qudits of arbitrary dimension $d$ in the present framework.
\end{proof}

\subsection{Universal quantum computation for \texorpdfstring{$D(A_4)$}{D(A4)}}
In this section, we present a scheme for universal quantum computation using ZX-calculus. The relevant material required for the ZX-calculus can be found in \cite{Ran14}.

It was shown in \cite{Luo11} that there exist transport operators $M_{\xi}^{\rho}$ that move $\rho$ quasiparticles deterministically along the lattice. We do not provide an explicit construction of $M_{\xi}^{\rho}$ here; however, assuming that such operators exist, they can be expressed as linear combinations of chargeon ribbon operators. In particular, a quasiparticle--antiquasiparticle pair connected by the ribbon $\xi$ can be fused back to the vacuum according to
$
M^\rho_{-\xi}W^\rho_\xi|\mathrm{vac}\rangle
=
W^\rho_{(-\xi)\circ\xi}|\mathrm{vac}\rangle
=
|\mathrm{vac}\rangle.
$
Therefore,
\begin{align*}
M^\rho_{-\xi}M^\rho_{-\xi'}|j_L\rangle
&=
M^\rho_{-\xi}M^\rho_{-\xi'}
W^{\varepsilon^j}_{\xi''}
W^\rho_{\xi'}
W^\rho_\xi
|\mathrm{vac}\rangle
\\
&=
W^{\varepsilon^j}_{\xi''}
M^\rho_{-\xi}M^\rho_{-\xi'}
W^\rho_{\xi'}
W^\rho_\xi
|\mathrm{vac}\rangle
\\
&=
W^{\varepsilon^j}_{\xi''}|\mathrm{vac}\rangle.
\end{align*}
Thus, after fusing the quasiparticle--antiquasiparticle pairs back to the vacuum, the logical information is carried by the remaining $\varepsilon^j$ excitation. Measuring the projector $P_{e,\varepsilon^j}\triangleright_{s_0}$ then allows the corresponding logical state to be distinguished.

Given two logical qutrits labeled $a$ and $b$, we have
\[
\begin{tikzpicture}[scale=0.8]
    \draw (0,0) rectangle (6,6);
    \draw (0,2) -- (6,2);
        \draw (0,4) -- (6,4);
    \draw (2,0) -- (2,6);
        \draw (4,0) -- (4,6);
    \fill (0,0) circle (2pt);
    \fill (2,0) circle (2pt);
    \fill (4,0) circle (2pt);
    \fill (6,0) circle (2pt);
    \fill (0,2) circle (2pt);
    \fill (2,2) circle (2pt);
    \fill (4,2) circle (2pt);
    \fill (6,2) circle (2pt);
        \fill (0,4) circle (2pt);
                \fill (0,6) circle (2pt);
    \fill (2,4) circle (2pt);
    \fill (4,4) circle (2pt);
    \fill (6,4) circle (2pt);
        \fill (0,4) circle (2pt);
    \fill (2,4) circle (2pt);
    \fill (4,4) circle (2pt);
    \fill (6,4) circle (2pt);
    \draw [->,ultra thick](0,0) -- (2,0);
    \draw [->,ultra thick](2,0) -- (4,0);
    \draw [->,ultra thick](4,0) -- (6,0);
    \draw [->,ultra thick](0,2) -- (2,2);
    \draw [->,ultra thick](2,2) -- (4,2);
    \draw [->,ultra thick](4,2) -- (6,2);
    \draw [->,ultra thick](0,4) -- (2,4);
    \draw [->,ultra thick](2,4) -- (4,4);
    \draw [->,ultra thick](4,4) -- (6,4);
    \draw [->,ultra thick](0,6) -- (2,6);
    \draw [->,ultra thick](2,6) -- (4,6);
    \draw [->,ultra thick](4,6) -- (6,6);
    \draw [->,ultra thick](0,0) -- (0,2);
    \draw [->,ultra thick](0,2) -- (0,4);
    \draw [->,ultra thick](0,4) -- (0,6);
    \draw [->,ultra thick](2,0) -- (2,2);
    \draw [->,ultra thick](2,2) -- (2,4);
    \draw [->,ultra thick](2,4) -- (2,6);
    \draw [->,ultra thick](4,0) -- (4,2);
    \draw [->,ultra thick](4,2) -- (4,4);
    \draw [->,ultra thick](4,4) -- (4,6);
    \draw [->,ultra thick](6,0) -- (6,2);
    \draw [->,ultra thick](6,2) -- (6,4);
    \draw [->,ultra thick](6,4) -- (6,6);
    \draw [->,ultra thick](6,0) -- (8,0);
    \draw [->,ultra thick](6,2) -- (8,2);
    \draw [->,ultra thick](6,4) -- (8,4);
    \draw [->,ultra thick](6,6) -- (8,6);
    \draw[densely dashed,red] (1,1) -- (2,2);
    \draw[densely dashed,red] (3,1) -- (2,2);
    \draw[densely dashed,red] (3,1) -- (4,2);
    \draw[densely dashed,red] (5,1) -- (4,2);
     \node at (1, 1.5) {$s_0$};
     \node at (5, 1.5) {$s_1$};
     \node at (2.5, 2.5) {$\chi_3$};
     \node at (4.5, 2.5) {$\chi_3^*$};
    \draw[densely dashed,blue] (1,5) -- (2,4);
    \draw[densely dashed,blue] (3,5) -- (2,4);
    \draw[densely dashed,blue] (3,5) -- (4,4);
    \draw[densely dashed,blue] (5,5) -- (4,4);
     \node at (1, 4.5) {$s_2$};
     \node at (5, 4.5) {$s_3$};
     \node at (2.5, 3.5) {$\chi_3$};
     \node at (4.5, 3.5) {$\chi_3^*$};
    \draw[densely dashed,green] (1,3) -- (2,2);
    \draw[densely dashed,green] (1,3) -- (2,4);
        \draw[densely dashed,green] (1,1) -- (2,2);
            \draw[densely dashed,green] (1,5) -- (2,4);
    \node at (3,6.5) {$a$};
    \node at (8.5,3) [ultra thick]{$\dots$};

\end{tikzpicture}
\begin{tikzpicture}[scale=0.8][xshift=12]
    \draw (0,0) rectangle (6,6);
    \draw (0,2) -- (6,2);
        \draw (0,4) -- (6,4);
    \draw (2,0) -- (2,6);
        \draw (4,0) -- (4,6);
    \fill (0,0) circle (2pt);
    \fill (2,0) circle (2pt);
    \fill (4,0) circle (2pt);
    \fill (6,0) circle (2pt);
    \fill (0,2) circle (2pt);
    \fill (2,2) circle (2pt);
    \fill (4,2) circle (2pt);
    \fill (6,2) circle (2pt);
        \fill (0,4) circle (2pt);
                \fill (0,6) circle (2pt);
    \fill (2,4) circle (2pt);
    \fill (4,4) circle (2pt);
    \fill (6,4) circle (2pt);
        \fill (0,4) circle (2pt);
    \fill (2,4) circle (2pt);
    \fill (4,4) circle (2pt);
    \fill (6,4) circle (2pt);
    \draw [->,ultra thick](0,0) -- (2,0);
    \draw [->,ultra thick](2,0) -- (4,0);
    \draw [->,ultra thick](4,0) -- (6,0);
    \draw [->,ultra thick](0,2) -- (2,2);
    \draw [->,ultra thick](2,2) -- (4,2);
    \draw [->,ultra thick](4,2) -- (6,2);
    \draw [->,ultra thick](0,4) -- (2,4);
    \draw [->,ultra thick](2,4) -- (4,4);
    \draw [->,ultra thick](4,4) -- (6,4);
    \draw [->,ultra thick](0,6) -- (2,6);
    \draw [->,ultra thick](2,6) -- (4,6);
    \draw [->,ultra thick](4,6) -- (6,6);
    \draw [->,ultra thick](0,0) -- (0,2);
    \draw [->,ultra thick](0,2) -- (0,4);
    \draw [->,ultra thick](0,4) -- (0,6);
    \draw [->,ultra thick](2,0) -- (2,2);
    \draw [->,ultra thick](2,2) -- (2,4);
    \draw [->,ultra thick](2,4) -- (2,6);
    \draw [->,ultra thick](4,0) -- (4,2);
    \draw [->,ultra thick](4,2) -- (4,4);
    \draw [->,ultra thick](4,4) -- (4,6);
    \draw [->,ultra thick](6,0) -- (6,2);
    \draw [->,ultra thick](6,2) -- (6,4);
    \draw [->,ultra thick](6,4) -- (6,6);
    \draw [->,ultra thick](-2,0) -- (0,0);
    \draw [->,ultra thick](-2,2) -- (0,2);
    \draw [->,ultra thick](-2,4) -- (0,4);
    \draw [->,ultra thick](-2,6) -- (0,6);
    \draw[densely dashed,red] (1,1) -- (2,2);
    \draw[densely dashed,red] (3,1) -- (2,2);
    \draw[densely dashed,red] (3,1) -- (4,2);
    \draw[densely dashed,red] (5,1) -- (4,2);
     \node at (1, 1.5) {$s_0'$};
     \node at (5, 1.5) {$s_1'$};
     \node at (2.5, 2.5) {$\chi_3$};
     \node at (4.5, 2.5) {$\chi_3^*$};
    \draw[densely dashed,blue] (1,5) -- (2,4);
    \draw[densely dashed,blue] (3,5) -- (2,4);
    \draw[densely dashed,blue] (3,5) -- (4,4);
    \draw[densely dashed,blue] (5,5) -- (4,4);
     \node at (1, 4.5) {$s_2'$};
     \node at (5, 4.5) {$s_3'$};
     \node at (2.5, 3.5) {$\chi_3$};
     \node at (4.5, 3.5) {$\chi_3^*$};
    \draw[densely dashed,green] (1,3) -- (2,2);
    \draw[densely dashed,green] (1,3) -- (2,4);
        \draw[densely dashed,green] (1,1) -- (2,2);
            \draw[densely dashed,green] (1,5) -- (2,4);
    \node at (3,6.5) {$b$};

\end{tikzpicture}
\]
\begin{equation*}
\begin{aligned}
K_{a,b}=\frac{1}{3}\bigl(&I_a\otimes I_b+X_a\otimes I_b+X_a^2\otimes I_b\\
&+I_a\otimes X_b^2+\omega X_a\otimes X_b^2+\omega^2X_a^2\otimes X_b^2\\
&+I_a\otimes X_b+\omega^2X_a\otimes X_b+\omega X_a^2\otimes X_b\bigr).
\end{aligned}
\end{equation*}
where $X_a$ and $X_b$ are the logical operators on the respective qutrits. The operator $K_{a,b}$ has the following expressions as a quantum circuit and a ZX-diagram:

\[
\tikzset{every picture/.style={line width=0.75pt}} 
\begin{tikzpicture}[x=0.75pt,y=0.75pt,yscale=-1,xscale=1]
\draw    (219.96,149.61) -- (260.12,149.81) ;
\draw   (260.12,139.72) -- (295.12,139.72) -- (295.12,159.72) -- (260.12,159.72) -- cycle ;
\draw    (295.12,149.72) -- (335.28,149.91) ;
\draw   (335.12,140.05) -- (370.12,140.05) -- (370.12,160.05) -- (335.12,160.05) -- cycle ;
\draw    (370.12,150.05) -- (410.28,150.25) ;
\draw    (220,209.6) -- (260.16,209.8) ;
\draw   (260.16,199.7) -- (295.16,199.7) -- (295.16,219.7) -- (260.16,219.7) -- cycle ;
\draw    (295.16,209.7) -- (335.33,209.9) ;
\draw   (335.16,200.03) -- (370.16,200.03) -- (370.16,220.03) -- (335.16,220.03) -- cycle ;
\draw    (370.16,210.03) -- (410.33,210.23) ;
\draw    (315.2,149.81) -- (315.24,209.8) ;
\draw [shift={(315.24,209.8)}, rotate = 89.96] [color={rgb, 255:red, 0; green, 0; blue, 0 }  ][fill={rgb, 255:red, 0; green, 0; blue, 0 }  ][line width=0.75]      (0, 0) circle [x radius= 3.35, y radius= 3.35]   ;
\draw [shift={(315.2,149.81)}, rotate = 89.96] [color={rgb, 255:red, 0; green, 0; blue, 0 }  ][fill={rgb, 255:red, 0; green, 0; blue, 0 }  ][line width=0.75]      (0, 0) circle [x radius= 3.35, y radius= 3.35]   ;
\draw (270.63,142.25) node [anchor=north west][inner sep=0.75pt]    {$H$};
\draw (344.63,141.58) node [anchor=north west][inner sep=0.75pt]    {$H^{\dagger }$};
\draw (270.67,202.23) node [anchor=north west][inner sep=0.75pt]    {$H$};
\draw (203.53,138.81) node [anchor=north west][inner sep=0.75pt]    {$a$};
\draw (204.2,201.59) node [anchor=north west][inner sep=0.75pt]    {$b$};
\draw (344.63,200.58) node [anchor=north west][inner sep=0.75pt]    {$H^{\dagger }$};
\end{tikzpicture}
\hspace{3cm}
\begin{tikzpicture}[x=0.75pt,y=0.75pt,yscale=-1,xscale=1]
\draw    (210.28,216.93) -- (210.28,250.61) ;
\draw [color={rgb, 255:red, 0; green, 0; blue, 0 }  ,draw opacity=1 ][fill={rgb, 255:red, 0; green, 0; blue, 0 }  ,fill opacity=1 ]   (209.91,169.15) -- (210.07,204.18) ;
\draw [shift={(210.1,209.88)}, rotate = 89.74] [color={rgb, 255:red, 0; green, 0; blue, 0 }  ,draw opacity=1 ][line width=0.75]      (0, 0) circle [x radius= 6.7, y radius= 6.7]   ;
\draw    (280.28,217.5) -- (280.61,250.94) ;
\draw [color={rgb, 255:red, 0; green, 0; blue, 0 }  ,draw opacity=1 ][fill={rgb, 255:red, 0; green, 0; blue, 0 }  ,fill opacity=1 ]   (280.25,169.49) -- (280.4,204.51) ;
\draw [shift={(280.43,210.21)}, rotate = 89.74] [color={rgb, 255:red, 0; green, 0; blue, 0 }  ,draw opacity=1 ][line width=0.75]      (0, 0) circle [x radius= 6.7, y radius= 6.7]   ;
\draw    (216.56,210.08) -- (236.67,210.34) ;
\draw    (252.83,210.34) -- (273.42,210.65) ;
\draw  [fill={rgb, 255:red, 248; green, 231; blue, 28 }  ,fill opacity=1 ] (236.67,202.26) -- (252.83,202.26) -- (252.83,218.42) -- (236.67,218.42) -- cycle ;
\draw (205.53,255.81) node [anchor=north west][inner sep=0.75pt]    {$a$};
\draw (275.2,255.59) node [anchor=north west][inner sep=0.75pt]    {$b$};
\end{tikzpicture}
\]
In terms of ribbon operators, this is expressed as:
\[
\begin{aligned}
K_{a,b} = \frac{1}{3} \big( &I_a\otimes I_b + W_{\xi''_a}^{\chi_1}\otimes I_b + W_{\xi''_a}^{\chi_2}\otimes I_b + I_a\otimes W_{\xi''_b}^{\chi_2} + \omega W_{\xi''_a}^{\chi_1}\otimes W_{\xi''_b}^{\chi_2} \\
&+\omega^2 W_{\xi''_a}^{\chi_2}\otimes W_{\xi''_b}^{\chi_2} + I_a\otimes W_{\xi''_b}^{\chi_1} + \omega^2 W_{\xi''_a}^{\chi_1}\otimes W_{\xi''_b}^{\chi_1} + \omega W_{\xi''_a}^{\chi_2}\otimes W_{\xi''_b}^{\chi_1} \big).
\end{aligned}
\]

One can verify that $K_{a,b}$ functions as a controlled gate.

Unlike the qubit case, the logical operator $X_L$ here is not Hermitian. Therefore, to define continuous rotations, we must construct a Hermitian generator:
\[
H_X = i \frac{2\sqrt{3}\pi}{9} (X_L^2 - X_L).
\]

It is straightforward to verify that $H_X$ is indeed a Hermitian operator, which generates the continuous rotation:
\[
\begin{aligned}
R_X(\theta) &= \exp\left(-i \theta H_X\right) \\
&= \frac{1}{3} \left( 1 + e^{-i\theta \frac{2\pi}{3}} + e^{i\theta \frac{2\pi}{3}} \right) I
+ \frac{1}{3} \left( 1 + \omega^2 e^{-i\theta\frac{2\pi}{3}} + \omega e^{i\theta \frac{2\pi}{3}} \right) X_L \\
&\quad + \frac{1}{3} \left( 1 + \omega e^{-i\theta \frac{2\pi}{3}} + \omega^2 e^{i\theta \frac{2\pi}{3}} \right) X_L^2.
\end{aligned}
\]

In terms of ribbon operators, this rotation is given by:
\[
\begin{aligned}
R_X(\theta) &= \frac{1}{3} \left( 1 + e^{-i\theta \frac{2\pi}{3}} + e^{i\theta \frac{2\pi}{3}} \right) I
+ \frac{1}{3} \left( 1 + \omega^2 e^{-i\theta\frac{2\pi}{3}} + \omega e^{i\theta \frac{2\pi}{3}} \right) W_{\xi''}^{\chi_1} \\
&\quad + \frac{1}{3} \left( 1 + \omega e^{-i\theta \frac{2\pi}{3}} + \omega^2 e^{i\theta \frac{2\pi}{3}} \right) W_{\xi''}^{\chi_2}.
\end{aligned}
\]

Furthermore, we can realize the logical Hadamard gate. We first initialize the ancilla in the state $|0_b\rangle$.

If the measurement yields $\ket{0_a}$
\[
\tikzset{every picture/.style={line width=0.75pt}} 
\begin{tikzpicture}[x=0.75pt,y=0.75pt,yscale=-1,xscale=1]
\draw    (191.28,155.19) -- (191.28,188.87) ;
\draw [color={rgb, 255:red, 0; green, 0; blue, 0 }  ,draw opacity=1 ][fill={rgb, 255:red, 0; green, 0; blue, 0 }  ,fill opacity=1 ]   (190.94,113.11) -- (191.07,142.44) ;
\draw [shift={(191.1,148.14)}, rotate = 89.74] [color={rgb, 255:red, 0; green, 0; blue, 0 }  ,draw opacity=1 ][line width=0.75]      (0, 0) circle [x radius= 6.7, y radius= 6.7]   ;
\draw [shift={(190.91,107.41)}, rotate = 89.74] [color={rgb, 255:red, 0; green, 0; blue, 0 }  ,draw opacity=1 ][line width=0.75]      (0, 0) circle [x radius= 6.7, y radius= 6.7]   ;
\draw    (261.28,155.76) -- (261.55,183.5) ;
\draw [shift={(261.61,189.2)}, rotate = 89.43] [color={rgb, 255:red, 0; green, 0; blue, 0 }  ][line width=0.75]      (0, 0) circle [x radius= 6.7, y radius= 6.7]   ;
\draw [color={rgb, 255:red, 0; green, 0; blue, 0 }  ,draw opacity=1 ][fill={rgb, 255:red, 0; green, 0; blue, 0 }  ,fill opacity=1 ]   (261.25,107.75) -- (261.4,142.77) ;
\draw [shift={(261.43,148.47)}, rotate = 89.74] [color={rgb, 255:red, 0; green, 0; blue, 0 }  ,draw opacity=1 ][line width=0.75]      (0, 0) circle [x radius= 6.7, y radius= 6.7]   ;
\draw    (197.56,148.34) -- (217.67,148.6) ;
\draw    (233.83,148.6) -- (254.42,148.91) ;
\draw  [fill={rgb, 255:red, 248; green, 231; blue, 28 }  ,fill opacity=1 ] (217.67,140.52) -- (233.83,140.52) -- (233.83,156.68) -- (217.67,156.68) -- cycle ;
\draw  [fill={rgb, 255:red, 248; green, 231; blue, 28 }  ,fill opacity=1 ] (321.67,140.52) -- (337.83,140.52) -- (337.83,156.68) -- (321.67,156.68) -- cycle ;
\draw    (330,107) -- (329.75,140.52) ;
\draw    (329.75,156.68) -- (329.5,190.2) ;
\draw (286,139.4) node [anchor=north west][inner sep=0.75pt]    {$=$};
\draw (187.53,198.4) node [anchor=north west][inner sep=0.75pt]    {$a$};
\draw (257.2,198.18) node [anchor=north west][inner sep=0.75pt]    {$b$};
\draw (324.53,197.4) node [anchor=north west][inner sep=0.75pt]    {$a$};
\draw (326.2,89.18) node [anchor=north west][inner sep=0.75pt]    {$b$};
\end{tikzpicture}
\]
Then, for an input state $|\psi\rangle$ on $a$, we obtain the output $H|\psi\rangle$ on $b$.

If the measurement yields $|1_a\rangle$, an input state $|\psi\rangle$ on $a$ produces the output $H R_X(1)|\psi\rangle$ on $b$.

\[
\tikzset{every picture/.style={line width=0.75pt}} 
\begin{tikzpicture}[x=0.75pt,y=0.75pt,yscale=-1,xscale=1]
\draw    (87.28,107.19) -- (87.28,140.87) ;
\draw [color={rgb, 255:red, 0; green, 0; blue, 0 }  ,draw opacity=1 ][fill={rgb, 255:red, 0; green, 0; blue, 0 }  ,fill opacity=1 ]   (86.94,65.11) -- (87.07,94.44) ;
\draw [shift={(87.1,100.14)}, rotate = 89.74] [color={rgb, 255:red, 0; green, 0; blue, 0 }  ,draw opacity=1 ][line width=0.75]      (0, 0) circle [x radius= 6.7, y radius= 6.7]   ;
\draw [shift={(86.91,59.41)}, rotate = 89.74] [color={rgb, 255:red, 0; green, 0; blue, 0 }  ,draw opacity=1 ][line width=0.75]      (0, 0) circle [x radius= 6.7, y radius= 6.7]   ;
\draw    (157.28,107.76) -- (157.55,135.5) ;
\draw [shift={(157.61,141.2)}, rotate = 89.43] [color={rgb, 255:red, 0; green, 0; blue, 0 }  ][line width=0.75]      (0, 0) circle [x radius= 6.7, y radius= 6.7]   ;
\draw [color={rgb, 255:red, 0; green, 0; blue, 0 }  ,draw opacity=1 ][fill={rgb, 255:red, 0; green, 0; blue, 0 }  ,fill opacity=1 ]   (157.25,59.75) -- (157.4,94.77) ;
\draw [shift={(157.43,100.47)}, rotate = 89.74] [color={rgb, 255:red, 0; green, 0; blue, 0 }  ,draw opacity=1 ][line width=0.75]      (0, 0) circle [x radius= 6.7, y radius= 6.7]   ;
\draw    (93.56,100.34) -- (113.67,100.6) ;
\draw    (129.83,100.6) -- (150.42,100.91) ;
\draw  [fill={rgb, 255:red, 248; green, 231; blue, 28 }  ,fill opacity=1 ] (113.67,92.52) -- (129.83,92.52) -- (129.83,108.68) -- (113.67,108.68) -- cycle ;
\draw  [fill={rgb, 255:red, 248; green, 231; blue, 28 }  ,fill opacity=1 ] (218.39,70.16) -- (234.56,70.16) -- (234.56,86.32) -- (218.39,86.32) -- cycle ;
\draw    (226.36,56.4) -- (226.47,70.16) ;
\draw    (226.47,86.32) -- (226.36,105.31) ;
\draw    (80.08,59.68) -- (93.7,59.62) ;
\draw [color={rgb, 255:red, 0; green, 0; blue, 0 }  ,draw opacity=1 ][fill={rgb, 255:red, 0; green, 0; blue, 0 }  ,fill opacity=1 ]   (226.64,117.5) -- (226.73,143.86) ;
\draw [shift={(226.62,111.8)}, rotate = 89.81] [color={rgb, 255:red, 0; green, 0; blue, 0 }  ,draw opacity=1 ][line width=0.75]      (0, 0) circle [x radius= 6.7, y radius= 6.7]   ;
\draw    (219.79,112.07) -- (233.41,112) ;
\draw (182,91.4) node [anchor=north west][inner sep=0.75pt]    {$=$};
\draw (83.53,150.4) node [anchor=north west][inner sep=0.75pt]    {$a$};
\draw (153.2,150.18) node [anchor=north west][inner sep=0.75pt]    {$b$};
\draw (220.53,149.4) node [anchor=north west][inner sep=0.75pt]    {$a$};
\draw (222.2,41.18) node [anchor=north west][inner sep=0.75pt]    {$b$};
\draw (83,53) node [anchor=north west][inner sep=0.75pt]  [font=\fontsize{0.35em}{0.42em}\selectfont]  {$1$};
\draw (83,59) node [anchor=north west][inner sep=0.75pt]  [font=\fontsize{0.35em}{0.42em}\selectfont]  {$2$};
\draw (222.78,105.32) node [anchor=north west][inner sep=0.75pt]  [font=\fontsize{0.35em}{0.42em}\selectfont]  {$1$};
\draw (222.72,112.21) node [anchor=north west][inner sep=0.75pt]  [font=\fontsize{0.35em}{0.42em}\selectfont]  {$2$};
\end{tikzpicture}
\]

If the measurement yields $|2_a\rangle$, an input state $|\psi\rangle$ on $a$ produces the output $H R_X(2)|\psi\rangle$ on $b$.

\[
\tikzset{every picture/.style={line width=0.75pt}} 
\begin{tikzpicture}[x=0.75pt,y=0.75pt,yscale=-1,xscale=1]
\draw    (107.28,127.19) -- (107.28,160.87) ;
\draw [color={rgb, 255:red, 0; green, 0; blue, 0 }  ,draw opacity=1 ][fill={rgb, 255:red, 0; green, 0; blue, 0 }  ,fill opacity=1 ]   (106.94,85.11) -- (107.07,114.44) ;
\draw [shift={(107.1,120.14)}, rotate = 89.74] [color={rgb, 255:red, 0; green, 0; blue, 0 }  ,draw opacity=1 ][line width=0.75]      (0, 0) circle [x radius= 6.7, y radius= 6.7]   ;
\draw [shift={(106.91,79.41)}, rotate = 89.74] [color={rgb, 255:red, 0; green, 0; blue, 0 }  ,draw opacity=1 ][line width=0.75]      (0, 0) circle [x radius= 6.7, y radius= 6.7]   ;
\draw    (177.28,127.76) -- (177.55,155.5) ;
\draw [shift={(177.61,161.2)}, rotate = 89.43] [color={rgb, 255:red, 0; green, 0; blue, 0 }  ][line width=0.75]      (0, 0) circle [x radius= 6.7, y radius= 6.7]   ;
\draw [color={rgb, 255:red, 0; green, 0; blue, 0 }  ,draw opacity=1 ][fill={rgb, 255:red, 0; green, 0; blue, 0 }  ,fill opacity=1 ]   (177.25,79.75) -- (177.4,114.77) ;
\draw [shift={(177.43,120.47)}, rotate = 89.74] [color={rgb, 255:red, 0; green, 0; blue, 0 }  ,draw opacity=1 ][line width=0.75]      (0, 0) circle [x radius= 6.7, y radius= 6.7]   ;
\draw    (113.56,120.34) -- (133.67,120.6) ;
\draw    (149.83,120.6) -- (170.42,120.91) ;
\draw  [fill={rgb, 255:red, 248; green, 231; blue, 28 }  ,fill opacity=1 ] (133.67,112.52) -- (149.83,112.52) -- (149.83,128.68) -- (133.67,128.68) -- cycle ;
\draw  [fill={rgb, 255:red, 248; green, 231; blue, 28 }  ,fill opacity=1 ] (238.39,90.16) -- (254.56,90.16) -- (254.56,106.32) -- (238.39,106.32) -- cycle ;
\draw    (246.36,76.4) -- (246.47,90.16) ;
\draw    (246.47,106.32) -- (246.36,125.31) ;
\draw    (100.08,79.68) -- (113.7,79.62) ;
\draw [color={rgb, 255:red, 0; green, 0; blue, 0 }  ,draw opacity=1 ][fill={rgb, 255:red, 0; green, 0; blue, 0 }  ,fill opacity=1 ]   (246.64,137.5) -- (246.73,163.86) ;
\draw [shift={(246.62,131.8)}, rotate = 89.81] [color={rgb, 255:red, 0; green, 0; blue, 0 }  ,draw opacity=1 ][line width=0.75]      (0, 0) circle [x radius= 6.7, y radius= 6.7]   ;
\draw    (239.79,132.07) -- (253.41,132) ;
\draw (202,111.4) node [anchor=north west][inner sep=0.75pt]    {$=$};
\draw (103.53,170.4) node [anchor=north west][inner sep=0.75pt]    {$a$};
\draw (173.2,170.18) node [anchor=north west][inner sep=0.75pt]    {$b$};
\draw (240.53,169.4) node [anchor=north west][inner sep=0.75pt]    {$a$};
\draw (242.2,61.18) node [anchor=north west][inner sep=0.75pt]    {$b$};
\draw (103,73) node [anchor=north west][inner sep=0.75pt]  [font=\fontsize{0.35em}{0.42em}\selectfont]  {$2$};
\draw (103,79) node [anchor=north west][inner sep=0.75pt]  [font=\fontsize{0.35em}{0.42em}\selectfont]  {$1$};
\draw (242.78,125.32) node [anchor=north west][inner sep=0.75pt]  [font=\fontsize{0.35em}{0.42em}\selectfont]  {$2$};
\draw (242.72,132.21) node [anchor=north west][inner sep=0.75pt]  [font=\fontsize{0.35em}{0.42em}\selectfont]  {$1$};
\end{tikzpicture}
\]
We can repeat the above measurement protocol until the desired Hadamard gate is implemented, with any necessary corrections achieved via the rotation gates.

Equipped with the logical Hadamard gate and $X$ rotations, we can act transitively on the qutrit pure-state space $\mathbb{CP}^2$. Furthermore, the addition of the entangling gate $K_{a,b}$ allows for the implementation of arbitrary unitary operations. Relying on the operators provided by the $D(A_4)$ model, we have thus realized a scheme for universal quantum computation.

\begin{appendix}

\section{Representation data for \texorpdfstring{$A_4$}{A4} and \texorpdfstring{$D(A_4)$}{D(A4)}}\label{app:a4}
\subsection{Representations of the alternating group \texorpdfstring{$A_4$}{A4}}

\( A_4 \) has 12 elements, classified as follows:

Identity element: \( e \)

 Order-2 elements (Klein four-group):
  \[
  u_1 = (12)(34), \quad u_2 = (13)(24), \quad u_3 = (14)(23).
  \]

 Order-3 elements:
  \[
  \begin{aligned}
  & v_1 = (123), \quad v_2 = (132), \quad v_3 = (124), \quad v_4 = (142), \\
  & v_5 = (134), \quad v_6 = (143), \quad v_7 = (234), \quad v_8 = (243).
  \end{aligned}
  \]

The group multiplication table is given below (rows act on columns):
\[
\small
\begin{array}{|c|cccccccccccc|}
\hline
 & e & u_1 & u_2 & u_3 & v_1 & v_2 & v_3 & v_4 & v_5 & v_6 & v_7 & v_8 \\
\hline
e & e & u_1 & u_2 & u_3 & v_1 & v_2 & v_3 & v_4 & v_5 & v_6 & v_7 & v_8 \\
u_1 & u_1 & e & u_3 & u_2 & v_8 & v_6 & v_7 & v_5 & v_4 & v_2 & v_3 & v_1 \\
u_2 & u_2 & u_3 & e & u_1 & v_4 & v_7 & v_6 & v_1 & v_8 & v_3 & v_2 & v_5 \\
u_3 & u_3 & u_2 & u_1 & e & v_5 & v_3 & v_2 & v_8 & v_1 & v_7 & v_6 & v_4 \\
v_1 & v_1 & v_5 & v_8 & v_4 & v_2 & e & u_2 & v_6 & v_7 & u_3 & u_1 & v_3 \\
v_2 & v_2 & v_7 & v_3 & v_6 & e & v_1 & v_8 & u_3 & u_1 & v_4 & v_5 & u_2 \\
v_3 & v_3 & v_6 & v_2 & v_7 & u_3 & v_5 & v_4 & e & u_2 & v_8 & v_1 & u_1 \\
v_4 & v_4 & v_8 & v_5 & v_1 & v_7 & u_2 & e & v_3 & v_2 & u_1 & u_3 & v_6 \\
v_5 & v_5 & v_1 & v_4 & v_8 & v_3 & u_3 & u_1 & v_7 & v_6 & e & u_2 & v_2 \\
v_6 & v_6 & v_3 & v_7 & v_2 & u_1 & v_8 & v_1 & u_2 & e & v_5 & v_4 & u_3 \\
v_7 & v_7 & v_2 & v_6 & v_3 & u_2 & v_4 & v_5 & u_1 & u_3 & v_1 & v_8 & e \\
v_8 & v_8 & v_4 & v_1 & v_5 & v_6 & u_1 & u_3 & v_2 & v_3 & u_2 & e & v_7 \\
\hline
\end{array}
\]

We observe the conjugation relations:
\[
\begin{aligned}
&v_1u_1v_1^{-1} = u_3, \quad &v_1u_3v_1^{-1} = u_2, \quad &v_1u_2v_1^{-1} = u_1,\\
& v_1 v_3 v_1^{-1} = v_7, \quad
&v_1 v_7 v_1^{-1} = v_6 ,\quad
&v_1 v_6 v_1^{-1} = v_3 ,\\
& v_1 v_4 v_1^{-1} = v_8 ,\quad
&v_1 v_8 v_1^{-1} = v_5 ,\quad
&v_1 v_5 v_1^{-1} = v_4 .
\end{aligned}
\]
This reveals that \( A_4 \) has four conjugacy classes:
\[
\begin{aligned}
&\{1\}, \qquad \{u_1,u_2,u_3\},\\
&\{v_1,\,v_1u_1(=v_5),\,v_1u_2(=v_8),\,v_1u_3(=v_4)\},\\
&\{v_1^2(=v_2),\,v_1^2u_1(=v_7),\,v_1^2u_2(=v_3),\,v_1^2u_3(=v_6)\}.
\end{aligned}
\]
Thus, \( A_4 \) has four irreducible representations \( \chi_0, \chi_1, \chi_2, \chi_3 \).

The commutator subgroup of \( A_4 \) is computed as:
\[
\begin{aligned}
\bigl[A_4,A_4\bigr] &= \langle ghg^{-1}h^{-1} \mid g,h \in A_4\rangle\\
&= \{e, u_1, u_2, u_3\} := U,
\end{aligned}
\]
and \( A_4/[A_4,A_4] \cong \mathbb{Z}_3 \). Consequently, \( A_4 \) has three one-dimensional representations inherited from \( \mathbb{Z}_3 \), where the characters are trivial on \( U \):
\[
\begin{array}{|c|cccc|}
\hline
 & e & u_1 & v_1 & v_1^2 \\
\hline
\chi_0 & 1 & 1 & 1 & 1 \\
\chi_1 & 1 & 1 & \omega & \omega^2 \\
\chi_2 & 1 & 1 & \omega^2 & \omega \\
\hline
\end{array}
\]
where \( \omega = e^{2\pi i / 3} \).

The fourth irreducible representation \( \chi_3 \) is derived via orthogonality relations:
\[
\sum_{i=0}^{3} \chi_i(1) \overline{\chi_i(1)} = 12, \quad \sum_{i=0}^{3} \chi_i(u_1) \overline{\chi_i(u_1)} = 4, \quad \sum_{i=0}^{3} \chi_i(v_1) \overline{\chi_i(v_1)} = \sum_{i=0}^{3} \chi_i(v_1^2) \overline{\chi_i(v_1^2)} = 3,
\]
and
\[
\sum_{i=0}^{3} \chi_i(1) \overline{\chi_i(u_1)} = \cdots = 0.
\]

The complete character table of \( A_4 \) is:
\[
\begin{array}{|c|cccc|}
\hline
 & e & u_1 & v_1 & v_1^2 \\
\hline
\chi_0 & 1 & 1 & 1 & 1 \\
\chi_1 & 1 & 1 & \omega & \omega^2 \\
\chi_2 & 1 & 1 & \omega^2 & \omega \\
\chi_3 & 3 & -1 & 0 & 0 \\
\hline
\end{array}
\]

The alternating group \( A_4 \) has irreducible representations \( \{\chi_0, \chi_1, \chi_2, \chi_3\} \).

The tensor product decompositions of these representations are:
\[
\begin{aligned}
\chi_1 \otimes \chi_1 &= \chi_2, \quad \chi_2 \otimes \chi_2 = \chi_1, \\
\chi_1 \otimes \chi_2 &= \chi_0, \quad \chi_1 \otimes \chi_3 = \chi_2 \otimes \chi_3 = \chi_3, \\
\chi_3 \otimes \chi_3 &= \chi_0 \oplus \chi_1 \oplus \chi_2 \oplus 2\chi_3.
\end{aligned}
\]

\subsection{Representations of \texorpdfstring{$D(A_4)$}{D(A4)}}

As established in Section~\ref{sec:prereq}, the irreducible representations of \( D(G) \) are determined by pairs \( (\mathcal{C}, \pi) \). We now compute the irreducible representations of \( D(A_4) \).

(1) Conjugacy Class \( \mathcal{C} = \{e\} \)
. Choose \( r_{\mathcal{C}} = q_e = e \), and \( C_G = A_4 \).
 This yields 4 irreducible representations
  \[
  (\{e\}, \chi_0), \quad (\{e\}, \chi_1), \quad (\{e\}, \chi_2), \quad (\{e\}, \chi_3).
  \]

(2) Conjugacy Class \( \mathcal{C} = \{u_1, u_2, u_3\} \) .
 Choose \( r_{\mathcal{C}} = u_1 \), \( q_{u_1} = e \).
From the relations:
  \[
  q_{u_2} u_1 q_{u_2}^{-1} = u_2, \quad q_{u_3} u_1 q_{u_3}^{-1} = u_3,
  \]
  we compute \( q_{u_2} = v_1^2 \), \( q_{u_3} = v_1 \). Here, \( C_G = \{e, u_1, u_2, u_3\} \cong \mathbb{Z}_2 \times \mathbb{Z}_2 \).

 Irreducible representations of \( \mathbb{Z}_2 \times \mathbb{Z}_2 \) :
 \[
  \begin{array}{|c|cccc|}
  \hline
  & e & u_1 & u_2 & u_3 \\
  \hline
  \chi_{00} & 1 & 1 & 1 & 1 \\
  \chi_{10} & 1 & -1 & 1 & -1 \\
  \chi_{01} & 1 & 1 & -1 & -1 \\
  \chi_{11} & 1 & -1 & -1 & 1 \\
  \hline
  \end{array}
  \]
 This gives 4 irreducible representations
  \[
  (\{u_1, u_2, u_3\}, \chi_{00}), \quad (\{u_1, u_2, u_3\}, \chi_{10}), \quad (\{u_1, u_2, u_3\}, \chi_{01}), \quad (\{u_1, u_2, u_3\}, \chi_{11}).
  \]

(3) Conjugacy Class \( \mathcal{C} = \{v_1, v_1 u_1, v_1 u_2, v_1 u_3\} \)
. Choose \( r_{\mathcal{C}} = v_1 \), \( q_{v_1} = e \).
From the relations:
  \[
  q_{v_1 u_1} v_1 q_{v_1 u_1}^{-1} = v_1 u_1, \quad q_{v_1 u_2} v_1 q_{v_1 u_2}^{-1} = v_1 u_2, \quad q_{v_1 u_3} v_1 q_{v_1 u_3}^{-1} = v_1 u_3,
  \]
  we compute \( q_{v_1 u_1} = u_2 \), \( q_{v_1 u_2} = u_3 \), \( q_{v_1 u_3} = u_1 \). Here, \( C_G = \{e, v_1, v_2\} \cong \mathbb{Z}_3 \).

 Irreducible representations of \( \mathbb{Z}_3 \) ( \( w = e^{2\pi i /3} \)):
  \[
  \begin{array}{|c|ccc|}
  \hline
  & e & v_1 & v_2 \\
  \hline
  1 & 1 & 1 & 1 \\
  \pi_{\omega_1} & 1 & \omega & \omega^2 \\
  \pi_{\omega_2} & 1 & \omega^2 & \omega \\
  \hline
  \end{array}
  \]
 This gives 3 irreducible representations
  \[
  (\{v_1, v_1 u_1, v_1 u_2, v_1 u_3\}, 1), \quad (\{v_1, v_1 u_1, v_1 u_2, v_1 u_3\}, \pi_{w_1}), \quad (\{v_1, v_1 u_1, v_1 u_2, v_1 u_3\}, \pi_{w_2}).
  \]

(4) Conjugacy Class \( \mathcal{C} = \{v_1^2, v_1^2 u_1, v_1^2 u_2, v_1^2 u_3\} \)
.Choose \( r_{\mathcal{C}} = v_1^2 \), \( q_{v_1^2} = e \).
From the relations:
  \[
  q_{v_1^2 u_1} v_1^2 q_{v_1^2 u_1}^{-1} = v_1^2 u_1, \quad q_{v_1^2 u_2} v_1^2 q_{v_1^2 u_2}^{-1} = v_1^2 u_2, \quad q_{v_1^2 u_3} v_1^2 q_{v_1^2 u_3}^{-1} = v_1^2 u_3,
  \]
  we compute \( q_{v_1^2 u_1} = u_3 \), \( q_{v_1^2 u_2} = u_1 \), \( q_{v_1^2 u_3} = u_2 \). Here, \( C_G = \{e, v_1, v_2\} \cong \mathbb{Z}_3 \), with the same representations as in (3).

 This gives 3 irreducible representations
  \[
  (\{v_1^2, v_1^2 u_1, v_1^2 u_2, v_1^2 u_3\}, 1), \quad (\{v_1^2, v_1^2 u_1, v_1^2 u_2, v_1^2 u_3\}, \pi_{w_1}), \quad (\{v_1^2, v_1^2 u_1, v_1^2 u_2, v_1^2 u_3\}, \pi_{w_2}).
  \]

The 14 irreducible representations of \( D(A_4) \) are
\[
\begin{aligned}
&\Big\{ (\{e\}, \chi_0), \ (\{e\}, \chi_1), \ (\{e\}, \chi_2), \ (\{e\}, \chi_3), \\
&(\{u_1, u_2, u_3\}, \chi_{00}), \ (\{u_1, u_2, u_3\}, \chi_{10}), \ (\{u_1, u_2, u_3\}, \chi_{01}), \ (\{u_1, u_2, u_3\}, \chi_{11}), \\
&(\{v_1, v_1 u_1, v_1 u_2, v_1 u_3\}, 1), \ (\{v_1, v_1 u_1, v_1 u_2, v_1 u_3\}, \pi_{w_1}), \ (\{v_1, v_1 u_1, v_1 u_2, v_1 u_3\}, \pi_{w_2}), \\
&(\{v_1^2, v_1^2 u_1, v_1^2 u_2, v_1^2 u_3\}, 1), \ (\{v_1^2, v_1^2 u_1, v_1^2 u_2, v_1^2 u_3\}, \pi_{w_1}), \ (\{v_1^2, v_1^2 u_1, v_1^2 u_2, v_1^2 u_3\}, \pi_{w_2}) \Big\}.
\end{aligned}
\]

\end{appendix}

\end{document}